\documentclass[11pt]{article}

\usepackage{physics}
\usepackage{braket}
\usepackage{mathtools}
\usepackage[margin=1in]{geometry}
\usepackage{amsmath,amssymb,amsfonts,setspace,url,mathrsfs}
\usepackage{bm}
\usepackage{algpseudocode}
\usepackage{enumitem}
\usepackage{latexsym}
\usepackage{amsthm}
\usepackage{subfigure}
\usepackage{float}
\usepackage{stmaryrd}
\usepackage{fancybox}
\usepackage{bigstrut,array,multirow}
\usepackage{here}
\usepackage{color}
\usepackage{url}
\usepackage{hyperref}
\usepackage[capitalise]{cleveref}
\usepackage{mathpazo}
\usepackage{thmtools}
\usepackage{thm-restate}
\usepackage[normalem]{ulem}

\usepackage{xcolor}
\usepackage{tikz}
\usetikzlibrary{arrows.meta}

\newtheorem{theorem}{Theorem}
\newtheorem{corollary}[theorem]{Corollary}

\newtheorem{lemma}[theorem]{Lemma}
\newtheorem{definition}[theorem]{Definition}

\newenvironment{properties}[2][0]
{
	\begin{enumerate} \setcounter{enumi}{#1}}{\end{enumerate}}

\newcommand{\dist}{\text{dist}}
\newcommand{\fset}{{\mathcal{F}}}
\newcommand{\qset}{{\mathcal{Q}^G}}
\newcommand{\pset}{{\mathcal{Q}^H}}
\newcommand{\dset}{{\mathcal{D}}}
\newcommand{\eps}{{\varepsilon}}
\newcommand{\poly}{\mathsf{poly}}
\newcommand{\congestion}{\textnormal{cong}}

\newcounter{note}
\newcommand{\zt}[1]{\refstepcounter{note}$\ll${\bf ZT~\thenote:}
	{\sf \color{red} #1}$\gg$\marginpar{\tiny\bf ZT~\thenote}}
\newcommand{\ymy}[1]{\refstepcounter{note}$\ll${\bf MY~\thenote:}
	{\sf \color{red} #1}$\gg$\marginpar{\tiny\bf MY~\thenote}}

\newcommand{\myparskip}{3pt}
\parskip \myparskip
\begin{document}

\begin{titlepage}
	
	\title{Lower Bounds on Flow Sparsifiers with Steiner Nodes}

\author{Yu Chen\thanks{National University of Singapore, Singapore. Email: {\tt yu.chen@nus.edu.sg}.} \and Zihan Tan\thanks{University of Minnesota Twin Cities, MN, USA. Email: {\tt ztan@umn.edu}.} \and Mingyang Yang\thanks{National University of Singapore, Singapore. Email: {\tt myangat@u.nus.edu}.}} 
	
	\maketitle

	\thispagestyle{empty}
	\begin{abstract}
Given a large graph $G$ with a set of its $k$ vertices called terminals, a \emph{quality-$q$ flow sparsifier} is a small graph $G'$ that contains the terminals and preserves all multicommodity flows between them up to some multiplicative factor $q\ge 1$, called the \emph{quality}. Constructing flow sparsifiers with good quality and small size ($|V(G')|$) has been a central problem in graph compression.

The most common approach of constructing flow sparsifiers is contraction: first compute a partition of the vertices in $V(G)$, and then contract each part into a supernode to obtain $G'$.
When $G'$ is only allowed to contain all terminals, the best quality is shown to be $O(\log k/\log\log k)$ 
and $\Omega(\sqrt{\log k/\log\log k})$. 
In this paper, we show that allowing a few Steiner nodes does not help much in improving the quality. Specifically, there exist $k$-terminal graphs such that, even if we allow $k\cdot 2^{(\log k)^{\Omega(1)}}$ Steiner nodes in its contraction-based flow sparsifier, the quality is still $\Omega\big((\log k)^{0.3}\big)$.

\end{abstract}
\end{titlepage}

\tableofcontents
\newpage

\section{Introduction}

Graph compression is a paradigm for transforming large graphs into smaller ones while retaining essential characteristics, such as flow/cut structures and shortest-path distances. By reducing graph size before further computations, it effectively saves computational resources. This approach has been instrumental in developing faster and more efficient approximation algorithms for graphs.

We study a specific type of graph compression called \emph{flow-approximating vertex sparsifiers\footnote{A closely related notion is cut-approximating vertex sparsifiers and will be discussed in \Cref{sec: related}.}}, first introduced in \cite{hagerup1998characterizing,moitra2009approximation,leighton2010extensions}.
In this setting, the input is a large graph $G$ and a subset $T\subseteq V(G)$ of $k$ special vertices called \emph{terminals}, and the goal is to compress $G$ into a small graph $G'$ that contains $T$ and preserves all multi-commodity flows between terminals in $T$, up to some factor $q\ge 1$ that is called the \emph{quality} of $G'$.\footnote{A formal definition is provided in \Cref{sec: prelim}.}
The focus in this research direction lies in the trade-off between the quality (the smaller $q$ is, the better) and the size (measured by the number of vertices in $G'$, the smaller the better) of flow sparsifiers.

In the restricted setting where $G'$ is required to contain only terminals,
Leighton and Moitra \cite{leighton2010extensions} showed that quality $O(\log k/\log\log k)$ is achievable, and \cite{charikar2010vertex}, \cite{makarychev2010metric} and \cite{englert2014vertex} showed that such sparsifiers can be constructed efficiently.
On the other hand, a lower bound of $\Omega(\log\log k)$ on quality is proved in \cite{leighton2010extensions} and improved to $\Omega(\sqrt{\log k/\log\log k})$ by Makarychev and Makarychev \cite{makarychev2016metric}, leaving a small gap yet to be closed.
Thus, the natural next question is:
\[\emph{Can better (in quality) sparsifiers be constructed by allowing a small number of Steiner vertices}? 
\]

This question has been a major open question in graph compression and has received much attention in recent years. For cut sparsifiers, there has been a long line of exciting work achieving significant progress \cite{hagerup1998characterizing,moitra2009approximation,chuzhoy2012vertex,kratsch2012representative,khan2014mimicking,karpov2019exponential,jambulapati2023sparsifying,chen2024cut}. For flow sparsifiers, the progress has been relatively slower. Chuzhoy \cite{chuzhoy2012vertex} showed constructions of flow sparsifiers of quality $O(1)$ and size $C^{O(\log\log C)}$, where $C$ is the total capacity of all terminal-incident edges (assuming each edge has capacity at least $1$). In addition, the quality-$(1+\eps)$ regime has been relatively well studied \cite{andoni2014towards,abraham2016fully,krauthgamer2023exact,chen20241+}. However, despite all the efforts, it is still unclear whether we can achieve quality $O(1)$ with $\tilde O(k)$ many Steiner nodes.

In addition to achieving both good quality and size, another desired property for a flow sparsifier $G'$ is that it inherits structural properties from the input graph $G$,\footnote{For example, if $G$ is a structurally simple graph, say a tree or a planar graph, we would also want $G'$ to be as simple.} or in other words, ``$G'$ should come from $G$''.
This leads us to the approach of contraction, the most common approach to constructing flow sparsifiers. Specifically, we first compute a partition $\fset$ of vertices in $G$, and then contract each set in $\fset$ into a supernode to obtain $G'$.
Such a graph $G'$ is called a \emph{contraction-based flow sparsifier} of $G$.
Contraction-based flow sparsifiers play a similar role in vertex sparsifiers as the role of spanners in distance-preserving edge sparsifiers, in that both allow only the use of edges from the input graph. To the best of our knowledge,
all previous constructions of flow sparsifiers are contraction-based flow sparsifiers or convex combinations of them \cite{leighton2010extensions,charikar2010vertex,makarychev2010metric,chuzhoy2012vertex,krauthgamer2013mimicking,andoni2014towards,englert2014vertex,chen20241+}.
Therefore, it is natural to explore whether we can achieve better quality (ideally $O(1)$) via 
contraction-based flow sparsifiers and their convex combinations, allowing a small number of Steiner nodes (say $\tilde O(k)$ many).

\subsection{Our results}

In this paper, we give a negative answer to this question. Our main result can be summarized as the following theorem.

\begin{restatable}{thm}{mainresult}
\label{thm: main}
For every large enough integer $k$ and every small constant $\eps>0$, there exists a graph $G$ with $k$ terminals, such that any convex combination of contraction-based flow sparsifiers on at most $k\cdot 2^{(\log k)^{(1-c_0)+\eps}}$ vertices has quality at least $(\log k)^{c_0-O(\eps)}$, where $c_0=\frac{1}{\log_2 5 + 1}\approx 0.301$ is a universal constant.
\end{restatable}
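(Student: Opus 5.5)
The plan is to build a hard instance as a composition of a recursive construction, and to reduce the lower bound for convex combinations to one for a single contraction via a Yao-type minimax argument. Since the quality of a convex combination is governed by the worst demand, it suffices to exhibit a distribution $\dset$ over terminal demand pairs (or a fixed set of demands with weights) such that every single contraction-based sparsifier with at most $k\cdot 2^{(\log k)^{1-c_0+\eps}}$ vertices either creates routing shortcuts (congestion in $G'$ is much smaller than in $G$ for a demand that $G$ cannot route well) or destroys routing capacity, on average over $\dset$. By linearity, averaging over the convex combination then gives the same loss. This mirrors how the $\Omega(\sqrt{\log k/\log\log k})$ lower bound of Makarychev--Makarychev is proved for terminal-only sparsifiers, where the contraction corresponds to a $0$-extension.

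For the graph, I would take a base expander-like gadget $H$ with $k_0$ terminals, where any contraction that does not separate terminals into essentially singleton-rich clusters induces a $0$-extension-type metric distortion. Then I would iterate: replace each edge or terminal by a scaled copy of the gadget, for $L$ levels. The recursion is tuned so that each level multiplies the number of terminals by roughly $5$ (hence the $\log_2 5$) and the achievable distortion by a factor of $2$, giving distortion $2^L$ with $k\approx (2\cdot 5)^{L}$, i.e.\ $\log k\approx L(\log_2 5+1)$, which yields $(\log k)^{c_0}$ when the levels are chosen as $L\approx c_0\log\log k$ in the appropriate parametrization. The Steiner budget $2^{(\log k)^{1-c_0+\eps}}$ per terminal should correspond to the number of vertices needed to resolve a single level's gadget faithfully, so that a sparsifier of this size cannot afford to preserve most levels.

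The key steps, in order, are as follows. First, define the partition-induced map and show that any contraction $\fset$ with few parts must, at many levels of the hierarchy, merge vertices at distance comparable to that level's scale; this is a counting/pigeonhole argument over levels using the size bound. Second, at each such ``bad'' level, exhibit demands from $\dset$ whose cut/flow value drops or increases by a constant factor; this is a local analysis inside one gadget copy. Third, show that these per-level losses compose multiplicatively rather than additively, by routing level-$i$ demands through level-$(i+1)$ structures, so that the total quality is at least $2^{\Omega(L)}$ up to $O(\eps)$ losses in the exponent.

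The main obstacle is the third step, the multiplicative composition. Steiner nodes could in principle absorb distortion at one level and repair it at another, and the flow-versus-cut gap means I cannot argue purely with cuts. My first attempt would be to use a dual (metric/length-function) certificate: build a length function on $G$ level by level whose stretch under any contraction satisfying the first step's condition multiplies. I would then use LP duality for multicommodity flow to convert this stretch into a quality lower bound.
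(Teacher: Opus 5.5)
Your proposal has a genuine gap. It never specifies the base gadget, the recursive substitution, or the demands. The step you yourself flag as the main obstacle, making per-level losses compose multiplicatively despite Steiner nodes, is exactly where the whole lower bound would have to come from. ``Build a length function level by level whose stretch multiplies'' restates that goal rather than achieving it.

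The parameter accounting also does not hold together. With $k\approx(2\cdot 5)^L$ and distortion $2^L$ you would get quality $2^L=k^{c_0}$, which is impossible. Convex combinations of terminal-only contractions ($0$-extensions) already achieve quality $O(\log k/\log\log k)$, and they trivially meet your size budget. Switching to $L\approx c_0\log\log k$ ``in the appropriate parametrization'' is reverse-engineered from the target exponent, not derived from any construction. Your reading of the Steiner budget as ``what one level needs'' is likewise unsupported.

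Two smaller points. First, a contraction can never ``destroy routing capacity'': any $G$-flow projects to an $H$-flow of no larger congestion, so $\congestion_H(\dset)\le\congestion_G(\dset)$ always holds, and only shortcuts matter. Second, ``by linearity'' in your Yao step needs a certificate that is actually linear in the demand, and congestion is not.

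For comparison, the paper uses no recursive graph; the hierarchy lives in the analysis of an arbitrary partition. $G$ is a random $10$-regular expander with $k\approx n/2^{(\log n)^{\alpha}}$ terminals. Its capacities are raised along BFS trees around $T$, so that any $m=\Theta(n/(\log n)^{\alpha})$ vertices can be linked to terminals by edge-disjoint paths while $c(E)=O(n)$.

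Given a partition $\fset$ with $O(n/2^{(\log n)^{\beta}})$ parts, the paper repeatedly merges a center cluster with the clusters within distance $2$ of it in the contracted graph. Per level, the $G$-size of clusters roughly squares, while their $H$-diameter grows by a factor of $5$. After $l\approx\log\log n/(\log 5+1)$ levels, clusters have $n^{\Omega(1)}$ vertices. They therefore contain pairs at $G$-distance $\Omega(\log n)$ but $H$-distance at most $(\log n)^{\alpha}/2000$. This race between squaring and multiplying by $5$ is the actual origin of $\log_2 5$.

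A union bound over all partitions of the random graph shows the clustering succeeds for every $\fset$, even after used vertices are deleted. This yields $m$ such pairs with edge-disjoint $H$-paths. Hence one gets a terminal demand with $\congestion_H\le 3$ and $\congestion_G\ge\sum\dset_{t,t'}\dist_G(t,t')/c(E)=\Omega((\log n)^{1-\alpha})$.

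Because this lower bound is linear in the demand, convex combinations are handled by the $\mu$-adapted demand $\sum_H\Pr_\mu[H]\,\dset_H$, routed copy by copy in $H_\mu$. No demand distribution fixed in advance is needed.
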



Our \Cref{thm: main} shows that, even if we allow a super-linear number of Steiner vertices ($k\cdot 2^{(\log k)^{\Omega(1)}}$ many, which is greater than $k\cdot\poly\log k$), contraction-based flow sparsifiers (and their convex combinations) still cannot achieve quality better than $\Omega\big((\log k)^{0.3}\big)$. Compared with the lower bound of $\tilde\Omega\big(\sqrt{\log k}\big)$ \cite{makarychev2016metric} in the case where no Steiner nodes are allowed, our result shows that adding a few Steiner nodes does not help much in improving its quality.

\textbf{Remark.} Slightly extending our approach, we can prove \Cref{thm: main} for every constant $c'_0 < 0.301$, as long as $\eps$ is small enough compared with $c'_0$, obtaining a tradeoff between quality and size controlled by $c'_0$. As the size bound will never go beyond $k\cdot 2^{(\log k)^{1-\Omega(1)}}$, which is always $k^{1+o(1)}$, and we tend to feature the strongest quality lower bound, we choose to formulate \Cref{thm: main} in the current way.

\subsection{Technical Overview}

We now provide some high-level intuition on the proof of \Cref{thm: main}. 

The hard instance is an expander with maximum degree bounded by a constant, and the terminals are an arbitrary subset of $k$ vertices.
The reason for choosing an expander is that, if a graph $G$ excludes a small clique $K_r$ as a minor, then the previous work \cite{leighton2010extensions}, via analyzing the solution cost of the $0$-Extension problem on graph $G$, has constructed a quality-$O(r^2)$ contraction-based flow sparsifier on terminals only (no Steiner nodes). Therefore, in order to prove a better quality lower bound even in the presence of Steiner vertices, our graph $G$ must be rich in structure, which makes expanders our favorable choice. 
We remark that expanders have also been used as the central object in previous lower bound proofs/attempts, for example, the $\Omega(\log\log k)$ lower bound in \cite{leighton2010extensions}, the $\Omega(\log k)$ lower bound on the integrality gap for the $0$-Extension problem \cite{calinescu2005approximation}, and the $\Omega(\log\log k)$ lower bound on the integrality gap for the LP relaxation of the $0$-Extension with Steiner Nodes problem in \cite{chen2024lower}. 
In this paper, we perform surgery on expanders and carry out a novel analysis which leads to a better bound.

In this section, we focus on providing an overview for the lower bound proof for contraction-based flow sparsifiers. The proof can be easily generalized to their convex combinations. See \Cref{sec: convex combination}.

Let $G$ be a constant-degree expander, let $T$ be a set of $k$ terminals which are chosen arbitrarily from $V(G)$ and let $H$ be a contraction-based flow sparsifier of $G$.
In order to prove a quality lower bound for $H$, we need to construct a demand $\dset$ on terminals, and show that $\dset$ can be routed via almost edge-disjoint paths in $H$ (causing congestion $O(1)$), but must incur congestion $(\log k)^{\Omega(1)}$ when routed in $G$. 
Therefore, our goal is to find a collection $\pset$ of (almost) edge-disjoint paths in $H$, such that 
\begin{itemize}
    \item each path $Q\in \pset$ connects a pair $s_Q,t_Q$ of  terminals in $T$; and \label{prop: terminal}
    \item the sum of all distances between endpoints in $G$ is large: $\sum_{Q\in \pset}\dist_G(s_Q,t_Q)=n\cdot (\log n)^{\Omega(1)}$. \label{prop: distance}
\end{itemize}
Note that the second property enforces that any routing of the demand $\dset^*=\set{(s_Q,t_Q)\mid Q\in \pset}$ in $G$ takes paths of total length at least $n\cdot (\log n)^{\Omega(1)}$, and since $G$ contains $O(n)$ edges (as the max degree of $G$ is $O(1)$), such a routing must incur congestion 
\begin{equation}
\label{eqn: cong}
\textnormal{cong}_G(\dset^*)\ge \frac{\sum_{Q\in \pset}\dist_G(s_Q,t_Q)}{|E(G)|}\ge 
\frac{n\cdot (\log n)^{\Omega(1)}}{O(n)}=(\log n)^{\Omega(1)}.
\end{equation}

\paragraph{Paths in $H$ and paths in $G$.}
Since we are more familiar with expanders themselves than their contracted graphs, analyzing paths in $G$ would be easier than arguing about paths in $H$. Moreover, there is a natural connection between paths in $G$ and paths in $H$: 
a path $(u_1,\ldots,u_r)$ in $G$ naturally induces a path $(F(u_1),\ldots,F(u_r))$ (possibly with repeated nodes) in $H$, where $F(u_i)$ represents the supernode in $H$ obtained by contracting the set in $\fset$ that contains $u_i$.
Therefore, instead of finding a collection $\pset$ of paths with the above properties, we will compute a collection $\qset$ of paths in $G$, such that
\begin{properties}{Q}
    \item each path $Q\in \qset$ connects a pair $s_Q,t_Q$ of terminals; \label{prop: Qterminal}
    \item the total $G$-distances between endpoints is large: $\sum_{Q\in \qset}\dist_G(s_Q,t_Q)=n\cdot (\log n)^{\Omega(1)}$; and \label{prop: Qdistance}
    \item their induced paths in $H$, which we denote by $\pset$, are almost edge-disjoint in $H$. \label{prop: edge-disjoint}
\end{properties}


The construction of the set $\qset$ of paths consists of two phases. In the first phase, we relax Property~\ref{prop: Qterminal} by allowing the endpoints of paths in $\qset$ to be any vertices in $G$, and in the second phase, we connect these endpoints to terminals, achieving Property~\ref{prop: Qterminal}. 
Below we describe two phases in more detail.

As a final preparation before we go into details, we introduce a parameter $\alpha$.
In Property \ref{prop: Qdistance}, suppose we want the total terminal-pair distances to be $\Omega(n\cdot (\log n)^{1-\alpha})$, where $0<\alpha<1$ is a constant to be determined later. 
This parameter $\alpha$ poses the following some requirements on the set $\qset$ of paths we are trying to find in Phase $1$.

\textbf{Requirement 1. the average length of paths in $\pset$ is $O\big((\log n)^{\alpha}\big)$.}
Since graph $G$ is a constant-degree expander, it has diameter $O(\log n)$, and so the distance between every pair $(s_Q,t_Q)$ of vertices in $G$ is at most $O(\log n)$. Therefore, to achieve total distance $\Omega(n\cdot (\log n)^{1-\alpha})$, we need $M=\Omega(n/(\log n)^{\alpha})$ paths.
Since Property \ref{prop: edge-disjoint} requires that paths in $\pset$ are (almost) edge disjoint in $H$, on average they may have length at most $O\big((\log n)^{\alpha}\big)$.

\textbf{Requirement 2. $k\ge n/2^{(\log n)^{\alpha}}$.} It is more a ``restriction'' than a requirement, that  vertices are not too many more than terminals in $G$. Its necessity comes from Phase 2 and will be explained later in this section. Another use of parameter $\alpha$ facilitated by this requirement is that, in the contracted graph $H$, we will think of $k\approx n/2^{(\log n)^{\alpha}}$ and allow $k\cdot 2^{(\log k)^{\alpha-o(1)}}$ Steiner nodes.

\subsubsection*{Phase 1. Finding the paths: hierarchical clustering}

We now describe the algorithm of finding a collection $\qset$ of paths that satisfies Properties~\ref{prop: Qdistance},\ref{prop: edge-disjoint}. 

We perform iterations, where in each iteration, we compute a $G$-path $Q\in \qset$ and its induced $H$-path $Q'\in \pset$, and remove the edges of $E(Q')$ from both $H$ and $G$ (since $H$ is a contracted graph of $G$, each edge of $H$ is also an edge of $G$), and the next iteration is continued on the remaining graphs $G$ and $H$.
Since expanders are robust against edge deletions \cite{saranurak2019expander}, as long as the number of iterations is at most $O\big(n/(\log n)^{\alpha}\big)$, we can think of each iteration as being performed on a fresh constant-degree expander and its contracted graph. For the purpose of technical overview, we will only describe how to perform the first iteration and find the first paths $Q$ and $Q'$. From the above discussion, the length of $Q$ needs to be at least $\Omega(\log n)$ and the length of $Q'$ needs to be at most $O\big((\log n)^{\alpha}\big)$.

To start with, we observe that a node in $H$ corresponds to a cluster in $G$, so any $G$-path lying entirely within such a cluster only induces an empty $H$-path.
Specifically, let $\fset$ be the partition of $V(G)$ into clusters. Since $k= n/2^{(\log n)^{\alpha}}$ and we allow $k\cdot 2^{(\log k)^{\alpha-o(1)}}$ Steiner nodes, 
$$|\fset|\le k\cdot 2^{(\log k)^{\alpha-o(1)}} =  n/\left(2^{(\log n)^{\alpha}-(\log n)^{\alpha-o(1)}}\right)\leq O\bigg(n/\left(2^{(\log n)^{\alpha}}\right)\bigg).$$
Therefore, some cluster $F\in \fset$ has size at least $|F|\ge \Omega(2^{(\log n)^{\alpha}})$, and as a subgraph of a constant-degree graph, $F$ must contain a path of length $\Omega\big((\log n)^{\alpha}\big)$, that consumes no edges from $H$.

As we would like paths in $\qset$ to have length $\Omega(\log n)$, such a path is not yet long enough.
As longer paths reside in larger clusters, the idea is to merge clusters in $\fset$ with their neighboring ones to obtain larger ones.
Specifically, we use a standard greedy algorithm to find a maximal subset $C$ of nodes in $H$ such that every pair of nodes in $C$ are at distance at least $3$ in $H$.\footnote{We start with an empty set $C$, and iteratively add to it an arbitrary node in $H$ that is at distance at least $3$ from all nodes in $C$, until we cannot do so.} Nodes in $C$ are called centers.
For each remaining node in $V(H)\setminus C$, either (i) it is at distance $1$ to a unique center, in which case we assign it to this center; or (ii) it is at distance $2$ to possibly a few centers, in which case we assign it to any such center. Now each center, together with all nodes assigned to it, form a larger cluster, that we call a \emph{level-$1$ cluster} (singleton vertices in $H$ are viewed as \emph{level-$0$ clusters}). And then sequentially for $i=2,3,\ldots$, we build level-$i$ clusters by merging level-$(i-1)$ clusters in the same way. 

To analyze the growth of cluster size, we make the following simplifying assumptions.
\begin{enumerate}
    \item \textbf{all level-$i$ clusters in $H$, when expanded as clusters in $G$, have the same size $s_i$.} In fact, we will enforce an approximate version of this property by discarding clusters with irregular size. We can show that we only discard a small number of vertices at each level, which can be accommodated by the robustness of expanders.
    \item \textbf{for each $i\ge 1$, $s_{i+1}=\Theta(s_i^2)$.} The intuition is that, as each level-$i$ center has size $s_i$, it has $\Theta(s_i)$ out-edges in $G$ as $G$ is a constant-degree expander. We expect most of the out-edges to land in distinct level-$i$ clusters, each of size $s_i$. Therefore, the level-$(i+1)$ cluster formed by this center will contain $\Theta(s_i)\cdot s_i=\Theta(s_i^2)$ vertices.
\end{enumerate}

\begin{figure}[h]
	\centering
	\scalebox{0.1}{\includegraphics{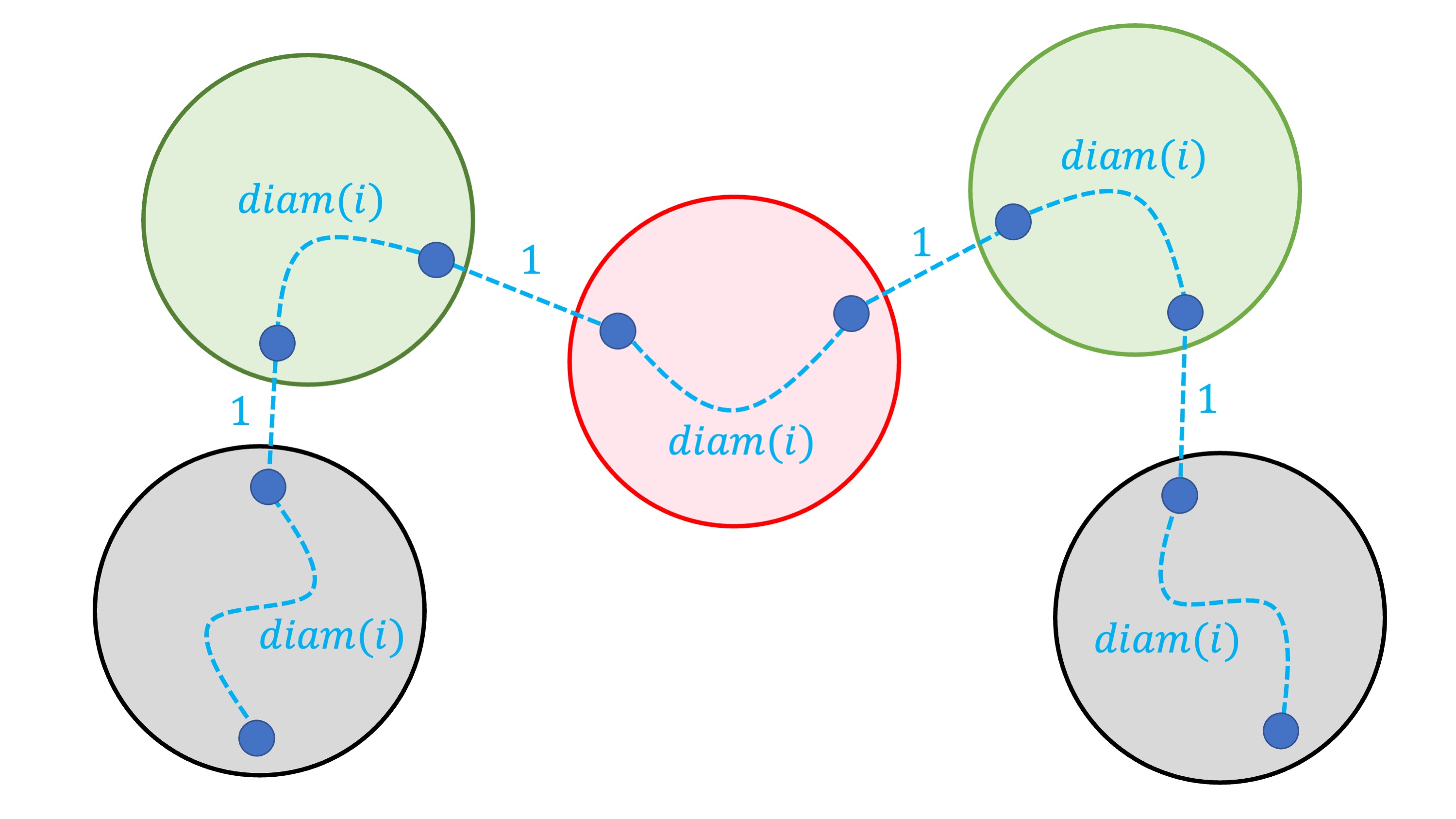}}
	\caption{An explanation of why the diameter increases by $5$ times per level up: A level-$(i+1)$ cluster consists of five level-$i$ clusters, with the red one being the center, the green ones its distance-$1$ neighbors, and the black ones distance-$2$ clusters assigned to it. The blue dashed line shows the shortest path between a pair of vertices in this level-$(i+1)$ cluster, whose length is at most $5\cdot \textnormal{diam}(i)+4$.\label{fig: diam}}
\end{figure}

From the merging algorithm and the explanation in \Cref{fig: diam}, in graph $H$, the maximum diameter of a level-$(i+1)$ cluster is at most $5$ times the maximum diameter of a level-$i$ cluster, indicating that the $H$-edge consumption is increased by $5$ times every level up. Therefore, the ultimate value of $\alpha$ is determined by the race of $G$-cluster size and $H$-edge consumption:
\begin{itemize}
    \item $G$-cluster size: starting at $2^{(\log n)^{\alpha}}$ (the size of the expanded $G$-cluster of a singleton node in $H$), being squared every level up, until it gets $n^{\Omega(1)}$ (so that it contains a $G$-path of length $\Omega(\log n)$).
    \item $H$-edge consumption: starting at $1$, being multiplied by $5$ every level up, until it gets $O\big((\log n)^{\alpha}\big)$ (by Requirement $1$).
\end{itemize}

In order for the Phase 1 to succeed, the $G$-cluster size must win this race: it must grow to $n^{\Omega(1)}$ before the $H$-edge consumption grows to $O\big((\log n)^{\alpha}\big)$. Therefore,
\[
\log_2 \bigg(\frac{\Omega(\log n)}{(\log n)^{\alpha}}\bigg)\le \log_5 \bigg((\log n)^{\alpha}\bigg).
\]
This requires that $\alpha \ge \frac{\log_2 5}{\log_2 5 + 1}$. As a consequence, the best quality lower bound we  can prove via this approach is $(\log n)^{1-\alpha}=(\log n)^{\frac{1}{\log_2 5+1}}$, as shown in \Cref{thm: main}.

\subsubsection*{Phase 2. Connecting the paths to terminals: surgery on edge capacities}

In the first phase, we found a collection $\qset$ of $M=\Omega(n/(\log n)^{\alpha})$ paths satisfying Properties~\ref{prop: Qdistance},\ref{prop: edge-disjoint}. However, their endpoints can be any vertices in $G$.
In this phase, we describe how to connect these endpoints to terminals via almost edge-disjoint paths, achieving Property~\ref{prop: Qterminal}. 

As we allow $k\cdot 2^{(\log k)^{\Omega(1)}}$ Steiner nodes, $k$ has to be bounded away from $n$ by at least factor $2^{(\log n)^{\Omega(1)}}$, 
which means that $k\ll M$. Since $G$ is a constant-degree graph, each terminal has only $O(1)$ incident edges, and so it is impossible to connect all $2M$ endpoints of $M$ paths to terminals by almost edge-disjoint paths. 

To overcome this issue, the idea is to increase the edge capacities in $G$. Imagine if each terminal has degree $M/k$ instead of $O(1)$ (while other vertices still have constant degree), then from the expansion property of $G$ and we can route the $2M$ endpoints to $k$ terminals with congestion $O(1)$. Since $G$ is a constant-degree expander, if we start from a terminal $t$ and perform BFS in $G$, then at the $(\log (M/k)$)-th level we can reach $M/k$ vertices. We will increase the capacities on all edges within these $\log (M/k)$ levels to make it look like that $t$ can reach each of these $M/k$ vertices at the last level with a distinct edge. Specifically, for each $i\le \log (M/K)$, we denote by $E_i$ the set of all edges between level-$i$ and level-$(i+1)$ vertices in this local BFS tree around $t$, and we give each edge in $E_i$ capacity $(M/k)/|E_i|$, so that their total capacity is $M/k$. See \Cref{fig: capacity} for an illustration.

\begin{figure}[h]
	\centering
	\subfigure[A terminal $t$ with its local BFS up to level $\log (M/k)$ giving a complete binary tree. Each edge on level $i$ has capacity $M/(2^i\cdot k)$.]
	{\scalebox{0.1}{\includegraphics{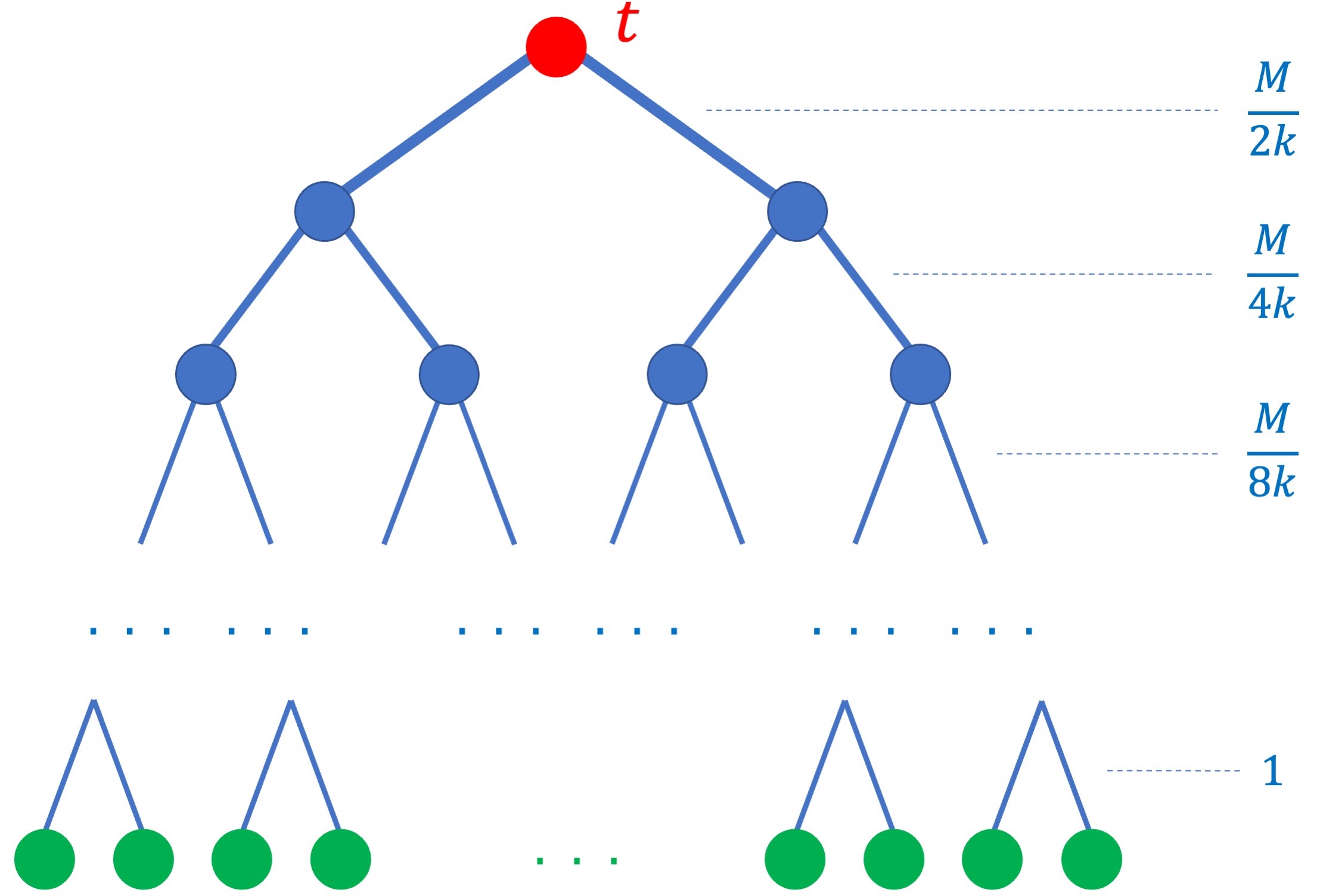}}}
	\hspace{0.5cm}
	\subfigure[A star graph connecting $t$ to each of its local BFS level $\log (M/k)$ children with a capacity-$1$ edge. The star is flow equivalent to the left graph.]
	{
		\scalebox{0.1}{\includegraphics{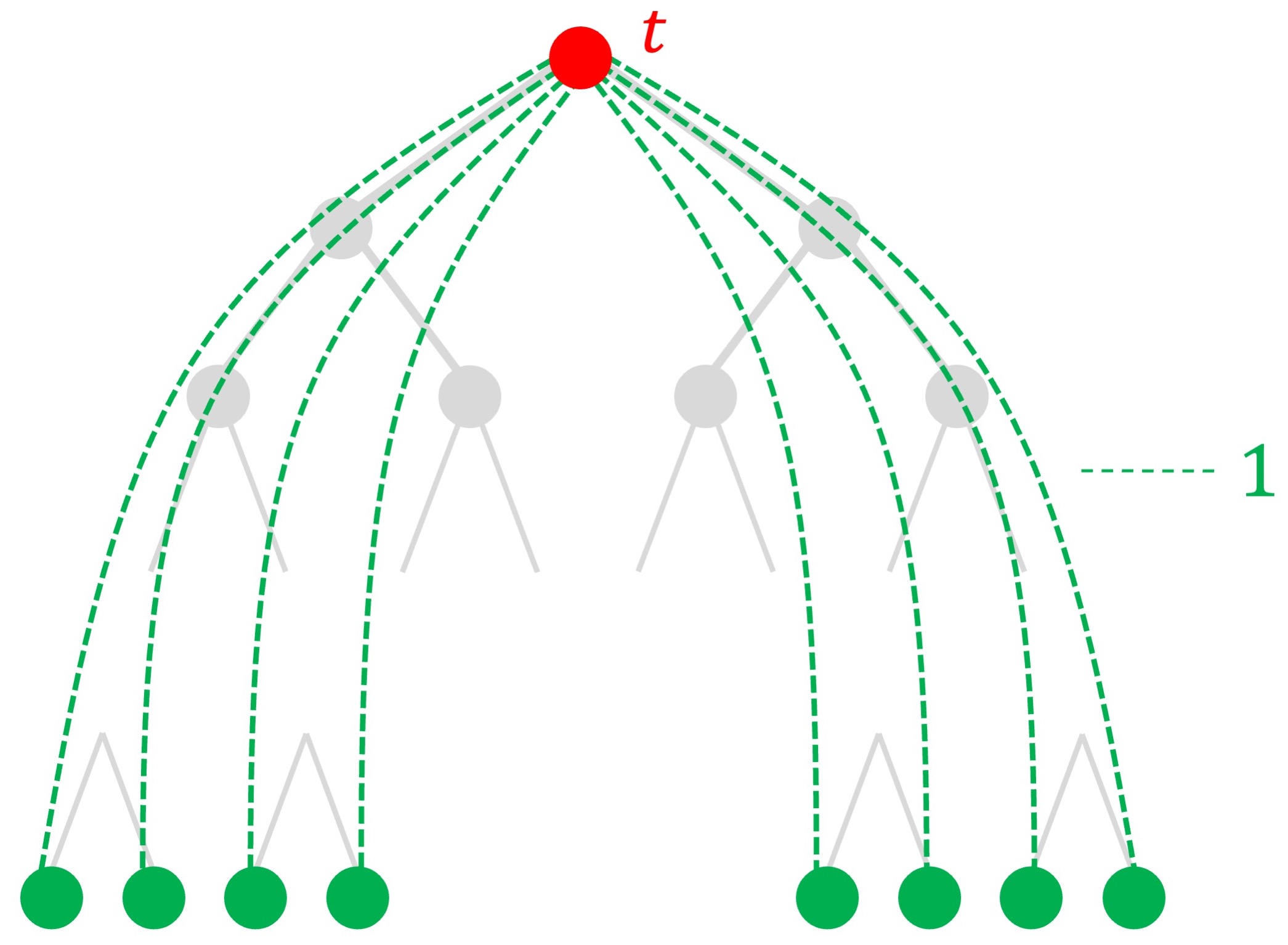}}}
	\caption{An illustration of increasing edge capacities.\label{fig: capacity}}
\end{figure}

In the new edge-capacitated graph, which we denote by $G'$, one can easily prove using its expansion property that all $2M$ endpoints can be connected to terminals with congestion $O(1)$.

\paragraph{Necessity of Requirement 2 ($k\ge n/2^{(\log n)^{\alpha}}$).}
The new graph $G'$ has the same graph structure as $G$ but higher edge capacities. It is therefore more powerful in routing demands. However, we still want to ensure that the constructed demand $\dset^*$ incur a large congestion in $G'$ using a similar argument as Inequality \ref{eqn: cong}. Therefore, instead of using $|E(G)|=O(n)$ in Inequality \ref{eqn: cong}, we need that the total capacities of all edges in $G'$ remains $O(n)$. As the increase in capacities are only in the local terminal BFS trees, and in each tree, the increase is at most $M/k$ in each of its $\log(M/k)$ levels, resulting in a total increase of $k\cdot \log(M/k)\cdot M/k$. Therefore, we will require that
\[k\cdot \log(M/k)\cdot M/k =O(n).\]
Plugging in $M=\Omega(n/(\log n)^{\alpha})$, we get that $k\ge n/2^{(\log n)^{\alpha}}$. 

\subsection{Related Work}
\label{sec: related}

\paragraph{Other recent results on flow sparsifiers.}
Chuzhoy \cite{chuzhoy2012vertex} constructed quality-$O(1)$ contraction-based flow sparsifiers of size $C^{O(\log\log C)}$, where $C$ is the total capacity of all terminal-incident edges (assuming that each edge has capacity at least $1$).
Andoni, Gupta, and Krauthgamer \cite{andoni2014towards} constructed quality-$(1+\eps)$ flow sparsifiers for quasi-bipartite graphs and quality-$1$ flow sparsifiers for Okamura-Seymour instances. 
On the other hand, Krauthgamer and Mosenzon \cite{krauthgamer2023exact} showed that even $6$-terminal graphs require arbitrarily large size in quality-$1$ flow sparsifiers, and quality-$(1+\eps)$ flow sparsifiers for $k$-terminal networks must contain at least $\Omega(k/\poly(\eps))$ Steiner nodes.
Later, Chen and Tan \cite{chen20241+} showed that, for contraction-based flow sparsifiers, even $6$-terminal graphs require arbitrarily large size for quality $(1+\eps)$.
Das, Kumar and Vaz \cite{das2024nearly} studied exact flow sparsifiers for quasi-bipartite graphs via a connection to polyhedral geometry.

\paragraph{Cut sparsifiers.}
A closely related notion is cut-preserving vertex sparsifiers (abbreviated as \emph{cut sparsifiers}).
Given a graph $G$ and a set $T$ of its $|T|=k$ vertices called terminals, a cut sparsifier of $G$ with respect to $T$ is a graph $G'$ that contains $T$, and for every partition $(T_1,T_2)$ of $T$, the min-cut separating $T_1$ from $T_2$ in $G$ is within a multiplicative factor $q$ from the min-cut separating $T_1$ from $T_2$ in $G'$, where $q$ is called its \emph{quality}. Quality-$q$ flow sparsifiers are also quality-q cut sparsifiers, but the converse is not necessarily true. In fact, by using the previous results on flow-cut gap, we can show that quality-$q$ cut sparsifiers are quality-$O(q\log k)$ flow sparsifiers.

In the restricted case where $V(G')=T$, the best quality is shown to be $O(\log k/\log\log k)$ \cite{moitra2009approximation} and $\Omega(\sqrt{\log k}/\log\log k)$ \cite{makarychev2016metric,charikar2010vertex}. 
In another restricted case where $q=1$, the size bound is shown to be at most $2^{2^k}$ \cite{hagerup1998characterizing,khan2014mimicking} and at least $2^{\Omega(k)}$ \cite{khan2014mimicking,krauthgamer2013mimicking,karpov2019exponential}.
In the special setting where each terminal has degree $1$ in $G$, Chuzhoy \cite{chuzhoy2012vertex} constructed $O(1)$-quality cut sparsifiers of size $O(k^3)$, and Kratsch and Wahlstrom \cite{kratsch2012representative} constructed quality-$1$ cut sparsifiers of size $O(k^3)$ via a matroid-based approach.

Recent years have witness other work on (i) cut/flow sparsifiers for special families of graphs, such as trees \cite{goranci2017vertex}, planar graphs \cite{krauthgamer2020refined,karpov2019exponential,goranci2020improved,chen2025cut}, and quasi-bipartite graphs \cite{das2024nearly,chen2025cut}; (ii) preserving terminal min-cuts up to some threshold via contraction-based cut sparsifiers \cite{chalermsook2021vertex,liu2023vertex}; and
(iii) constructions of dynamic cut/flow sparsifiers and their applications in dynamic graph algorithms \cite{abraham2016fully,durfee2019fully,chen2020fast,goranci2021expander}.

\section{Preliminaries}
\label{sec: prelim}

By default, all logarithms are to the base of $2$.

Let $G = (V, E, c)$ be a undirected graph with node set $V=V(G)$ and edge set $E=E(G)$, where each edge $e \in E$ has capacity $c(e)\geq 1$. Let $\dist_G(v,v')$ be the shortest-path distance between $v$ and $v'$ in $G$. 
For a vertex $v \in V$, we denote by $\deg_G(v)$ the degree of $v$ in $G$. 
For a subset $E'\subseteq E$, let $|E'|$ be cardinality of $E'$ and define $c(E')=\sum_{e\in E'}c(e)$. Set $c(\emptyset)=0$ by default. 

For any disjoint subsets $S,T \subseteq V$, we denote by $E_G(S,T)$ the set of edges in $G$ with one endpoint in $S$ the other endpoint in $T$. For any subset $U\subseteq V$, we denote by $\delta_G(U)=E_G(U,V\setminus U)$ the set of edges in $G$ with one endpoint in $U$ and the other endpoint in $V\setminus U$, and denote by $N_G(U)$ the set of endpoints of $\delta_G(U)$ outside of $U$.

We may omit the subscript $G$ in the above notations when the graph is clear from the context.

\paragraph{Partitions and contracted graphs.}
Given graph $G=(V,E,c)$ and any partition $\mathcal{F}$ of $V$, for each node $u\in V$, we denote by $F(u)\subseteq V$ the cluster in $\mathcal{F}$ containing node $u$. Given any subset $\mathcal{F}'\subseteq \mathcal{F}$, set $V(\mathcal{F}')=\bigcup_{F\in \mathcal{F}'}F\subseteq V$. We denote by $H=(V_H,E_H,c_H)$ the graph obtained from $G$ by contracting each set $F\in \fset$ into a supernode (so $V_H=\mathcal{F}$), keeping all parallel edges and removing self-loops. For any distinct clusters $F,F'\in \mathcal{F}$, we say $(F,F')\in E_H$ if $E_G(F,F')\neq \emptyset$, and in this case we define $c_H(F,F')=c(E_G(F,F'))\geq 1$. We denote by $\dist_H(F,F')$ the shortest-path distance between $F$ and $F'$ in $H$. Given a pair $u,u'$ of nodes in $V$, we define $\dist_H(u,u')=\dist_H(F(u),F(u'))$.

\paragraph{Demands, flows, and congestions.} 
Given a graph $G=(V,E,c)$, let $T\subseteq V$ be a subset of vertices called \emph{terminals}. For each pair $t,t'$ of terminals, we denote by $\mathcal{P}^G_{t,t'}$ the collection of $t$-$t'$ paths in $G$, and we define $\mathcal{P}^G=\bigcup_{t,t'\in T}\mathcal{P}^G_{t,t'}$.

A multi-commodity flow $F=\{f_P\}_{p\in \mathcal{P}^G}$ on $G$ is a function that assigns to each path $P\in \mathcal{P}^G$ a real number $f_P\geq 0$. A demand $\mathcal{D}=\{\mathcal{D}_{t,t'}\}_{t,t'\in T}$ on $T$ is a function that assigns to each (unordered) pair $t,t'\in T$ a real number $\mathcal{D}_{t,t'}\geq 0$. Note that the demand value $\mathcal{D}_{t,t'}$ cannot be all zero.

We say that a flow $F$ \textit{routes} $\mathcal{D}$, if for every terminal pair $t,t'\in T$, $F$ sends $\mathcal{D}_{t,t'}$ units of flow from $t$ to $t'$ (or from $t'$ to $t$) in $G$, e.g. $\sum_{P\in \mathcal{P}^G_{t,t'}}f_P = \mathcal{D}_{t,t'}$. 
Consider a multi-commodity flow $F$, the \emph{congestion} of flow $F$ in $G$, denoted by $\congestion_G(F)$, is the maximum, over all edges $e$ in $G$, the ratio between the amount of flow sent through $e$ and its capacity $c(e)$: 
\[
\congestion_G(F)=\max_{e\in E} \bigg\{\frac{\sum_{P\in \mathcal{P}^G:e\in P}f_P}{c(e)}\bigg\}.
\]
Given a graph $G$ and a demand $\mathcal{D}$, the congestion of $\mathcal{D}$ in $G$, denoted by $\text{cong}_G(\mathcal{D})$, is defined to be the minimum congestion of any flow that routes $\mathcal{D}$ in $G$:
$\congestion_G(\mathcal{D})=\min \{\congestion_G(F):F \text{ routes }\mathcal{D}\}$.
\begin{lemma}\label{lem:cong>=dist/c(E)}
For any graph $G=(V,E,c)$, any terminal set $T\subseteq V$ and any demand $\mathcal{D}$ among terminals,
    \[ \congestion_G(\mathcal{D})\geq \frac{\sum_{t,t'\in T} \mathcal{D}_{t,t'}\cdot \dist_G(t,t')}{c(E)}.
    \]
\end{lemma}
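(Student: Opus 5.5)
The plan is a volume (double-counting) argument. Fix any flow $F=\{f_P\}_{P\in\mathcal{P}^G}$ that routes $\mathcal{D}$, and write $\gamma=\congestion_G(F)$. The goal is to show $\gamma\cdot c(E)\ge \sum_{t,t'}\mathcal{D}_{t,t'}\cdot\dist_G(t,t')$. Since this will hold for every flow routing $\mathcal{D}$, taking the minimum over such flows gives the lemma.

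The key step is to compute the total flow volume $\sum_{P} f_P\cdot |E(P)|$ in two ways. Summing over edges, it equals $\sum_{e\in E}\sum_{P\ni e} f_P$. By the definition of congestion, each inner sum is at most $\gamma\cdot c(e)$, so the volume is at most $\gamma\cdot c(E)$. Summing over terminal pairs instead, every path $P\in\mathcal{P}^G_{t,t'}$ has $|E(P)|\ge \dist_G(t,t')$. Hence the volume is at least
\[\sum_{t,t'}\dist_G(t,t')\sum_{P\in\mathcal{P}^G_{t,t'}}f_P=\sum_{t,t'}\mathcal{D}_{t,t'}\cdot\dist_G(t,t'),\]
where the equality uses that $F$ routes $\mathcal{D}$. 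Combining the two bounds and dividing by $c(E)>0$ proves the claim.

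There is no real obstacle here; the only care needed is bookkeeping. First, $\dist_G$ is the unweighted hop distance, so a path's edge count is what bounds it from below. Second, each unordered pair must be counted once, consistent with how the demand is defined. If paths were allowed to repeat edges, this would only increase the left-hand side of the edge count, so the inequality still holds. Finally, if $\mathcal{D}$ cannot be routed at all, the congestion is infinite and the statement is trivial.
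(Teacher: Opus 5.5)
Your proposal is correct and matches the paper's proof: the paper double-counts the flow volume $\sum_P f_P\cdot|E(P)|$ for a congestion-optimal flow $F^*$, bounding it above by $\congestion_G(F^*)\cdot c(E)$ and below by $\sum_{t,t'}\mathcal{D}_{t,t'}\cdot\dist_G(t,t')$. You prove the inequality for every routing flow and then take the minimum, a harmless and slightly more careful variant that does not need the minimum to be attained.
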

\begin{proof}
By the definition of the congestion of demand $\mathcal{D}$, there exists a flow $F^*=\{f_P^*\}_{P\in \mathcal{P}}$ on $G$ such that $\congestion_G(\mathcal{D})=\congestion_G(F^*)$ and $F^*$ routes $\mathcal{D}$. By the definition of the congestion of flow $F^*$, for each edge $e\in E$, it has $\congestion_G(F^*)\cdot c(e) \geq \sum_{P\in \mathcal{P}^G: e\in P}f^*_P$. By summing over all edges in $G$, we have
\[
\begin{split}
    \congestion_G(F^*)\cdot c(E) &=
\sum_{e\in E}\congestion_G(F^*)\cdot c(e) 
\geq \sum_{e\in E}\sum_{P\in \mathcal{P}^G:e\in P}f^*_P
=\sum_{P\in \mathcal{P}^G}f^*_P\cdot |E(P)|
=\sum_{t,t'\in T}\sum_{P\in \mathcal{P}_{t,t'}^G}f^*_P\cdot |E(P)|\\
& \geq \sum_{t,t'\in T}\sum_{P\in \mathcal{P}_{t,t'}^G}f^*_P\cdot\dist_G(t,t') 
= \sum_{t,t'\in T}
\mathcal{D}_{t,t'}
\cdot\dist_G(t,t'),
\end{split}
\]
since $\dist_G(t,t')=\min_{P\in \mathcal{P}^G_{t,t'}}|E(P)|$ denotes shortest-path distance between $t,t'$ in $G$.
\end{proof}
\paragraph{Flow sparsifiers and their quality.}
Let $H$ be a graph with $T\subseteq V(H)$. We say that $H$ is a flow sparsifier of $G$ with respective to $T$ with quality $q\geq 1$, if for any demand $\mathcal{D}$ on $T$,
\begin{equation}\label{eq:cong}
    \text{cong}_H(\mathcal{D})\leq \text{cong}_G(\mathcal{D})\leq q\cdot\text{cong}_H(\mathcal{D}).   
\end{equation}
We say that a partition $\mathcal{F}$ of $V$ is valid if terminals in $T$ lie in distinct clusters in $\mathcal{F}$.
A graph $H$ is a \emph{contraction-based flow sparsifier} of $G$ with respective to $T$, if there exists a valid partition $\mathcal{F}$ of vertices in $G$ into clusters, such that $H$ is obtained from $G$ by contracting vertices in each cluster $F\in \mathcal{F}$ into a supernode, keeping parallel edges and discarding self-loops. The \emph{size} of $H$ is defined as the number of its vertices, namely $|\fset|$.
A \emph{convex combination of contraction-based flow sparsifier} is a distribution $\mu$ on contraction-based flow sparsifiers. We say that the convex combination $\mu$ has size $s$ iff every contraction-based flow sparsifier $H$ in its support (with non-zero probability) has size at most $s$. 
A underlying graph $H_{\mu}$ is defined as follows. We take a copy of every graph $H$ in the support of $\mu$, scale its edge capacity by $\Pr_\mu[H]$, and then glue them together at all terminals (since each such graph must contain all terminals).
The quality of the convex combination $\mu$ is simply defined as the quality of $H_{\mu}$.

\section{Proof of \Cref{thm: main} for Contraction-based Flow sparsifiers}\label{sec:pf main thm}
In this section, we provide the proof of \Cref{thm: main} for contraction-based flow sparsifiers, with some technical details deferred to \Cref{sec:pf algo-property}. 
We generalize the proof to convex combinations of contraction-based flow sparsifiers in \Cref{sec: convex combination}.
The heart of this section is following theorem.

\begin{restatable}{thm}{finalresult}\label{thm:final result}
Let $\varepsilon >0$ be a small constant, $\alpha=\frac{\log 5}{\log 5+1-\varepsilon}$, and $\beta = \frac{\log 5-0.5\varepsilon}{\log 5+1-\varepsilon}$. Then for each large enough integer $n$, there exists an $n$-vertex graph $G=(V,E,c)$ with a terminal set $T$ of size $\frac{n}{2^{(\log n)^{\alpha}}}$, such that for any valid partition $\mathcal{F}$ of $V$ with size $|\mathcal{F}|=O(\frac{n}{2^{(\log n)^\beta}})$, there exists a demand $\mathcal{D}$ on terminals with $\congestion_G(\mathcal{D})=\Omega\left((\log n)^{1-\alpha}\right)$ and $\congestion_H(\mathcal{D})=O(1)$ where $H$ is the contracted graph of $G$ with respect to $\fset$.

\end{restatable}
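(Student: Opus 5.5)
We construct $G$ as a bounded-degree $\phi$-expander on $n$ vertices, with $\phi=\Omega(1)$. We pick an arbitrary set $T$ of $k=n/2^{(\log n)^\alpha}$ terminals that are pairwise far apart, at distance $\Omega(\log(M/k))$, which is possible since $k$ is much smaller than $n$. Here $M=\Theta(n/(\log n)^{\alpha})$. Around each terminal $t$ we take a BFS ball of radius $L=\log(M/k)\approx (\log n)^\alpha$. We give each edge between BFS levels $i$ and $i+1$ capacity $(M/k)/|E_i|$ and keep all other capacities at $1$. The total capacity is then
\[
c(E)=O(n)+k\cdot L\cdot M/k=O(n)+O\big(n(\log n)^{\alpha}/(\log n)^{\alpha}\big)=O(n).
\]
Now fix a valid partition $\fset$ and let $H$ be the contracted graph. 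The demand will be unit demands on $M$ terminal pairs $(s_Q,t_Q)$, chosen so that $\sum_Q\dist_G(s_Q,t_Q)=\Omega(M\log n)$ and the demand routes in $H$ with congestion $O(1)$. Lemma~\ref{lem:cong>=dist/c(E)} then gives $\congestion_G(\mathcal{D})\ge \Omega(M\log n)/O(n)=\Omega((\log n)^{1-\alpha})$. One point needs care: $\dist_G$ here is the unweighted hop distance, and capacities do not change it. We must make sure the path endpoints are chosen at $G$-distance $\Omega(\log n)$ even after they are extended to terminals. The balls have radius only $(\log n)^\alpha=o(\log n)$, so this extension is harmless.

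\textbf{Phase 1: hierarchical clustering in $H$.} We build level-$i$ clusters by the greedy distance-$3$ center selection described in the overview. Each level-$(i+1)$ cluster then has $H$-diameter at most $5D_i+4$, so $D_i=O(5^i)$. We prune clusters of irregular $G$-size, dropping few vertices per level, so that every level-$i$ cluster has $G$-size about $s_i$. The pruned vertices are absorbed using expander pruning \cite{saranurak2019expander}. The key size lemma is $s_{i+1}\ge s_i^{2-o(1)}$. Expansion gives a level-$i$ center $\Omega(s_i)$ out-edges. A counting argument over the regular clusters, combined with robust expansion, shows that most of these edges land in distinct level-$i$ clusters. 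The base case is $s_0\ge 2^{(\log n)^{\beta}}$ on average, because $|\fset|=O(n/2^{(\log n)^\beta})$. After $r=\log(\log n/(\log n)^{\beta})+O(1)$ levels we reach $s_r=n^{\Omega(1)}$. At that point
\[
5^r=(\log n)^{(1-\beta)\log 5+o(1)}\le (\log n)^{\alpha}/\polylog\text{-slack},
\]
which is exactly what the choices of $\alpha$ and $\beta$ (with the $\varepsilon$ gap) guarantee.

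We then find paths iteratively. In a level-$r$ cluster of $G$-size $n^{\Omega(1)}$ in a constant-degree graph, we find two vertices at $G$-distance $\Omega(\log n)$. We connect them through the cluster hierarchy by a $G$-path whose induced $H$-path has length $O(5^r)=O((\log n)^{\alpha})$. We delete the $H$-edges used and repeat. Over $M$ iterations the total deletion is $O(M(\log n)^\alpha)=O(n)\cdot$ a small constant, which we make small by a constant factor in $M$. Expander pruning then keeps a large expander core on which the whole clustering is recomputed.

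\textbf{Phase 2: routing endpoints to terminals.} The $2M$ endpoints must be routed to the terminals, $M/k$ endpoints per terminal. The capacitated BFS trees make each terminal behave like a star with $M/k$ unit leaves on its level-$L$ boundary. We set up a single-commodity flow problem from the endpoints to these boundary sets and use expansion, via max-flow/min-cut, to route it with congestion $O(1)$ in $G$. Each flow path projects to $H$ with no larger congestion, since contraction only removes edges. We concatenate these flows with the Phase-1 paths. This gives a fractional routing in $H$ of the terminal demand with congestion $O(1)$, where terminal pairs are matched according to which terminal each endpoint reaches.

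\textbf{Main obstacle.} The hardest step is the squaring lemma $s_{i+1}\approx s_i^2$. Most out-edges of a center must go to distinct, regular-size neighboring clusters, and this must keep holding after repeated edge deletions and prunings across all levels. Here I would first try a potential argument that charges the irregular or colliding clusters to a small fraction of vertices at each level.
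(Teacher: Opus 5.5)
\textbf{There is a genuine gap, and it sits exactly at the step you flagged: the squaring bound $s_{i+1}\ge s_i^{2-o(1)}$.} It does not follow from expansion (robust or not) plus counting, because it is false for some bounded-degree expanders. Your argument uses only expansion, and for such expanders the whole strategy collapses.

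Let $G=X\square Y$ be the Cartesian product of a $3$-regular expander $X$ on $n/s$ vertices and a $3$-regular expander $Y$ on $s=2^{(\log n)^\beta}$ vertices. This $G$ is $6$-regular with constant conductance, since the Laplacian gap of a Cartesian product is the minimum of the factors' gaps. Let the adversary take $\fset$ to be the fibers $\{x\}\times V(Y)$, refined to separate terminals; this adds at most $k\le n/s$ clusters.

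\begin{itemize}
\item Each cluster's $3s$ out-edges land in only $3$ clusters, so clusters grow by a constant factor per level, not by squaring.
\item Worse, $\dist_G(u,v)\le \dist_H(u,v)+O(\log s)$ for all $u,v$.
\item Any $O(1)$-congestion routing in $H$ has total $H$-length $O(c(E))=O(n)$.
\end{itemize}

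So every demand of your form has $\sum_Q\dist_G(s_Q,t_Q)\le O(n)+M\cdot O((\log n)^{\beta})=O(n)$, and \Cref{lem:cong>=dist/c(E)} yields only $O(1)$.

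What is missing is a property that holds against \emph{every} adversarial partition simultaneously. Expansion cannot supply it; the paper gets it from randomness.
\begin{itemize}
\item $G$ is a union of $10$ random perfect matchings.
\item A first-moment union bound runs over all tuples (partition $\fset$, ``useless'' set $V_U$ with $|V_U|\le n/200$, clustering outcome, friendship pattern). There are at most $n^{(2+o(1))n}$ such tuples.
\item Any fixed tuple with at least $0.96n$ bad vertices is realized with probability at most $n^{-2.27n}$. Each matching would have to send the partners of $\Omega(n)$ bad vertices into prescribed sets of size at most $s^*\le n^{0.01}$.
\item Edge deletions are absorbed by placing $V_U$ inside this union bound. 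Expander pruning would not help, since it preserves only the expansion, which is the insufficient property.
\end{itemize}

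\textbf{Two smaller points in Phase~2 also need repair.}

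First, the extension to terminals is not harmless pair by pair. The flow paths from endpoints to terminals run through the expander and may have length $\Theta(\log n)$, so a far pair $(a,b)$ can become a close terminal pair. The correct argument is aggregate. By the triangle inequality,
$\sum_i\dist_G(t_i,t'_i)\ge\sum_i\dist_G(a_i,b_i)-\sum_i\bigl(|Q_G(a_i,t_i)|+|Q_G(b_i,t'_i)|\bigr)$,
and the subtracted total is at most $O(1)\cdot c(E)=O(n)$.

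Second, pairwise-far terminals with radius-$\log(M/k)$ balls and uniform per-level capacities $(M/k)/|E_i|$ is only the overview heuristic.
\begin{itemize}
\item In a random $10$-regular graph such balls have about $(M/k)^{\log 9}\gg M/k$ vertices, so $k$ of them cannot be disjoint.
\item Uneven branching means uniform level capacities need not make a terminal flow-equivalent to a star.
\end{itemize}
The paper instead runs one multi-source BFS from $T$ up to the radius $r$ where $|B_r|=\Theta(m)$. It assigns capacities bottom-up along a BFS forest, so each parent edge carries exactly its children's total ($c'(e_u)=c'(E_u)$). It then shows that absorbing $V_T$ into a cut never increases its capacity, which gives \Cref{lem:cut-lb} and hence the routing of any $m$ vertices to $T$.
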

We now show that \Cref{thm:final result} implies \Cref{thm: main}.
Let $\eps',k$ be the parameters given in \Cref{thm: main}. We define $\eps=20\eps'$, $\alpha=\frac{\log 5}{\log 5+1-\varepsilon}$, $\beta = \frac{\log 5-0.5\varepsilon}{\log 5+1-\varepsilon}$, and
let $n=\lfloor k\cdot 2^{{(\log k)}^{\alpha+2\eps'}}\rfloor$.
Observe that $\log n\geq \log k =\log n - (\log n)^{\alpha}\geq 0.99\log n$.
%
Consider the graph instance $(G,T)$ given by \Cref{thm:final result}.
Let $H$ be the contraction-based flow sparsifier of $G$ determined by the valid partition $\mathcal{F}$
with 
$$|V(H)|=|\mathcal{F}|\leq k\cdot 2^{(\log k)^{\alpha+\varepsilon'}}
=n/\left(2^{(\log k)^{\alpha+2\eps'}-(\log k)^{\alpha+\eps'}}\right)
\leq n/2^{(\log k)^{\beta}}
=O\bigg(\frac{n}{2^{(\log n)^\beta}}\bigg).$$
From \Cref{thm:final result}, there is a demand $\dset$ such that $\congestion_G(\mathcal{D}) = \Omega\left((\log n)^{1-\alpha}\right)$ and 
$\congestion_H(\mathcal{D}) = O(1)$. 
Therefore, the quality of $H$ is
$q\geq \frac{\congestion_G(\mathcal{D})}{\congestion_H(\mathcal{D})}
= \Omega\left((\log n)^{1-\alpha}\right)
= (\log k)^{\frac{1}{\log 5 + 1}-O(\eps)}$.

\textbf{Organization of this section.} The remainder of this section is dedicated to the proof of \Cref{thm:final result}. 
In \Cref{subsec:rg-property} and \Cref{subsec:edge cap}, we construct an $n$-vertex graph $G=(V,E,c)$ and the terminal set $T$ used in \Cref{thm:final result}.
Then in \Cref{subsec:demand}, we propose a strategy to construct the demand over terminals, given any valid partition. 
Finally, in \Cref{subsec:completing pf final result}, we complete the proof of \Cref{thm:final result}.
The proofs of certain technical lemmas are deferred to \Cref{sec:pf edge disjoint path}.

\subsection{Expanders from random matchings}\label{subsec:rg-property}


Let $V$ be a set of $|V|=n$ vertices for an even integer $n$. We construct a random graph $G$ on $V$ by sampling uniformly at random $10$ perfect matchings on $V$ and setting $G$ as the union of them, so $G$ is a regular graph with degree $d=10$. 
This graph $G$ is at the heart of the proof of \Cref{thm:final result} and will be used throughout this section. Here we first state and prove some of its properties.
Recall that the conductance of a graph $H$ is defined as
    \[
    \Phi(H)=\min_{S\subseteq V(H), S\neq\emptyset, S\neq V(H)} \left\{\frac{|E_H(S)|}{\min \{\sum_{v\in S}\deg_H(v), \sum_{v\notin S}\deg_H(v)\}}\right\}.
    \]
We say that $H$ is a $\phi$-expander if $\Phi(H) \geq \phi$. 

\begin{lemma}[\cite{puder2015expansion}]\label{lemma:G'-expander}
    With probability $1-o(1)$, $\Phi(G)\geq \frac{1}{2}$.
\end{lemma}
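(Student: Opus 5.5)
The plan is to deduce the statement from the spectral results for random regular graphs built from independent uniform matchings, which is what the citation \cite{puder2015expansion} supplies, and then convert spectral information into conductance. There are two regimes. \emph{Small sets} are handled by a direct first-moment union bound. \emph{Large sets} are handled by a spectral bound. Since $G$ is $10$-regular, for $|S|\le n/2$ the conductance condition reads $|E_G(S,V\setminus S)|\ge \frac12\cdot 10|S| = 5|S|$. Equivalently, at most half of the $10|S|$ edge-endpoints in $S$ are matched back into $S$, i.e.\ $|E_G(S)|\le \frac{5}{2}|S|$ for the internal edge count.

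\textbf{Step 1 (small sets, first moment).} Fix $s\le \gamma n$ for a small constant $\gamma$, and a set $S$ of size $s$. In one uniform perfect matching, the probability that at least $m$ vertices of $S$ are matched inside $S$ is at most $\binom{s}{m}(s/n)^{m/2}$, up to constants. The $10$ matchings are independent. Having more than $5s$ of the $10s$ endpoints matched internally therefore costs roughly $(e s/n)^{5s/2}\cdot 2^{10s}$. Multiplying by $\binom{n}{s}\le (en/s)^s$ gives a term of order $(C s/n)^{3s/2}$. Summed over $1\le s\le\gamma n$, this is $o(1)$ for small enough $\gamma$. So whp every set of size at most $\gamma n$ satisfies the bound with room to spare.

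\textbf{Step 2 (large sets, spectral).} For $\gamma n\le |S|\le n/2$, I would invoke \cite{puder2015expansion}. It shows that whp the nontrivial eigenvalues of the union of $d$ random matchings (permutations) are at most $2\sqrt{d-1}+o(1)=6+o(1)$ for $d=10$. The expander mixing lemma then gives
\[
|E(S,V\setminus S)|\ \ge\ (d-\lambda)\,\frac{|S|\,|V\setminus S|}{n}.
\]
Combining Steps 1 and 2 with a union bound over the two failure events would give the lemma with probability $1-o(1)$.

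\textbf{Main obstacle.} The obstacle is the near-balanced regime $|S|\approx n/2$. There the mixing bound only yields a ratio of about $(d-\lambda)/(2d)\approx 0.2$. Moreover, a random set has $|E(S,V\setminus S)|$ almost exactly $5|S|$ in expectation, so the constant $\frac12$ has essentially no slack. A naive union bound over $2^n$ sets cannot beat this. My first attempt would therefore be to use the sharper statement from \cite{puder2015expansion} directly, as an edge-expansion or conductance guarantee rather than only an eigenvalue bound. I would then check that its normalization (degree $d=10$ versus $2d$ in the permutation model, and the volume $\sum_{v\in S}\deg(v)$ in the denominator) matches the convention used here, so that the constant $\frac12$ is exactly what the cited theorem provides. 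If it does not match, I would fall back to the weaker $\Omega(1)$ bound from Steps 1 and 2. As far as the overview indicates, the later arguments appear to use only $\Phi(G)=\Omega(1)$, though I have not checked this against the full proof.
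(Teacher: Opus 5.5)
You located the real difficulty correctly: the near-balanced regime. But it is not a gap that a sharper citation can close, because the constant $\frac12$ in the statement is false.

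\textbf{Why $\frac12$ cannot hold.} Alon (\emph{On the edge-expansion of graphs}, Combin.\ Probab.\ Comput.\ 1997) proved the following. Every $d$-regular graph on $n>n_0(d)$ vertices contains a set $S$ with $|S|=\lfloor n/2\rfloor$ and $|\delta(S)|\le (d/2-c\sqrt d)\,|S|$, for an absolute constant $c>0$. For $d=10$ this forces $\Phi(G)\le \frac12-c/\sqrt{10}$ deterministically. The union of $10$ random matchings is simple with probability bounded away from $0$, so no $1-o(1)$ guarantee of $\Phi(G)\ge\frac12$ can hold.

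This is exactly the effect you identified. A balanced set has expected boundary $5|S|$, and the minimum over exponentially many bisections lies a constant amount below that mean. A first-moment argument like your Step~1, run at the balanced scale, gives only Bollob\'as's bound $\Phi(G)\ge \frac12-\sqrt{\ln 2/d}\approx 0.24$, not $\frac12$.

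\textbf{What the citation provides.} The paper gives no argument beyond the citation to \cite{puder2015expansion}. That result is an eigenvalue bound, $\lambda\le 2\sqrt{d-1}+O(1)$, proved in the permutation model; the $2\sqrt{d-1}+o(1)$ form you quote is Friedman's. Through the mixing lemma it yields only $\Phi(G)\ge (d-\lambda)/(2d)$, roughly $0.15$--$0.2$ for $d=10$. There is no sharper conductance statement in it to invoke.

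\textbf{Your fallback is the right repair.} Step~1 becomes unnecessary, since the spectral bound alone gives $\Phi(G)\ge\phi_0$ for an absolute constant $\phi_0$, uniformly over all $|S|\le n/2$. Downstream, this is almost but not quite a drop-in replacement.
\begin{itemize}
\item \Cref{lem:neighbor-size>=2} becomes $|N(U)|\ge \phi_0|U|$, which still gives $r=O((\log n)^\alpha)$ in \Cref{lem:ball-lb}.
\item \Cref{lem:cut-lb} uses $\frac12$ tightly, through $|\delta_G(S)|\ge 5|S|\ge 5\cdot 0.2m=m$. With $\phi_0$ you only get $2\phi_0 m$. You must either raise the threshold $2m$ in the definition of $r$ to about $m/\phi_0$, or settle for congestion $O(1/\phi_0)$ instead of edge-disjointness in \Cref{thm:edge-disjoint-path}. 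The first option keeps $c(E)=O(n)$, $|V_T|=o(n)$ and $r=O((\log n)^\alpha)$. The second only changes the constant $3$ in \Cref{lem:totalcost-lb}.
\item The large-$|S|$ case of \Cref{thm:edge-disjoint-path} needs only $10\phi_0\ge 1$, which holds for $\phi_0\approx 0.15$. The factor $50$ written there should be $5$ even under $\Phi(G)\ge\frac12$.
\end{itemize}
So your unverified remark that later arguments need only $\Phi(G)=\Omega(1)$ is right, provided these constants are adjusted.
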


\begin{lemma}\label{lem:neighbor-size>=2}
    With probability $1-o(1)$, for every subset $U\subseteq V$ of size $|U|\leq 0.5n$, $|N(U)|\geq 0.5|U|$.
\end{lemma}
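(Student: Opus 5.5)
We derive the vertex-expansion statement from the conductance bound of \Cref{lemma:G'-expander}, working on the same probability-$(1-o(1))$ event $\Phi(G)\ge \frac12$, so no new probabilistic argument is needed. Fix any nonempty $U\subseteq V$ with $|U|\le 0.5n$. Since $G$ is $d$-regular with $d=10$, we have $\sum_{v\in U}\deg(v)=10|U|$ and $\sum_{v\notin U}\deg(v)=10(n-|U|)\ge 10|U|$. Hence the minimum in the definition of $\Phi$ equals $10|U|$, and the conductance bound gives $|\delta(U)|\ge \frac12\cdot 10|U| = 5|U|$.

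Next we pass from edges to neighbors. Every edge of $\delta(U)$ has its outer endpoint in $N(U)$, and each vertex $w\in N(U)$ is incident to at most $\deg(w)=10$ edges of $\delta(U)$. Counting these edges at their outer endpoints gives
\[
5|U|\le |\delta(U)|\le 10\,|N(U)|,
\]
so $|N(U)|\ge 0.5|U|$. Because this holds simultaneously for all such $U$ on the single event of \Cref{lemma:G'-expander}, the lemma follows with probability $1-o(1)$.

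One detail to check is that the graph, being a union of $10$ random perfect matchings, may contain parallel edges. This does not affect the argument: degrees counted with multiplicity are still exactly $10$, and a vertex $w$ can still absorb at most $10$ edge-endpoints of $\delta(U)$. I expect no real obstacle here; the only point requiring care is confirming that the conductance lemma is stated with multigraph degrees, which matches the paper's convention of keeping parallel edges.
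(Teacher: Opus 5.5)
Your argument is correct relative to \Cref{lemma:G'-expander} and is exactly the paper's proof: on the event $\Phi(G)\ge\frac12$, $10$-regularity gives $|\delta(U)|\ge 5|U|$ for $|U|\le 0.5n$, and since each vertex of $N(U)$ absorbs at most $10$ of these edges, $|N(U)|\ge 0.5|U|$. One caveat you share with the paper: the cited bound $\Phi(G)\ge\frac12$ cannot actually hold for large $10$-regular graphs (by a theorem of Alon, every large $d$-regular graph has a bisection cutting at most $(\frac d2-\Omega(\sqrt d))\cdot\frac n2$ edges), so this route only yields $|N(U)|\ge\phi|U|$ for the true conductance $\phi<\frac12$ (e.g.\ one certified from the spectral gap), which still suffices for \Cref{lem:ball-lb} but not for the constant $0.5$ as stated.
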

\begin{proof}
Assume $\Phi(G)\geq \frac{1}{2}$ (which holds with probability $1-o(1)$ according to \Cref{lemma:G'-expander}). Therefore,
for any subset $U\subseteq V$ of size $|U|\leq 0.5n$, $|\delta_G(U)|\geq 0.5 d\cdot|U|$. 
Since $\delta_G(U)=E_G(U,N(U))$, $|N(U)|\geq |E_G(U,N)|/d\geq 0.5|U|$. 
\end{proof}

\subsection{Construction of the hard instance: graph and edge capacities}\label{subsec:edge cap}\label{subsec:contruct G}

We continue to use the random graph $G$ defined in \Cref{subsec:rg-property}.
Recall the parameters $\alpha=\frac{\log 5}{\log 5+1-\varepsilon}$ and $\beta=\frac{\log 5-0.5\varepsilon}{\log 5+1-\varepsilon}$ from \Cref{thm:final result}.
We define the terminal set $T$ of $G$ to be an arbitrary set $|T|=k=\lfloor\frac{n}{2^{(\log n)^\alpha}}\rfloor$ vertices in $G$. For each integer $i\geq 0$, construct the following sets:
\begin{itemize}
    \item $N_i=\{u\in V: \min_{t\in T} \{\dist_G(u,t)\}= i\}$ 
    \item $B_i=\{u\in V: \min_{t\in T} \{\dist_G(u,t)\}\leq i\}=\bigcup_{j=0}^i N_j$.
\end{itemize}
We define $m=\lfloor\frac{10n}{(\log n)^{\alpha}}\rfloor$ and $r=\min\{i: |B_i|\geq 2m\}-1$. By definition of $r$, $|B_{r+1}|\geq 2m$. Since $G$ is a degree-$10$ regular graph, $0.2m\leq |B_{r}| < 2m$.

\begin{lemma}\label{lem:ball-lb}
    With probability $1-o(1)$, $r=O((\log n)^{\alpha})$.
\end{lemma}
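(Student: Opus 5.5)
We prove that $|B_i|$ grows geometrically in $i$ until it reaches size $2m$, using the vertex expansion of \Cref{lem:neighbor-size>=2}. We condition on the event of that lemma, which holds with probability $1-o(1)$: every $U\subseteq V$ with $|U|\le 0.5n$ satisfies $|N(U)|\ge 0.5|U|$.

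The key observation is that $B_{i+1}=B_i\cup N(B_i)$. Every vertex at distance exactly $i+1$ from $T$ is adjacent to a vertex at distance $i$. Conversely, every neighbor of $B_i$ lying outside $B_i$ has distance exactly $i+1$.

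Now fix any $i<r$. By the definition of $r$ we have $|B_i|\le |B_r|<2m$. Since $m=\lfloor 10n/(\log n)^\alpha\rfloor$, this gives $2m\le 20n/(\log n)^\alpha\le 0.5n$ for large $n$, so $|B_i|\le 0.5n$. Applying the expansion property to $U=B_i$ yields
\[
|B_{i+1}|=|B_i|+|N(B_i)|\ge 1.5|B_i|.
\]
Iterating from $|B_0|=|T|=k$ gives
\[
2m>|B_r|\ge 1.5^{r}k .
\]
Hence $r\le \log_{1.5}(2m/k)$.

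It remains to bound the ratio $2m/k$. We have $2m/k\le \frac{20n/(\log n)^\alpha}{n/2^{(\log n)^\alpha}-1}= O\bigl(2^{(\log n)^\alpha}/(\log n)^\alpha\bigr)$. Taking logarithms,
\[
r\le \frac{(\log n)^\alpha+O(1)}{\log 1.5}=O\bigl((\log n)^\alpha\bigr).
\]

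There is no real obstacle. The only points needing care are checking $|B_i|\le 0.5n$, so that the expansion lemma applies, and the floor in the definition of $k$. That floor is harmless, since $k\ge 1$ for large $n$ because $\alpha<1$.
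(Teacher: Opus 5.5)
Your proof is correct and follows the same route as the paper's: condition on the vertex-expansion event, use $|B_i|<2m\le 0.5n$ for $i<r$ to get $|B_{i+1}|\ge 1.5|B_i|$, iterate from $|B_0|=|T|$, and bound $\log_{1.5}(2m/k)=O((\log n)^\alpha)$. You also spell out two points the paper leaves implicit: the identity $B_{i+1}=B_i\cup N(B_i)$, and why the floor in $k$ is harmless.
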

\begin{proof}
    By the definition of $r$, $|B_{i}|\leq 0.5n$ for each $0\le i\le r$. From \Cref{lem:neighbor-size>=2}, with probability $1-o(1)$, for each $i$, $|B_{i+1}|\geq (1.5)\cdot|B_{i}|\geq (1.5)^{i+1}\cdot |T|$.
    Since $(1.5)^r\cdot|T| \leq |B_r|< 2m$, $(1.5)^r < \frac{2^{(\log n)^\alpha}}{n}\cdot\frac{20n}{(\log n)^{\alpha}}=\frac{20\cdot2^{(\log n)^\alpha}}{(\log n)^{\alpha}}$ and therefore $r=O((\log n)^{\alpha})$.
\end{proof}

\paragraph{Edge capacities.} 
Let $V_T=B_{r}$ and $E_{r+1} = E_G(N_r,N_{r+1})$. For each node $u\in N_r$, let $E_u$ be the set of all incident edges of $u$ in $E_{r+1}$, so $E_{r+1}=\bigsqcup_{u\in N_r} E_u$. We construct a graph $G_T=(V_T, E_T, c')$ as follows. 
\begin{enumerate}
    \item Initially $E_T\leftarrow \emptyset$. 
    \item For each node $u\in N_r$, 
        \begin{enumerate}
            \item Pick an arbitrary edge $e_u=(u,u')\in E$ such that $u'\in N_{r-1}$.
            \item Set $E_T \leftarrow E_T \cup \{e_u\}$ and $c'(e_u)=|E_u|\geq 0$.
        \end{enumerate}
    \item Go through each integer $i=r-1,\ldots,2,1$ sequentially, for each node $u\in N_i$, 
    \begin{enumerate}
        \item Set up $E_u=\{(u,u'')\in E_T\}$.
        \item Pick an arbitrary edge $e_u=(u,u')\in E$ such that $u'\in N_{i-1}$.
        \item Set $E_T \leftarrow E_T \cup \{e_u\}$ and $c'(e_u)=c'(E_u) = \sum_{e\in E_u}c'(e)$.
    \end{enumerate}
\end{enumerate}
Note that it is possible that some node $u$ has $E_u=\emptyset$ and $c'(e_u)=0$.
For each edge $e\in E$, we set its edge capacity as follows:
\begin{equation}\label{eq:edge cap}
\text{for each edge } e\in E, \quad c(e)=
\begin{cases}
    \min \{c'(e),1\}\quad & \text{if } e\in E_T \\
    1 \quad &  \text{otherwise}
\end{cases}
\end{equation}
The hard instance is $G=(V,E,c)$, the graph $G$ with edge capacity $\{c(e)\}_{e\in E}$ defined in (\ref{eq:edge cap}). We prove the following properties.

\begin{restatable}{lem}{edgedisjointpath}\label{thm:edge-disjoint-path}
For any node set $U\subseteq V$ of size $m=\frac{10n}{(\log n)^\alpha}$, there exists a collection $\mathcal{Q}^G_U$ of $m$ edge-disjoint paths in $G$, each connecting a distinct vertex of $U$ to a terminal in $T$.    
\end{restatable}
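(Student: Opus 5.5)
The plan is to apply Menger's theorem (max-flow/min-cut with integrality). First, though, I must fix how ``edge-disjoint in $G$'' is read. Taken literally in the underlying $10$-regular simple graph, the statement is false. Every path ends with an edge incident to a terminal, and there are at most $10k$ such edges. Since $k=n/2^{(\log n)^\alpha}$ while $m=10n/(\log n)^\alpha$, we have $m\gg 10k$, so $m$ such paths cannot be pairwise edge-disjoint. The same count rules out the reading ``each edge $e$ used at most $c(e)$ times'', because $c(e)\le 1$ by \eqref{eq:edge cap}.

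So the reading I will prove is the one the construction of $G_T$ is built for. Each edge $e\in E_T$ is treated as $\max\{c'(e),1\}$ parallel copies, and every other edge as a single copy. Paths must be edge-disjoint in this multigraph $\hat G$; equivalently, an integral routing that uses each $e\in E_T$ at most $\max\{c'(e),1\}$ times and every other edge at most once. This matches the overview in Phase 2, and it is what the tree-shaped capacities with $c'(e_u)=c'(E_u)$ are designed to make possible. I will flag that \eqref{eq:edge cap} should accordingly read $\max$ rather than $\min$.

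With that reading, the steps are as follows.
\begin{enumerate}
\item \textbf{Set up the flow network.} Add a source $s$ joined to each $u\in U$ by a unit edge, and a sink $t^*$ joined to every terminal with infinite capacity. By integrality it suffices to show every $s$--$t^*$ cut has value at least $m$. Such a cut is given by a set $X\subseteq V\setminus T$ of vertices on the source side. Its value is $|U\setminus X|+c_{\hat G}(\delta(X))$, so the goal is
\[
c_{\hat G}(\delta(X))\ \ge\ |U\cap X| .
\]
\item \textbf{Case $|X|\le n/2$.} By \Cref{lemma:G'-expander}, $|\delta_G(X)|\ge 5|X|\ge |U\cap X|$. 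Since $\hat G$ only adds parallel copies, $c_{\hat G}(\delta(X))\ge|\delta_G(X)|$, and the bound follows.
\item \textbf{Case $|X|>n/2$.} Set $Y=V\setminus X\supseteq T$, so $|Y|<n/2$ and $\delta(X)=\delta(Y)$. It suffices to show $c_{\hat G}(\delta(Y))\ge m\ge|U\cap X|$. I split according to how $Y$ sits relative to the BFS ball $V_T=B_r$.
\begin{itemize}
\item If $Y$ contains all of $B_r$: the edges $E_{r+1}=E_G(N_r,N_{r+1})$ that leave $Y$ are cut directly. For any vertex $w\in N_{r+1}\cap Y$, I push the accounting outward to $w$'s own boundary, using \Cref{lem:neighbor-size>=2} and $|B_{r+1}|\ge 2m$. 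I aim to get $c(\delta(Y))\ge |\delta(Y)|\ge\Omega(\min\{|Y|,\,|E_{r+1}|\})$, and hence $\ge m$ after adjusting constants; here $|E_{r+1}|\ge|N_{r+1}|\ge 2m-|B_r|$.
\item Otherwise $Y$ omits some vertices of the tree $G_T$. Take the forest of tree edges $e_u$ that leave $Y$, and let $u$ be the deepest boundary-crossing vertex on each branch. Each such cut tree edge has $\hat G$-multiplicity $c'(e_u)=c'(E_u)$. By the recursive definition, this equals the number of $E_{r+1}$-edges below $u$ in the tree. I will argue by a charging argument along the tree that the total cut multiplicity is at least the number of $E_{r+1}$-edges whose $N_r$-endpoint lies outside $Y$. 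The remaining $E_{r+1}$-edges are handled as in the previous case. Together this gives $c_{\hat G}(\delta(Y))\ge\Omega(|E_{r+1}|)$ or $\ge 5|Y|$, and in either case at least $m$.
\end{itemize}
\end{enumerate}

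The main obstacle is the second sub-case of step 3, where the cut slices through the middle of the capacity trees. There one has to show that the multiplicities $c'(e_u)$ exactly compensate for the $E_{r+1}$-edges that are hidden behind the cut. I would first try induction on the depth $i$ from $r$ down to $1$, maintaining the invariant that the cut capacity inside the subtree of $u$ is at least $c'(E_u)$ minus the number of uncut leaf-edges. The constants (for example $|B_{r+1}|\ge 2m$ against a threshold of $m$) then have to be tuned so that the final bound is at least $m$.
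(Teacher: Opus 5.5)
Your plan is essentially the paper's proof: the same max-flow/min-cut reduction and the same size split, with conductance handling a small non-terminal side, while on the terminal side $Y$ your lowest-cut-tree-edge charging is the fact the paper proves vertex by vertex (enlarging the terminal side to its union with $V_T=B_r$ never increases the cut, because $c'(e_u)=c'(E_u)$); your reading $c(e)=\max\{c'(e),1\}$ is also what the paper actually relies on ($c(e)\ge 1$ and $\hat c(e)=c'(e)\le c(e)$). To finish, combine the charging with the directly cut edges to get $c(\delta(Y))\ge|\delta_G(Y\cup B_r)|\ge 5|B_r|\ge m$ by conductance and $|B_r|\ge 0.2m$, which also settles the case $Y\supseteq B_r$ in one line; your stated ending ($\Omega(|E_{r+1}|)$ or $\ge 5|Y|$, after adjusting constants) does not give $m$ as written, since the $5|Y|$ branch fails for small $Y$ such as $Y=T$, and $m$ is fixed by the construction, so there are no constants left to adjust.
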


The proof of \Cref{thm:edge-disjoint-path} is deferred to \Cref{sec:pf edge disjoint path}.

\begin{lemma}\label{lem: c(E)=O(n)}
    With probability $1-o(1)$, $G=(V,E,c)$ has total edge capacity $c(E)=\sum_{e\in E}c(e)=O(n)$.
\end{lemma}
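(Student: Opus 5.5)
Since every edge $e$ has capacity $c(e)=\min\{c'(e),1\}\le 1$ if $e\in E_T$ and $c(e)=1$ otherwise, the literal definition \eqref{eq:edge cap} already gives $c(E)\le |E| = 5n$ deterministically, because $G$ is $10$-regular. However, the construction of $G_T$ is clearly meant to boost capacities near the terminals, so the intended rule is presumably $c(e)=\max\{c'(e),1\}$ on $E_T$. I therefore plan to prove the bound in a way that also covers that reading. In that case
\[
c(E)\;\le\; |E| + \sum_{e\in E_T} c'(e) \;=\; 5n + \sum_{i=1}^{r}\sum_{u\in N_i} c'(e_u),
\]
and it suffices to show that the inner sum is $O(m)$ for each level $i$, and then multiply by the number of levels $r$.

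\textbf{Key step (capacity conservation across levels).} For each $1\le i\le r-1$, I will show by downward induction that
\[
\sum_{u\in N_i} c'(e_u)=\sum_{w\in N_{i+1}} c'(e_w).
\]
By construction, $c'(e_u)=c'(E_u)$, where $E_u$ consists of the edges of $E_T$ incident to $u$ that were added at earlier steps. Those are exactly the edges $e_w$ with $w\in N_{i+1}$ whose chosen parent is $u$. Each $w\in N_{i+1}$ picks exactly one edge $e_w$ into $N_i$, so the sets $E_u$, $u\in N_i$, partition $\{e_w: w\in N_{i+1}\}$, and the identity follows. For the base level $r$,
\[
\sum_{u\in N_r}c'(e_u)=\sum_{u\in N_r}|E_u|=|E_{r+1}|\le 10|N_r|\le 10|B_r|<20m.
\]
Hence every level carries total capacity less than $20m$.

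\textbf{Summing over levels.} Combining the previous step over all $r$ levels gives $\sum_{e\in E_T}c'(e) < 20m\cdot r$. By \Cref{lem:ball-lb}, with probability $1-o(1)$ we have $r=O((\log n)^{\alpha})$. Since $m\le 10n/(\log n)^{\alpha}$, this yields $\sum_{e\in E_T}c'(e)=O(n)$, and therefore $c(E)=O(n)$.

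The only place where randomness enters is \Cref{lem:ball-lb}, which explains the probability qualifier in the statement. The main point requiring care is the partition argument in the key step: each $e_w$ must be counted in exactly one $E_u$. This holds because $E_u$ is defined only from the edges $e_w$ already in $E_T$ that are incident to $u$, and since the endpoint of $e_w$ other than $w$ lies in $N_i$, such an edge is incident to exactly one vertex of $N_i$.
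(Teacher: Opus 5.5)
Your proof is correct and follows the paper's argument essentially step for step. You bound $c(E)\le |E|+c'(E_T)$, show that the total $c'$-capacity is conserved from level $i+1$ to level $i$ because the sets $E_u$, $u\in N_i$, partition $\{e_w: w\in N_{i+1}\}$, conclude that each of the $r$ levels carries $|E_{r+1}|=O(m)$, and finish with $r=O((\log n)^{\alpha})$ from \Cref{lem:ball-lb}. Your remark about $\min$ versus $\max$ in the definition of $c$ is also right: the paper later uses $\hat c(e)=c'(e)\le c(e)$ on $E_T$, so $\max$ is the intended rule, and your argument covers both readings.
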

\begin{proof}
    Since $c(E)=\sum_{e\in E} c(e)
    \leq \sum_{e\in E_T} c(e) + |E|
    \leq c'(E_T)+|E|$ and $|E|=O(n)$, it suffices to prove that $c'(E_T)=\sum_{e\in E_T}c'(e)=O(n)$.
For each $1\le i\le r$, define $E_i=\bigcup_{u\in N_i}\{e_u\}$. 
    Note that $E_T=\bigsqcup_{i=1}^{r} E_i$.
    For each $i$, $E_i$ can be partitioned as
    $E_i=\bigsqcup_{u\in N_{i-1}}E_u$ and for each $1\le i\le r-1$,
    \[
    c'(E_{i})=c'\left(\bigcup_{u\in N_i} \{e_u\}\right) = \sum_{u\in N_i}c'(e_u) = \sum_{u\in N_i}c'(E_u) = c'\left(\bigsqcup_{u\in N_i} E_u\right)=c'(E_{i+1}).
    \]
    Since $E_{r+1}=E_G(N_r,N_{r+1})=\bigsqcup_{u\in N_r} E_u$,
    $c'(E_r)=\sum_{u\in N_{r}}c'(e_u)=\sum_{u\in N_{r}}|E_u|=|E_{r+1}|$. Additionally, $|E_{r+1}|\leq d\cdot |N_r|\leq d\cdot|B_r|=O(m)$ and $r=O((\log n)^\alpha)$ by \Cref{lem:ball-lb}. Therefore, 
    \[
    c'(E_T)=\sum_{e\in E_T}c'(e)= \sum_{i=1}^{r}c'(E_i)=r\cdot |E_{r+1}|\leq O\left((\log n)^\alpha\cdot \frac{n}{(\log n)^\alpha}\right)= O(n).
    \]
\end{proof}

\subsection{Construction of the hard instance: the demand on terminals}\label{subsec:demand}
Consider the edge-capacitated graph $G=(V,E,c)$ defined above. We assume that it satisfies all the properties in \Cref{subsec:rg-property} and \Cref{subsec:contruct G}. 
The next lemma states that the instance $(G,T)$ contains a collection of paths with desired properties, as discussed as the ``path-finding'' phase in the technical overview.
Its proof is provided in \Cref{sec:pf algo-property}.

\begin{restatable}{lem}{algoproperty}
\label{thm:algo-property}
Let $\varepsilon >0$ be a small constant, $\alpha=\frac{\log 5}{\log 5+1-\varepsilon}$, and $\beta = \frac{\log 5-0.5\varepsilon}{\log 5+1-\varepsilon}$.
Then with probability $1-o(1)$, for any valid partition $\mathcal{F}$ of $V$ with $|\mathcal{F}|=O(\frac{n}{2^{(\log n)^\beta}})$, 
    there exist a collection $\mathcal{P}$ of $m=\frac{10n}{(\log n)^{\alpha}}$ pairs of nodes in $V$ and a collection $\qset$ of $m$ edge-disjoint paths in the contracted graph $H$ (obtained from $G$ by contracting each set $F$ into a supernode), such that
    \begin{itemize}
        \item all endpoints in $\mathcal{P}$ are distinct;
        \item for each pair $(u,v)\in \mathcal{P}$, set $\qset$ contains 
        a path $Q_H(u,v)$ in $H$ connecting $F(u)$ to $F(v)$ with length at most $(\log n)^{\alpha}/2000$, where $F(w)$ denotes the cluster in $\fset$ containing $w$ for each node $w\in V$; and
        \item $\sum_{(u,v)\in \mathcal{P}}\dist_G(u,v) \geq \frac{1}{128}n(\log n)^{1-\alpha}$.
    \end{itemize}
\end{restatable}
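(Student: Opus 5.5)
We will follow the hierarchical clustering strategy from the technical overview, run iteratively: we find the $m$ pairs one at a time, and after each pair is found we delete from $G$ (and hence from $H$) the $H$-edges used by its $H$-path. The structural fact we rely on throughout is that, with probability $1-o(1)$, every subgraph of $G$ obtained by deleting at most $m\cdot(\log n)^{\alpha}/2000=O(n)$ edges still contains a large well-connected piece. To get this we choose the constant $1/2000$ small enough that at most $n/200$ edges are ever deleted. Then expander pruning (in the style of \cite{saranurak2019expander}), combined with \Cref{lemma:G'-expander}, gives a subset $V'\subseteq V$ with $|V'|\ge 0.9n$ on which the remaining graph is still an $\Omega(1)$-expander. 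The distance bound needs care, because deleting edges can only increase $G$-distances, while the statement asks for distances in the original $G$. To handle this, we will take the lower bound on $\dist_G(u,v)$ directly from a counting argument: in a $10$-regular graph, each vertex has at most $10^{\ell}$ vertices within distance $\ell$. So if a cluster has size $n^{c}$, most pairs inside it are at distance at least $c\log n/\log 10$.

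\textbf{Single iteration.} Fix the current partition $\fset$, restricted to the surviving graph. We build level-$i$ clusters in $H$ greedily as in the overview: choose a maximal set of centers that are pairwise at $H$-distance at least $3$, and attach each other node to a center at distance $1$ or $2$. The $H$-diameter of a level-$i$ cluster is then at most $D_i$, where $D_0=0$ and $D_{i+1}\le 5D_i+4$, so $D_i\le 5^{i+1}$. For sizes, the level-$0$ clusters have average $G$-size $n/|\fset|\ge 2^{(\log n)^{\beta}}$. We discard clusters that are too small, meaning smaller than half the current target $s_i$, and charge the discarded vertices to the pruning budget. We then argue that a surviving level-$i$ cluster $C$ has $\Omega(s_i)$ boundary edges, by expansion of the pruned graph. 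Its level-$(i+1)$ cluster contains the neighboring clusters hit by these edges. For this to give growth $s_{i+1}\ge s_i^{2}/\mathrm{poly}$, we need the boundary edges to hit $\Omega(s_i/\mathrm{poly})$ distinct clusters. We will get this from a bucketing argument by cluster size; failing that, we will show that even linear growth by a factor $\Omega(s_i)$ in the number of clusters suffices.

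\textbf{Reaching the target.} We stop at level $L$ with $5^{L+1}\le(\log n)^{\alpha}/2000$, so that every cluster has $H$-diameter within the length budget. The squaring growth gives $\log s_L\approx 2^{L}(\log n)^{\beta}$. Under the choice $\alpha=\frac{\log 5}{\log 5+1-\eps}$ and $\beta<\alpha$ from the statement, we have $2^{\log_5((\log n)^{\alpha})}(\log n)^{\beta}=(\log n)^{\alpha/\log 5+\beta}$, and $\alpha/\log 5+\beta\ge 1+\Omega(\eps)$. So $s_L\ge n^{\Omega(1)}$ with room to spare for the constant and polylogarithmic losses. Inside a cluster of size $n^{\Omega(1)}$ we then pick two vertices $u,v$ that have not been used as endpoints and satisfy $\dist_G(u,v)=\Omega(\log n)$; the $10^{\ell}$ ball bound guarantees such a pair exists. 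We take $Q_H(u,v)$ to be an $H$-path inside that cluster, which has length at most $(\log n)^{\alpha}/2000$.

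\textbf{Summing up.} Repeating this $m$ times gives $m$ edge-disjoint $H$-paths and
$$\sum_{(u,v)\in\mathcal{P}}\dist_G(u,v)\ge m\cdot\Omega(\log n)=\Omega\big(n(\log n)^{1-\alpha}\big),$$
and we tune the constants to reach the $1/128$ stated. To keep endpoints distinct, we mark used endpoints and avoid them; there are only $2m=o(n)$ of them, so any cluster of size $n^{\Omega(1)}$ still has many available vertices at large distance. Because the claim must hold for every valid $\fset$, all randomness is confined to the expansion properties of $G$, which hold with probability $1-o(1)$, and the rest of the argument is deterministic.

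\textbf{Main obstacle.} We expect the hardest step to be the size-growth claim $s_{i+1}\approx s_i^{2}$. A cluster's boundary edges may concentrate on a few large neighboring clusters, and cluster sizes can be very irregular. The plan is to first regularize by dyadic size classes, keeping the dominant class and discarding the rest within the pruning budget. We then use expansion of the pruned graph to show that a cluster of size $s$ sees $\Omega(s)$ distinct clusters of comparable size.
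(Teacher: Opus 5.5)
\textbf{The gap.} The failing step is the size-growth step $s_{i+1}\approx s_i^{2}$, and behind it your premise that all randomness can be confined to the expansion of $G$. Expansion, whether before or after pruning and with or without dyadic regularization, only gives a cluster of size $s$ some $\Omega(s)$ boundary edges. Nothing forces those edges into $\Omega(s/\mathrm{poly})$ \emph{distinct} clusters. In fact no argument using only constant degree, $\Omega(1)$-expansion and robustness to deletions can prove this lemma.

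Here is a counterexample. Take $G=X\square Y$, the Cartesian product of a constant-degree expander $X$ on $n/s$ vertices and a constant-degree expander $Y$ on $s=2^{(\log n)^{\beta}}$ vertices. It has constant degree and $\Omega(1)$ conductance, since its Laplacian gap is the smaller of the two gaps. Let $\fset$ consist of the clouds $\{x\}\times V(Y)$, so $|\fset|=n/s$.
\begin{itemize}
\item Every cluster has size $s$ but only $\deg(X)=O(1)$ neighbouring clusters. So its neighbourhood covers $O(s)$ vertices, not $s^{2-o(1)}$.
\item Your fallback still presupposes $\Omega(s_i)$ distinct neighbouring clusters, which is exactly what fails here.
\item More decisively, $\dist_G((x,y),(x',y'))=\dist_X(x,x')+\dist_Y(y,y')$, while the $H$-distance between the two clouds is $\dist_X(x,x')$. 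Hence $\dist_G\le \dist_H+\mathrm{diam}(Y)=\dist_H+O((\log n)^{\beta})$. Every pair with $\dist_H\le(\log n)^{\alpha}/2000$ then has $\dist_G=o(\log n)$, so the conclusion of the lemma is false for this expander.
\end{itemize}
Separately, keeping only the dominant dyadic size class can discard far more vertices than an $n/200$ budget allows.

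\textbf{The missing idea.} You need to use the random-matching model directly, uniformly over all partitions, through a first-moment/union bound. The paper calls a node bad at level $i$ if the clusters friendly to its current cluster cover at most $s_{i+1}=s^{(2-\lambda)^{i+1}}$ vertices. Friendship counts only edges between non-useless vertices. The paper then shows that, with probability $1-o(1)$, for every valid $\fset$ with $|\fset|=O(n/s)$ and every $V_U$ with $|V_U|\le n/200$, at most $0.96n$ nodes are bad. The argument has two sides:
\begin{itemize}
\item A bad configuration is encoded by a tuple (partition, $V_U$, clustering result, friend sets). There are only $n^{(2+o(1))n}$ such tuples, because each friend set is a union of few clusters out of $O(n/s)$.
\item Each tuple forces, in each of the $10$ independent matchings, about $0.23n$ edges to land in prescribed sets of size at most $s^{*}\le n^{1/128}$. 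This has probability at most $n^{-2.27n}$ overall.
\end{itemize}
A similar count handles level $0$. This is exactly what yields quasi-squaring for most vertices, and it is a property the product graph above lacks. Deletions are absorbed by the set $V_U$ inside the union bound, not by expander pruning.

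The rest of your outline agrees with the paper: the $5D_i+4$ diameter recursion, the choice of $L$ with $5^{L}\lesssim(\log n)^{\alpha}/2000$, the $10^{\ell}$ ball bound inside a cluster of size $n^{\Omega(1)}$, and iterating $m$ times while marking used vertices.
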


\paragraph{Collection of terminal pairs.}
Consider any valid partition $\mathcal{F}$ of $V$ with size $|\mathcal{F}|=O(\frac{n}{2^{(\log n)^\beta}})$. Recall that we say a partition $\mathcal{F}$ of $V$ is valid if distinct terminals of $T$ belong to different sets in $\mathcal{F}$. Let $H$ be the graph obtained from $G$ by contracting each set of $\mathcal{F}$ into a supernode. 

Let $\mathcal{P}=\{(a_i,b_i)\}_{i=1}^{m}$ the set of $m=\frac{10n}{(\log n)^{\alpha}}$ node pairs and let $\mathcal{Q}=\{Q_H(a_i,b_i)\}_{i=1}^{m}$ be the collection of $m$ edge-disjoint paths in $H$ given by \Cref{thm:algo-property}. Define $A=\{a_1,a_2,\ldots, a_m\}$ and $B=\{b_1, b_2,\ldots, b_m\}$. From \Cref{thm:algo-property}, $A$ and $B$ are disjoint. We now construct a collection $\mathcal{P}_T=\{(t_i,t'_i)\}^m_{i=1}$ of $m$ terminal pairs and a collection $\mathcal{Q}_T=\{Q_H(t_i,t'_i)\}^m_{i=1}$ of $m$ corresponding paths in the following way:
\begin{enumerate}
    \item 
    By \Cref{thm:edge-disjoint-path}, there exists a collection $\mathcal{Q}^G_A=\{Q_G(a_i, t_i)\}_{i=1}^m$ of $m$ edge-disjoint paths such that for each $a_i\in A$, path $Q_G(a_i, t_i)$ connects $a_i$ to some terminal $t_i\in T$. 
    Let $Q_H(a_i,t_i)$ be its induced path in $H$, and denote $\mathcal{Q}^H_A=\{Q_H(a_i, t_i)\}_{i=1}^m$. Clearly, paths in $\mathcal{Q}^H_A$ are edge-disjoint paths.
    \item  By \Cref{thm:edge-disjoint-path}, there exists a collection $\mathcal{Q}^G_B=\{Q_G(b_i, t_i)\}_{i=1}^m$ of $m$ edge-disjoint paths such that for each $b_i\in B$, path $Q_G(b_i, t_i)$ connects $b_i$ to some terminal $t_i\in T$. 
    Let $Q_H(b_i,t_i)$ be its induced path in $H$, and denote $\mathcal{Q}^H_B=\{Q_H(b_i, t_i)\}_{i=1}^m$. Clearly, paths in $\mathcal{Q}^H_B$ are edge-disjoint paths.
    \item Go through each $i\in \{1,2,\ldots, m\}$, construct a new path $Q_H(t_i,t'_i)$ between $t_i$ and $t'_i$ in graph $H$ by concatenating paths $Q_H(t_i,a_i)$, $Q_H(a_i,b_i)$, and $Q_H(b_i,t'_i)$. 
    \item Set $\mathcal{P}_T=\{(t_i,t_i')\}_{i=1}^m$ and $\mathcal{Q}_T=\{Q_H(t_i,t_i')\}_{i=1}^m$.
\end{enumerate}
\paragraph{Demand over terminals.} Given any valid partition $\mathcal{F}$ of $V$ with size $|\mathcal{F}|=O(\frac{n}{2^{(\log n)^\beta}})$, set up the collection of terminal pairs $\mathcal{P}_T$. For each terminal pair $t,t'\in T$, set $\mathcal{D}_{t,t'}$ as the number of pair $(t,t')\in \mathcal{P}_T$ in the collection.
\subsection{Completing the proof of \Cref{thm:final result}}\label{subsec:completing pf final result}
In this section, we complete the proof of \Cref{thm:final result} by \Cref{lem:dist-demand-lb} and \Cref{lem:totalcost-lb}.

\begin{lemma}\label{lem:dist-demand-lb}
For any valid partition $\mathcal{F}$ of $V$ with  $|\mathcal{F}|=O(\frac{n}{2^{(\log n)^\beta}})$, the demand $\mathcal{D}$ constructed by \Cref{subsec:demand} satisfies that $\congestion_G(\mathcal{D})= \Omega\left((\log n)^{1-\alpha}\right)$.
\end{lemma}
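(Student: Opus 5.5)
The plan is to apply \Cref{lem:cong>=dist/c(E)} to the demand $\mathcal{D}$ of \Cref{subsec:demand} and then bound the numerator from below and the denominator from above. For the denominator, \Cref{lem: c(E)=O(n)} gives $c(E)=O(n)$ with probability $1-o(1)$. This leaves the numerator, $\sum_{t,t'}\mathcal{D}_{t,t'}\dist_G(t,t')=\sum_{i=1}^m \dist_G(t_i,t'_i)$, which we need to show is $\Omega(n(\log n)^{1-\alpha})$. Given both bounds, dividing yields $\congestion_G(\mathcal{D})=\Omega((\log n)^{1-\alpha})$.

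For the numerator, use the triangle inequality in $G$ along the concatenation $t_i\to a_i\to b_i\to t'_i$:
\[
\dist_G(t_i,t'_i)\ \ge\ \dist_G(a_i,b_i)-\dist_G(a_i,t_i)-\dist_G(b_i,t'_i).
\]
Summing over $i$, the positive term is at least $\frac{1}{128}n(\log n)^{1-\alpha}$ by \Cref{thm:algo-property}. It therefore suffices to show that $\sum_i \dist_G(a_i,t_i)+\sum_i\dist_G(b_i,t'_i)$ is at most, say, $\frac{1}{256}n(\log n)^{1-\alpha}$.

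This subtracted term is the main obstacle. The paths from \Cref{thm:edge-disjoint-path} are edge-disjoint, so their total length is at most $|E|=5n$. That bound is only $O(n)$, and since $(\log n)^{1-\alpha}\to\infty$, $O(n)=o(n(\log n)^{1-\alpha})$. Hence for large $n$ the subtracted term is negligible. Concretely, $\sum_i \dist_G(a_i,t_i)\le \sum_i |E(Q_G(a_i,t_i))|\le |E|$, because the paths in $\mathcal{Q}^G_A$ are edge-disjoint, and likewise for $\mathcal{Q}^G_B$. The total subtraction is thus at most $2|E|=10n$.

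If the edge-disjointness in \Cref{thm:edge-disjoint-path} were only meant with respect to capacities, we would instead use the crude bound $\dist_G(a_i,t_i)\le r+\mathrm{diam}$. That bound is only $O(\log n)$ per path and is insufficient, so we rely on genuine edge-disjointness as stated.

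Combining the pieces,
\[
\congestion_G(\mathcal{D})\ \ge\ \frac{\frac{1}{128}n(\log n)^{1-\alpha}-10n}{O(n)}\ =\ \Omega\left((\log n)^{1-\alpha}\right)
\]
for all sufficiently large $n$.
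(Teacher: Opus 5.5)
Your argument is the paper's own proof: \Cref{lem:cong>=dist/c(E)}, the triangle inequality through $a_i,b_i$, charging the connector-path lengths against the size of the graph, and $c(E)=O(n)$ for the denominator. The one thing to fix is your remark about edge-disjointness, which has it backwards. The paper's write-up makes the same slip with $|E|$.

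The paths of \Cref{thm:edge-disjoint-path} cannot be genuinely edge-disjoint. Every terminal has degree $10$, and each nontrivial path ends with a terminal-incident edge. So at most $10k+k=11k\le 11n/2^{(\log n)^{\alpha}}\ll m=\lfloor 10n/(\log n)^{\alpha}\rfloor$ edge-disjoint paths can end at terminals. This is exactly why the capacities around $T$ were raised. Likewise, the max-flow proof of that lemma only produces paths that use each edge $e$ at most $c(e)$ times.

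Under that reading your step still works, with no need for any per-path bound. You have $\sum_i |E(Q_G(a_i,t_i))|\le \sum_{e\in E}c(e)=c(E)=O(n)$ by \Cref{lem: c(E)=O(n)}, and the same holds for the $b_i$-paths. So replace $|E|$ by $c(E)$ in the subtracted term and drop the claim that the capacitated version is insufficient. The rest of your argument goes through unchanged.
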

\begin{proof}
By \Cref{lem:cong>=dist/c(E)}, we have $\congestion_G(\mathcal{D})\geq \sum_{t,t'\in T} \mathcal{D}_{t,t'}\cdot \dist_G(t,t')/c(E)$.
By the definition of demand, $\sum_{t,t'\in T} \mathcal{D}_{t,t'}\cdot \dist_G(t,t') = \sum_{(t_i,t_i')\in \mathcal{P}_T} \dist_G(t_i,t_i')$. In the following, we first prove that 
\[\sum_{(t_i,t_i')\in \mathcal{P}_T} \dist_G(t_i,t_i') = \Omega(n(\log n)^{1-\alpha}).
\]

For each $1\le i\le m$, by the construction in \Cref{subsec:demand}, there exist a path $Q_G(a_i,t_i)\in \mathcal{Q}_A^G$ and a path $Q_G(b_i,t'_i)\in \mathcal{Q}_B^G$. By triangle inequality, 
    \[\begin{split}
    \dist_G(a_i,b_i)
    & \leq \dist_G(a_i,t_i) + \dist_G(t_i,t'_i) + \dist_G(t'_i,b_i)\\
    & \leq 
    |Q_G(a_i,t_i)| + \dist_G(t_i,t'_i) + |Q_G(b_i,t'_i)|.
    \end{split}
    \]
    Summing over all pairs $(a_i,b_i)\in \mathcal{P}$, we get that
    \[
    \sum_{(a_i,b_i)\in \mathcal{P}}\dist_G(a_i,b_i)
    \leq \sum_{i=1}^m|Q_G(a_i,t_i)| + \sum_{(t_i,t_i')\in \mathcal{P}_T}\dist_G(t_i,t'_i) + \sum_{i=1}^m|Q_G(b_i,t'_i)|    \]
    Since paths in $\mathcal{Q}^G_A$ are edge-disjoint in $G$, $\sum_{i=1}^m |Q_G(a_i,t_i)| \leq |E|=O(n)$. Similarly, $\sum_{i=1}^m |Q_G(b_i,t'_i)|=O(n)$). From \Cref{thm:algo-property}, $\sum_{(a_i,b_i)\in \mathcal{P}}\dist_G(a_i,b_i)=
    \Omega\left(n(\log n)^{1-\alpha}\right)$. Therefore,
    \[
    \begin{split}
    \sum_{(t_i,t'_i)\in \mathcal{P}_T}\dist_G(t_i,t'_i)
    & \geq 
    \sum_{(a_i,b_i)\in \mathcal{P}}\dist_G(a_i,b_i)-
    \sum_{(a_i,t_i)\in \mathcal{P}_A}|Q_G^A(a_i,t_i)| -\sum_{(b_i,t'_i)\in \mathcal{P}_B}|Q_G^B(b_i,t'_i)|\\
    & \geq \Omega\left(n(\log n)^{1-\alpha}\right) - O(n) - O(n) = \Omega\left(n(\log n)^{1-\alpha}\right).
    \end{split}
    \]   
    From \Cref{lem: c(E)=O(n)}, $c(E)=O(n)$. Therefore,
    \[
    \congestion_G(\mathcal{D})\geq 
    \frac{\sum_{(t,t')\in \mathcal{P}_T} \dist_G(t,t')}{c(E)}\geq \frac{\Omega(n(\log n)^{1-\alpha})}{O(n)}=\Omega\left((\log n)^{1-\alpha}\right).
    \]    
\end{proof}

\begin{lemma}\label{lem:totalcost-lb}
For any valid partition $\mathcal{F}$ of $V$ with  $|\mathcal{F}|=O(\frac{n}{2^{(\log n)^\beta}})$, the demand $\mathcal{D}$ constructed by \Cref{subsec:demand} satisfies that 
$\congestion_H(\mathcal{D})\leq 3$, where $H$ is the contracted graph of $G$ with respect to $\fset$.
\end{lemma}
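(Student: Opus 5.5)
The plan is to exhibit a flow in $H$ that routes $\dset$ with congestion at most $3$. For each terminal pair $(t_i,t'_i)\in\mathcal{P}_T$ we send one unit of flow along the concatenated walk $Q_H(t_i,t'_i)$ built in \Cref{subsec:demand}, namely $Q_H(t_i,a_i)$ followed by $Q_H(a_i,b_i)$ and then $Q_H(b_i,t'_i)$. If this walk repeats a node, we shortcut it to a simple path; this only removes edges and so can only lower the load. Since $\dset_{t,t'}$ equals the number of occurrences of $(t,t')$ in $\mathcal{P}_T$, this flow routes $\dset$ exactly.

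To bound the congestion, we split the collection of walks into the three families $\mathcal{Q}^H_A$, $\mathcal{Q}$ and $\mathcal{Q}^H_B$. Each family on its own loads every edge of $H$ by at most its capacity, and adding the three bounds gives total load at most $3c_H(e)$ on every edge $e$.

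\begin{itemize}
\item \textbf{Family $\mathcal{Q}$.} By \Cref{thm:algo-property}, the paths $Q_H(a_i,b_i)$ are edge-disjoint in $H$, where parallel edges count as distinct. So each edge carries at most one unit.
\item \textbf{Family $\mathcal{Q}^H_A$.} By \Cref{thm:edge-disjoint-path}, the $G$-paths $Q_G(a_i,t_i)$ are edge-disjoint in $G$. Each edge of $H$ is a $G$-edge between two distinct clusters, and the induced $H$-path of a $G$-path consists exactly of its inter-cluster edges. Hence the induced paths are edge-disjoint in $H$.
\item \textbf{Family $\mathcal{Q}^H_B$.} The same argument as for $\mathcal{Q}^H_A$ applies.
\end{itemize}

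The main subtlety is capacities. The construction sets $c(e)\le 1$, and some edges of $E_T$ get capacity $\min\{c'(e),1\}$, which could be $0$ or fractional. Unit flow on edge-disjoint paths gives load $1$ per edge, so we need $c_H(e)\ge 1$ on every edge actually used. I would handle this by interpreting \Cref{thm:edge-disjoint-path} as producing paths that are capacity-respecting: each path uses an edge $e$ with load at most $c(e)$, which is what its deferred proof via expansion and the capacity surgery is designed to guarantee. I would read \Cref{thm:algo-property} the same way, so that its paths use only edges that can carry one unit. The preliminaries also assume $c(e)\ge 1$ and $c_H(F,F')\ge1$.

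With this reading, each family contributes congestion at most $1$. The concatenated flow then has congestion at most $3$, and therefore $\congestion_H(\dset)\le 3$.
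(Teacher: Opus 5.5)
Your proposal is correct and is essentially the paper's proof. You send one unit along each concatenated path, and each of the three families $\mathcal{Q}^H_A$, $\mathcal{Q}$, $\mathcal{Q}^H_B$ contributes congestion at most $1$ on every edge, giving at most $3$. The capacity worry you raise comes from what looks like a typo in (\ref{eq:edge cap}): the intended definition is almost certainly $c(e)=\max\{c'(e),1\}$, as the step $\hat{c}(e)=c'(e)\leq c(e)$ in \Cref{sec:pf edge disjoint path} indicates. Then $c(e)\geq 1$ everywhere, which is all the paper's proof uses, so no reinterpretation of \Cref{thm:algo-property} is needed. Your capacity-respecting reading of \Cref{thm:edge-disjoint-path} (each edge used at most $c(e)$ times, so total load at most $2c(e)+1\leq 3c(e)$) is exactly what its max-flow proof delivers.
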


\begin{proof}
Consider graph $H$ contracted by partition $\mathcal{F}$. 
Let $\mathcal{P}_T=\{(t_i,t_i')\}_{i=1}^m$ and $\mathcal{Q}_T=\{Q_H(t_i,t_i')\}_{i=1}^m$ be the construction in \Cref{subsec:demand}. We now construct a multicommodity flow that routes $\dset$ in graph $H$ as follows: for each terminal pair $(t_i,t_i')\in \mathcal{P}_T$, send $1$ unit of flow along path $Q_H(t_i,t_i')$. 
Therefore $\congestion_H(\mathcal{D})\leq \congestion_H(F')$. In the following, we prove that $\congestion_H(F')\leq 3$.

Since each edge $e\in E(H)$ has capacity $c(e)\geq 1$, by the definition of congestion of flow, 
\[
\congestion_H(F') = \max_{e\in E(H)}\frac{\sum_{P\in \mathcal{P}^H:e\in P}f'_P}{c(e)}\leq \max_{e\in E} \sum_{P\in \mathcal{Q}_T:e\in P} 1.
\]
By the construction in \Cref{subsec:demand}, all paths in $\mathcal{Q}_A^G$ (resp. $\mathcal{Q}_B^G$) are edge-disjoint. Since $H$ is a contracted graph of $G$, all paths in $\mathcal{Q}_A^H$ (resp. $\mathcal{Q}_B^H$) are edge-disjoint.
By \Cref{thm:algo-property}, all paths in $\mathcal{Q}$ are edge-disjoint. Thus for every edge $e\in E(H)$, there exist at most one path in $\mathcal{Q}_A^H$ containing $e$, at most one path in $\mathcal{Q}_B^H$ containing $e$ and at most one path in $\mathcal{Q}$ containing $e$. Therefore, there are at most three paths in $\mathcal{Q}_T$ containing $e$ which implies that $\congestion_H(F')\leq 3$.
\end{proof}

\subsection{Proof of \Cref{thm:edge-disjoint-path}}\label{sec:pf edge disjoint path}

In this section, we give the proof of \Cref{thm:edge-disjoint-path}. Let $G=(V,E,c)$ be the graph constructed in \Cref{subsec:edge cap} and $T$ be the terminal set of size $k$. We first define a
subgraph $\hat{G}=(V,\hat{E},\hat{c})$ of $G$ as follows.
\begin{itemize}
    \item $\hat{E}=E\setminus E'$ where $E'= E(G[V_T])\setminus E_T$. In other words, $\hat{G}$ is obtained from $G$ by removing all edges in induced subgraph $G[V_T]$ except edges in $E_T$.
    \item  For edge capacities, $\hat{c}(e)=c'(e)\leq c(e)$ if $e\in E_T$ and $\hat{c}(e)=c(e)=1$ for other cases. Note that the subgraph $\hat{G}[V_T]$ of $\hat{G}$ induced by $V_T$ satisfies $\hat{G}[V_T]=G_T=(V_T,E_T,c')$.
\end{itemize}
Since $\hat{G}\subseteq G$ is a subgraph, any cut $S\subseteq V$ has $c(\delta_G(S))\geq \hat{c}(\delta_{\hat{G}}(S))$.

\begin{lemma}\label{lem:S in hat{G}} 
For any subset $S\subseteq V$ containing $T$ of size $|S|\leq 0.49n$, $\hat{c}(\delta_{\hat{G}}(S))\geq \hat{c}(\delta_{\hat{G}}(S\cup V_T))$.
\end{lemma}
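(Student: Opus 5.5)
The plan is a direct charging argument that uses the tree structure of $G_T=\hat G[V_T]$. First I will record the structure of $\hat G$ around $V_T=B_r$.
\begin{itemize}
\item Inside $V_T$, the only edges are those of $E_T$. These form a forest in which every $u\in N_i$ with $1\le i\le r$ has exactly one parent edge $e_u$ going to $N_{i-1}$, and the roots are the terminals.
\item An edge of $G$ with one endpoint in $N_i$ has its other endpoint in $N_{i-1}\cup N_i\cup N_{i+1}$. Hence the only edges of $\hat G$ leaving $V_T$ are the edges of $E_{r+1}=E_G(N_r,N_{r+1})$. Each such edge lies neither in $E_T$ nor in $G[V_T]$, so it has capacity $1$.
\item By induction along the construction ($i=r,r-1,\ldots,1$), for every $u\in V_T\setminus T$,
\[
\hat c(e_u)=c'(e_u)=\sum_{x\in N_r\cap D(u)}|E_x|,
\]
where $D(u)$ denotes the set of descendants of $u$ in the forest, including $u$ itself. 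In words, the capacity of the parent edge of $u$ equals the number of $E_{r+1}$-edges hanging below $u$.
\end{itemize}

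Next I will compare the two cuts edge by edge. Set $S'=S\cup V_T$. Every edge of $\delta_{\hat G}(S')$ falls into one of two classes.
\begin{itemize}
\item Edges with both endpoints outside $V_T$. These cross $S$ exactly when they cross $S'$, so they contribute equally to both cuts.
\item Edges $(x,y)\in E_{r+1}$ with $x\in N_r$ and $y\in N_{r+1}\setminus S$. If $x\in S$, the edge already lies in $\delta_{\hat G}(S)$. The only genuinely new edges are therefore the edges $(x,y)$ with $x\in N_r\setminus S$ and $y\notin S$.
\end{itemize}
Edges of $\delta_{\hat G}(S)$ that disappear in $S'$ can only help. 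These are the tree edges crossing $S$ and the edges from $S\setminus V_T$ into $V_T\setminus S$. So it suffices to charge the new edges to tree edges of $\delta_{\hat G}(S)$, with total charge at most the total capacity of the tree edges charged.

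The charging goes as follows. Take $x\in N_r\setminus S$ with at least one new edge at $x$. Walk from $x$ up its tree path toward its root terminal, which lies in $S$ because $T\subseteq S$. Let $u=u(x)$ be the last vertex not in $S$ before the walk first enters $S$. Then $e_u$ joins $u\notin S$ to its parent, which is in $S$, so $e_u\in\delta_{\hat G}(S)$. Moreover every vertex on the path from $x$ to $u$ lies outside $S$.

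Let $U$ be the set of all vertices $u(x)$ obtained this way. The main point to verify is that for distinct $u,u'\in U$ the sets $N_r\cap D(u)$ and $N_r\cap D(u')$ of $N_r$-vertices charged to them are disjoint. Suppose $u'$ is a proper ancestor of $u$ and some $x$ maps to $u'$. If $x$ were a descendant of $u$, the path from $x$ to $u'$ would pass through the parent of $u$, which lies in $S$. That contradicts the choice of $u'$, since the whole path from $x$ to $u(x)$ avoids $S$. Hence $x$ is not a descendant of $u$.

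Combining this disjointness with the capacity identity gives
\[
\#\{\text{new edges}\}\;\le\;\sum_{u\in U}\sum_{x\in N_r\cap D(u)}|E_x|\;=\;\sum_{u\in U}\hat c(e_u)\;\le\;\hat c\bigl(\delta_{\hat G}(S)\cap E_T\bigr).
\]
This shows $\hat c(\delta_{\hat G}(S))\ge\hat c(\delta_{\hat G}(S'))$. The hypothesis $|S|\le 0.49n$ is not needed for this inequality; it presumably matters only later, when the lemma is combined with expansion.

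I expect the only delicate points to be two things: confirming that no edge of $\hat G$ other than those in $E_{r+1}$ leaves $V_T$, and making the disjointness of the charged descendant sets precise. Both follow directly from the BFS-layer structure and from the definition of $u(x)$.
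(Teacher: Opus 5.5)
Your proof is correct, but it is organized differently from the paper's.

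\textbf{The paper's route.} The paper argues incrementally. It adds the vertices of $V_T\setminus S$ one at a time, in order of increasing layer $i=1,\dots,r$, so that when $u\in N_i$ is added its parent is already inside. Adding $u$ removes $e_u$ from the cut and toggles only the edges of $E_u$. The local identity $c'(e_u)=c'(E_u)$ then shows the cut value never increases.

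\textbf{Your route.} You compare $\delta_{\hat G}(S)$ and $\delta_{\hat G}(S\cup V_T)$ directly. You isolate the genuinely new boundary edges, namely the $E_{r+1}$-edges with both endpoints outside $S$. You integrate the local identity into the subtree identity $\hat c(e_u)=\sum_{x\in N_r\cap D(u)}|E_x|$. You then pay for the new edges with the topmost crossing tree edge $e_{u(x)}$ on each root path.

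\textbf{Trade-offs.} The paper's argument is shorter and needs only the one-step identity. Its processing order silently takes care of vertices whose parents lie outside $S$. Yours needs the induction for the subtree identity and the map $x\mapsto u(x)$. In return it is order-free, and it makes explicit which tree edges of $\delta_{\hat G}(S)$ absorb the new boundary. It also shows that the edges from $S\setminus V_T$ into $V_T\setminus S$ are pure slack. You are right that $|S|\le 0.49n$ plays no role; the paper's proof does not use it either.

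\textbf{One step to fix.} It does not affect the conclusion, but the claim that the sets $N_r\cap D(u)$, $u\in U$, are pairwise disjoint is false in general. Suppose $u,u'\in U$ with $u'$ a proper ancestor of $u$. This can happen when the parent of $u$ lies in $S$ but $u'$ does not, and some $x'\in N_r\setminus S$ below $u'$ but not below $u$ is charged to $u'$. Then $N_r\cap D(u)\subseteq N_r\cap D(u')$.

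What your argument actually shows is that the fibres $\{x : u(x)=u\}$ are disjoint. That is automatic, since $u(\cdot)$ is a function.

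Your displayed chain needs no disjointness at all. The first inequality only uses $x\in N_r\cap D(u(x))$, and overcounting is harmless in an upper bound. The last inequality only uses that the edges $e_u$, $u\in U$, are pairwise distinct members of $\delta_{\hat G}(S)\cap E_T$. That holds because each non-terminal vertex of $V_T$ has a unique parent edge. So you can simply drop the disjointness step.
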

\begin{proof}
    If $S$ contains $V_T$, it is correct trivially.
    Suppose $V_T\setminus S\neq \emptyset$. Set up $S^\dagger=S\cup V_T$. 
    In the following, we prove 
    $\hat{c}(\delta_{\hat{G}}(S))\geq \hat{c}(\delta_{\hat{G}}(S^\dagger))$ by induction. Let $S^\dagger=S\cup V_T$ be constructed in the following way: 
    \begin{enumerate}
        \item Set $S'\leftarrow S$ initially.
        \item Go through each $i\in \{1,2,\ldots, r\}$,
        \begin{enumerate}
            \item For each node $u\in N_{i}\setminus S'$, add $u$ into $S'$ (e.g. $S'\leftarrow S'\cup \{u\}$).
        \end{enumerate}
    \end{enumerate}
    Finally we achieve that $S'=S\cup V_T=S^\dagger$. 
    It suffices to show that each time a node is added into $S'$, the cut value does not increase. Consider any iteration $1\le i^*\le r$. Assume $N_{j}\subseteq S'$ for any $j< i^*$ and there exists node $u\in V_{i^*}\setminus S'$. Let $S''=S'\cup \{u\}$. 
    From the construction of $\hat{G}$, for each node $u\in N_{i}$, 
    \begin{itemize}
        \item $u$ has exactly one incident edge $e_u=(u,u')$ with $u'\in N_{i-1}$, $e_u\in \delta_{\hat{G}}(S')$, $e_u\notin \delta_{\hat{G}}(S'')$; and
        \item if we denote $E_u=\{(u,u'')\in E_T:u''\in N_{i+1}\}$ as the set of all other incident edges, then each $e'\in E_u$ is either $e'\in \delta_{\hat{G}}(S')$ or $e'\in \delta_{\hat{G}}(S'')$ and not both at the same time. 
    \end{itemize}
    Therefore, by the definition of edge capacities,
    \[
    \hat{c}(\delta_{\hat{G}}(S'))-\hat{c}(\delta_{\hat{G}}(S''))\geq \hat{c}(e_u)-\hat{c}(E_u)=c'(e_u)-c'(E_u)=0.
    \]
    Similarly, for each added node, we can prove that $\hat{c}(\delta_{\hat{G}}(S))\geq \hat{c}(\delta_{\hat{G}}(S^\dagger))$ where $S^\dagger=S\cup V_T$.
\end{proof}
\begin{lemma}\label{lem:cut-lb}
    Given graph $G=(V,E,c)$ and terminal set $T$ of size $k$, for any subset $S\subseteq V$ containing $T$ of size $|S|\leq 0.49n$, its cut value $c(\delta_G(S))=\sum_{e\in \delta_G(S)}c(e) \geq m=10n/(\log n)^{\alpha}$.
\end{lemma}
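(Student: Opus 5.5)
We reduce to the subgraph $\hat G$ and then to the enlarged set $S^\dagger=S\cup V_T$. Since $\hat G\subseteq G$ with $\hat c\le c$, we have $c(\delta_G(S))\ge \hat c(\delta_{\hat G}(S))$. By \Cref{lem:S in hat{G}}, $\hat c(\delta_{\hat G}(S))\ge \hat c(\delta_{\hat G}(S^\dagger))$. It therefore suffices to show $\hat c(\delta_{\hat G}(S^\dagger))\ge m$.

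We split into two cases according to how much of the cut of $S^\dagger$ is supplied by the boundary of $V_T$.

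\textbf{Case 1.} Suppose $N_{r+1}\cap S^\dagger=\emptyset$, or more generally that few vertices of $N_{r+1}$ lie in $S$. The edges of $E_{r+1}=E_G(N_r,N_{r+1})$ are not inside $G[V_T]$, so they survive in $\hat G$ with capacity $1$. Each one whose $N_{r+1}$-endpoint is outside $S^\dagger$ lies in $\delta_{\hat G}(S^\dagger)$. We need $|E_{r+1}|\ge m$. Since $|B_{r+1}|\ge 2m$ and $|B_r|<2m$, we get $|N_{r+1}|=|B_{r+1}|-|B_r|$. This is not immediately $\ge m$, so instead we use expansion: $|\delta_G(B_r)|\ge 0.5\cdot d\cdot |B_r|\ge 5\cdot 0.2m=m$, by $\Phi(G)\ge 1/2$ and $|B_r|\ge 0.2m$, with $|B_r|\le 0.5n$. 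Every edge of $\delta_G(B_r)$ goes to $N_{r+1}$, so $|E_{r+1}|\ge m$.

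\textbf{Case 2.} In general, write $X=S^\dagger\setminus V_T$. Then
\[
\delta_{\hat G}(S^\dagger)\supseteq \bigl(E_{r+1}\cap \delta(S^\dagger)\bigr)\cup \bigl(\delta_G(S^\dagger)\setminus E(G[V_T])\bigr),
\]
and all edges outside $G[V_T]$ keep capacity $1$. Edges inside $G[V_T]$ never cross $\delta(S^\dagger)$, because $V_T\subseteq S^\dagger$. Hence $\hat c(\delta_{\hat G}(S^\dagger))=|\delta_G(S^\dagger)|$ exactly. Since $|S^\dagger|\le |S|+|B_r|\le 0.49n+2m\le 0.5n$ for large $n$, expansion (\Cref{lemma:G'-expander}) gives
\[
|\delta_G(S^\dagger)|\ge 0.5\cdot 10\cdot |S^\dagger|\ge 5|B_r|\ge m.
\]
This covers Case 1 as well, so the case split is unnecessary.

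The key observation, and the step to verify carefully, is that after absorbing $V_T$ no modified-capacity edge crosses the cut, so the capacities reduce to plain unit counts. One also needs $|B_r|\ge 0.2m$, which holds because $B_{r+1}\subseteq B_r\cup N(B_r)$ and $|N(B_r)|\le 10|B_r|$ give $|B_{r+1}|\le 11|B_r|$, hence $|B_r|\ge 2m/11\ge 0.18m$. This yields $5|B_r|\ge 0.9m$. To get the full bound $m$, count more carefully: $|\delta_G(S^\dagger)|\ge 5|S^\dagger|\ge 5|B_r|$, and if $|B_r|<0.2m$ then the neighbors $N(B_r)\supseteq N_{r+1}$ have size at least $2m-|B_r|\ge 1.8m$. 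Since $|N(S^\dagger)|\ge|N(B_r)\setminus S^\dagger|$, either $X$ contains many of these vertices, making $|S^\dagger|$ larger, or many of them lie outside $S^\dagger$, each contributing a cut edge. Either way the cut is at least $m$, possibly after adjusting constants.
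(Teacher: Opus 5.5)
Your proof is correct and is essentially the paper's argument. The paper treats $V_T\subseteq S$ separately (there the cut already has unit capacities) and otherwise uses exactly your chain $c(\delta_G(S))\ge\hat c(\delta_{\hat G}(S))\ge\hat c(\delta_{\hat G}(S^\dagger))=|\delta_G(S^\dagger)|\ge 5|S^\dagger|\ge m$, so merging the two cases via \Cref{lem:S in hat{G}} is fine.

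The worry in your last paragraph is unnecessary. Since $r\ge 1$, every vertex of $N_r$ has a neighbor in $N_{r-1}$, so $|N_{r+1}|\le 9|N_r|$, hence $|B_{r+1}|\le 10|B_r|$ and $|B_r|\ge 0.2m$. Your fallback dichotomy also works without changing constants: if $|S^\dagger|<0.2m$, then at least $2m-|S^\dagger|>1.8m$ vertices of $N_{r+1}$ lie outside $S^\dagger$, and each contributes a distinct cut edge into $N_r$.
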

\begin{proof}
    Consider any subset $S$ containing $T$ of size $|S|\leq 0.49n$.
    Suppose $S$ contains $V_T$, thus 
    $|S|\geq |V_T|=|B_r|\geq 0.2m$. 
    Since $\Phi(G)\geq \frac{1}{2}$ and degree $d=10$, $|\delta_G(S)|\geq 5|S|\geq m$. Since $c(e)=1$ for each $e\in \delta_G(S)$, $c(\delta_G(S)) \geq |\delta_G(S)|\geq m$.
    
    Consider the case where $V_T\setminus S\neq \emptyset$, set $S^\dagger=S\cup V_T$. We work on the subgraph $\hat{G}=(V,\hat{E},\hat{c})$ of $G$ which satisties that $c(\delta_G(S))\geq \hat{c}(\delta_{\hat{G}}(S))$.
    By \Cref{lem:S in hat{G}}, $\hat{c}(\delta_{\hat{G}}(S))\geq  \hat{c}(\delta_{\hat{G}}(S^\dagger))$.
    Since $S^\dagger$ contains $V_T$, each edge $e=(u,v)\in \delta_{\hat{G}}(S^\dagger)$ satisfies that $u\notin B_r$ and $v\notin B_r$, which implies that $e\notin E_T$. Therefore $\delta_{\hat{G}}(S^\dagger)=\delta_G(S^\dagger)$. Additionally, since each $e\in \delta_{\hat{G}}(S^\dagger)$ has $\hat{c}(e)=c(e)=1$, $\hat{c}(\delta_{\hat{G}}(S^\dagger))= |\delta_{\hat{G}}(S^\dagger)|=|\delta_G(S^\dagger)|$.
    Given that $0.2m\leq |V_T|\leq |S^\dagger|\leq |S|+|V_T|\leq 0.49n+2m\leq 0.5n$ and $\Phi(G)\geq \frac{1}{2}$, we have
    \[
    c(\delta_G(S))
    \geq \hat{c}(\delta_{\hat{G}}(S))
    \geq \hat{c}(\delta_{\hat{G}}(S^\dagger))
    = |\delta_{\hat{G}}(S^\dagger)|
    = |\delta_G(S^\dagger)|
    \geq 5|S^\dagger|\geq m = \frac{10n}{(\log n)^{\alpha}}.
    \] 
\end{proof}
\begin{proof}[Proof of \Cref{thm:edge-disjoint-path}]
    Given graph $G=(V,E,c)$, terminal set $T$ and set $U=\{u_1,u_2,\ldots, u_m\}$, set up a new graph $G^*=(V^*,E^*,c)$ where $V^*=V\cup \{s^*,t^*\}$ and $E^*=E\cup E^*_s\cup E^*_t$ with $E^*_s=\{(s^*, t):t\in T\}$ and $E^*_t=\{(t^*, u):u\in U\}$. Each edge $e\in E^*_s$ has $c(e)=+\infty$ and each edge $e\in E^*_t$ has $c(e)=1$. For all remaining edge $e\in E$, its capacity $c(e)$ remains the same as in $G$.

    Consider any cut $(S^*,V^*\setminus S^*)$ in $G^*$ with $s\in S^*$ and $t\notin S^*$. It suffices to prove that for any cut $S^*\supseteq T$, its cut value $|\delta_{G^*}(S^*)|\geq m$. Set up $S=S^*\setminus \{s\}$. 
    \begin{itemize}
        \item Suppose $U \cap S = \emptyset$. Thus $\delta_{G^*}(S^*)=\delta_G(S)$ and $c(\delta_{G^*}(S^*))=c(\delta_G(S))$.
        If $S$ has size $|S|\leq 0.49n$, by \Cref{lem:cut-lb}, it has $c(\delta_{G}(S))\geq m$. 
        Suppose $S$ has size $|S|> 0.49n$. Since $U\subseteq V\setminus S$ and $\Phi(G)\geq \frac{1}{2}$, we have
        $c(\delta_{G}(S))\geq |\delta_{G}(S)|\geq 50\min \{|S|,|V\setminus S|\}\geq 50\min \{0.49n, |U|\} = 50m$.
        \item Suppose $U\cap S\neq \emptyset$. Thus we can partition $U=U_s\sqcup U_t$ such that $U_s\subseteq S$ and $U_t\subseteq V\setminus S$, so $c(\delta_{G^*}(S^*))=\sum_{u\in U_s}c(u,t^*) + c(\delta_G(S))=|U_s| + c(\delta_G(S))$. If  $|S|\leq 0.49n$, by \Cref{lem:cut-lb}, it has $c(\delta_{G}(S))\geq m$. 
        Suppose $S$ has size $|S|> 0.49n$. Since $U_t\subseteq V\setminus S$ and $\Phi(G)\geq \frac{1}{2}$, we have
        $c(\delta_{G}(S))\geq |\delta_{G}(S)|\geq 50\min \{|S|,|V\setminus S|\}\geq 50\min \{0.49n, |U_t|\} = 50|U_t|$. Therefore, 
        \[
        c(\delta_{G^*}(S^*))=|U_s| + c(\delta_G(S)) \geq |U_s| + 50|U_t| \geq |U|=  m.
        \]
    \end{itemize}
    Therefore, there always exist a flow in $G^*$ from $U$ to $T$ with value at least $m$ and we can obtain $m$ edge-disjoint paths in $G$ each connecting a distinct vertex of $U$ to some terminal.
    \end{proof}

\section{Proof of \Cref{thm:algo-property}}\label{sec:pf algo-property}
In this section, we give a formal proof of the following lemma:
\algoproperty*
Let $G=(V,E)$ be a graph generated by random graph $G'$. Note that $G$ is a regular graph with degree $d=10$.
Given any parameter $\varepsilon\in (0,1)$, we have $\alpha=\frac{\log 5}{\log 5+1-\varepsilon}$ and  $\beta=\frac{\log 5-0.5\varepsilon}{\log 5+1-\varepsilon}=\alpha-\frac{0.5}{\log 5+1-\varepsilon}\cdot \varepsilon$.
Consider any valid partition $\mathcal{F}$ of $V$ with size $|\mathcal{F}|=f=O(\frac{n}{2^{(\log n)^\beta}})$ such that distinct terminals of $T$ belong to different sets in $\mathcal{F}$. Let $H$ be the graph contracted by $\mathcal{F}$. 

Set up two parameters $\eta$ and $\lambda$ such that $\eta=1-\varepsilon/2<1$ and $2-\lambda = 2^\eta = 2^{1-\varepsilon/2}<2$. Set $s=2^{(\log n)^{\beta}}$ and 
\begin{equation}\label{eq:def of l} 
    l
    = \left\lfloor \frac{(1-\beta)\log \log n -7}{1-\varepsilon/2}\right\rfloor
    = 
    \left\lfloor \frac{\log\log n}{\log 5 + 1 -\varepsilon} - \frac{7}{1-\varepsilon/2}\right\rfloor.
\end{equation}
For each $i\in \{1,2,\ldots, l\}$, set $s_i=s^{(2-\lambda)^i}=s^{2^{\eta \cdot i}}$. Let $s^*=s_l=s^{(2-\lambda)^l}$. 
By the definition of $l$, we have
\begin{equation}\label{eq:5^l}
    5^l\leq 5^A=2^{\alpha \log \log n - 4\log 10}=(\log n)^\alpha/10^4\leq (\log n)^\alpha/2000
\end{equation}
\begin{equation}\label{eq:s^* ub}
    s^*=s^{(2-\lambda)^{l}}\leq s^{(2-\lambda)^{B}}=2^{(\log n)^{\beta}\cdot 2^{\eta B}}
    =\left(2^{(\log n)^{\beta}\cdot 2^{(1-\beta)\log \log n}}\right)^{1/128}
    =n^{1/128}\leq 
0.04n^{0.01}
\end{equation}
\begin{equation}\label{eq:s^* lb}
      s^*=s^{(2-\lambda)^{l}}\geq s^{(2-\lambda)^{B-1}} = 
    (s^{(2-\lambda)^{B}})^{\frac{1}{2-\lambda}} \geq (n^{\frac{1}{128}})^{\frac{1}{2}}\geq n^{1/256}.
\end{equation}
where $A= \frac{\alpha\log\log n-4\log 10}{\log 5}$, $B= \frac{(1-\beta)\log\log n-7}{1-\varepsilon/2}$ and $B-1\leq l\leq B\leq A$ since $\frac{\alpha}{\log 5}=\frac{1-\beta}{1-\varepsilon/2}=\frac{1}{\log 5 + 1 -\varepsilon}$.

\begin{definition}
    Given $u,v\in V$, we say $(u,v)$ is a typical node pair if
    $\dist_G(u,u')\geq \frac{1}{128}(\log n)$ and $\dist_H(u,u')\leq (\log n)^\alpha
    /2000$, where graph $H$ is contracted by partition $\mathcal{F}$.
\end{definition}

\textbf{Organization of this section.} The remainder of this section is dedicated to the proof of \Cref{thm:algo-property} by utilizing typical node pairs.
In \Cref{subsec:find one typical pair}, we design a clustering algorithm which returns a typical node pair with high probability.
Then in \Cref{subsec:find all typical pairs}, building on the algorithm in \Cref{subsec:find one typical pair}, we design a construction algorithm by selecting typical pairs sequentially and complete the proof of \Cref{thm:algo-property}.
 We finish the remaining proofs in \Cref{subsec:remaining pf}.

\subsection{Finding one typical pair}\label{subsec:find one typical pair}

Our algorithm perform $m=O(\frac{n}{2^{(\log n)^\beta}})$ iterations, where in each iteration we find one typical pair $u,v$ together with its $G$-path $Q_G(u,v)$ and $H$-path $Q_H(u,v)$, and then we remove the edges of $Q_H(u,v)$ and all their endpoints from $G$, and the next iteration is continued on the remaining graphs $G$ and $H$. For those endpoints removed in previous iterations, we call them \emph{useless vertices}.

\subsubsection{Friendship between clusters}
Fix one valid partition $\mathcal{F}$ of $V$.
Given any useless subset $V_U\subseteq V$ of size $|V_U|\leq n/200$, we say that a pair $F,F'\in \mathcal{F}$ of clusters are \textit{friendly} (denoted as $F\sim F'$) if $E_G(F\setminus V_U,F'\setminus V_U)\neq \emptyset$. Set $F\sim F$ by default. For each cluster $F\in \mathcal{F}$, set up
\begin{itemize}
    \item friend set $N^{+}(F)=\{F'\in \mathcal{F}: F'\sim F\}$,
    \item friendly node set $U^+(F)=\bigcup_{F'\in \mathcal{F}: F'\sim F} F'$,
    \item large friend set $N_*^+(F)=\{F'\in \mathcal{F}:F'\sim F, |F'|\geq s/200\}$,
    \item large friendly node set $U_*^+(F)=\bigcup_{F'\in N_*^+(F)}F'$.
\end{itemize}
For each node $u\in V$,
    set up $U^+(u)=U^+(F(u))$, $N^+(u)=N^+(F(u))$, $U^+_*(u)=U^+_*(F(u))$ and $N^+_*(u)=N^+_*(F(u))$.

\begin{definition}
    Given $G=(V,E)$ generated by $G'$, useless node set $V_U\subseteq V$ and valid partition $\mathcal{F}$, we say
    a node $u\in V$ is \textit{good-0} if its friendly node set satisfies that $|U^+(u)|> s_1$. Otherwise, we say $u$ is \textit{bad-0}.
\end{definition}
We can show the following lemma and its proof is deferred to \Cref{subsec:pf bad-0 nodes}.
\begin{lemma}\label{lem:bad-0_upperbound}
    Given parameter $\lambda\in (0,1)$, with probability $1-o(1)$, for any useless subset $V_U$ with size $|V_U|\leq n/200$ and any valid partition $\mathcal{F}$ of $V$, the total number of \textit{bad-0} nodes in $G$ is at most $0.24n$. 
\end{lemma}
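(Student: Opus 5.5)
We plan to prove \Cref{lem:bad-0_upperbound} with a first-moment (union bound) argument over all possible ``certificates of badness'', using only the randomness of the $10$ perfect matchings. The reason this should work is that $s_1=s^{2-\lambda}=2^{(\log n)^{\beta}(2-\lambda)}=n^{o(1)}$. So a bad-$0$ node $u$ is very constrained: every $G$-edge leaving $u$ that avoids $V_U$ must land in $U^+(u)$. This set is a union of whole clusters and has size at most $s_1\ll n$. For a single edge of a uniformly random matching, the chance of landing in a prescribed set of size $\le s_1$ is about $s_1/n=n^{-1+o(1)}$. The price we pay is a union bound over the adversarial objects $V_U$ and $\fset$. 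The whole difficulty is to make the probability saving beat that counting cost.

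\textbf{Step 1 (discarding nodes that see $V_U$).} The set $V_U$ has at most $n/200$ vertices, each of degree $10$. Hence at most $10\cdot n/200=n/20$ nodes have a $G$-neighbour in $V_U$. Suppose there are more than $0.24n$ bad-$0$ nodes. Then there is a set $X$ of at least $0.19n$ bad-$0$ nodes, none adjacent to $V_U$. For every $u\in X$, all $10$ neighbours of $u$ lie in $U^+(u)$.

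\textbf{Step 2 (the certificate).} Define the symmetric ``allowed'' relation $R$ by $(u,v)\in R$ iff $v\in U^+(F(u))$. For each $u\in X$, the allowed partners form a set of size $\le s_1$, since $u$ is bad-$0$. A certificate therefore consists of $V_U$, $X$, and enough of the partition to determine $R$ on $X$: the cluster of each node of $X$, and the union of friend clusters (size $\le s_1$) attached to each such cluster.

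Given a certificate, we reveal the $10$ matchings one pair at a time, starting from vertices of $X$. Each matching contributes at least $|X|/2$ edges incident to $X$. Each such edge falls into an allowed set of size $\le s_1$ with probability at most $s_1/(n-2|X|)\le 2s_1/n$. This gives a probability bound of roughly
\[
\left(\tfrac{2s_1}{n}\right)^{10\cdot |X|/2}\le 2^{-(0.95-o(1))\,n\log n}.
\]

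\textbf{Step 3 (counting certificates; the main obstacle).} The number of choices of $V_U$ and $X$ is only $2^{O(n)}$. The dangerous term is the partition. Naively there are up to $(|\fset|)^n\approx 2^{n(\log n-(\log n)^\beta)}$ partitions, which overwhelms the $2^{-0.95n\log n}$ saving. We therefore must encode $R$ far more cheaply than the full partition, and this is the step we expect to be the main obstacle. We will try two routes.

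\begin{enumerate}
\item We describe only the clusters meeting $X$ together with their friend unions, say $\mathcal{C}_X$. Since $|\fset|=O(n/s)$, the clusters of size $<s/200$ cover at most $O(n)/200$ vertices, and these can be absorbed into the discarded set of Step 1. Every remaining relevant cluster then has size $\ge s/200$, so a friend union of size $\le s_1$ is a union of at most $200\,s^{1-\lambda}=n^{o(1)}$ named large clusters. Naming these clusters costs $2^{o(n\log n)}$ bits in total, because there are only $O(n/s)$ large clusters. The membership of vertices in large clusters still costs about $\log(n/s)$ bits per vertex. To pay for it, we will charge against the matchings on the vertices of the friend unions themselves, not only those of $X$. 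The aim is to raise the exponent from $0.95$ to something exceeding $1$.
\item Alternatively, we exploit that $s_1=s^{2-\lambda}$ is strictly less than $s^2$. A large bad cluster $F$ ($|F|\ge s/200$) sends about $8|F|$ edges to friends, since in the random graph every set of size $\le n^{0.01}$ spans at most $(1+o(1))$ times its size in edges. These edges reach at most $200s^{1-\lambda}$ large clusters. Each landing in one of $O(n/s)$ large clusters is an event of probability about $(\text{size})/n$. We then union bound over tuples of cluster indices, which cost $\log(n/s)$ bits each, not over vertices, which cost $\log n$ bits each.
\end{enumerate}

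We will first attempt route (1), carefully balancing the $(s_1/n)$-per-edge saving against the $\log(n/s)$-per-vertex encoding cost. If the constants do not close, we will fall back to route (2).
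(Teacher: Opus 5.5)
Your certificate-and-exposure framework is the right one, but the proposal stops exactly where the argument has to close. By your own accounting, Steps 1--2 save only $2^{-(0.95-o(1))n\log n}$, while the union bound must range over the partition, which costs $|\mathcal{F}|^n=2^{(1-o(1))n\log n}$. Neither route in Step 3 is carried out, and neither looks viable as stated:
\begin{itemize}
\item In route (1) there is no evident source for the extra saving you want from ``the matchings on the friend unions''. A vertex of a friend cluster that is not itself bad-$0$ has unconstrained edges, and edges into $X$ are already counted from the $X$ side.
\item In route (2), a landing probability like $|F'|/n$ only makes sense once the clusters are fixed. So you still pay $n\log(n/s)=(1-o(1))n\log n$ bits to name $\mathcal{F}$.
\end{itemize}
As written, there is no proof.

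The missing point is that encoding $\mathcal{F}$ is not the real obstacle; the deficit comes from your global discard in Step 1. Removing every node with a $G$-neighbour in $V_U$, i.e.\ a neighbour in any of the ten matchings, costs $0.05n$. You must also remove $V_U$ itself, and route (1) would additionally force you to remove neighbours of small clusters, lowering the exponent further.

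Do the exclusion inside each matching instead:
\begin{itemize}
\item Let $V_S$ be the set of vertices in clusters of size at most $s/200$, so $|V_S|\le n/200$.
\item For each $M_i$, first expose the partners of all vertices of $V_U\cup V_S$. This uses up at most $0.02n$ vertices of $M_i$.
\item Every bad-$0$ node $u$ still unmatched, and its eventual partner $v$, lie outside $V_U\cup V_S$. Hence $F(v)$ is a large friend of $F(u)$ (possibly $F(u)$ itself), and $v\in U^+_*(u)$, a set of size at most $|U^+(u)|\le s_1$.
\item More than $0.22n$ bad-$0$ nodes survive. Sequential exposure therefore gives at least $0.11n$ factors of at most $s_1/\Omega(n)=n^{-1+o(1)}$ per matching, hence $n^{-(1.1-o(1))n}$ over the ten independent matchings.
\end{itemize}

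This beats the crude count of $n^n\cdot 2^n\cdot 2^{|\mathcal{F}|}\cdot n^{o(n)}=n^{(1+o(1))n}$ tuples. A tuple consists of $\mathcal{F}$, $V_U$, the set $C$ of bad-$0$ clusters, and, for each $F\in C$, a set $N^+_*(F)$ of at most $200s^{1-\lambda}$ large clusters. The last component is your route-(1) naming and costs only $|\mathcal{F}|^{O(|\mathcal{F}|s^{1-\lambda})}=n^{o(n)}$ in total.

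This is the paper's proof. It pays the full $n\log n$ bits for the partition and wins because $10\cdot 0.22/2=1.1>1$, whereas your global discard yields $10\cdot 0.19/2=0.95<1$.
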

Fix an arbitrary subset $\mathcal{F'}\subseteq \mathcal{F}$. Given any partition $\mathcal{H}'$ of $\mathcal{F'}$, we say $\mathcal{H}'$ is valid-i for some $i\in \{1,2,\ldots l\}$, if each cluster $H\in \mathcal{H}'$ satisfies that $|V(H)|=|\bigcup_{F\in H} F|> s_i$. 
Given partition $\mathcal{H}'$ of $\mathcal{F}'$, we say that a pair $H,H'\in\mathcal{H}'$ of clusters are \textit{friendly} (denoted as $H\sim H'$) if $E_G(V(H)\setminus V_U, V(H')\setminus V_U)$, e.g. there exist two nodes $u\in V(H)\setminus V_U, u'\in V(H')\setminus V_U$ such that $(u,u')\in E$. We set $H\sim H$ by default. 

Given a sequence of subsets $\{\mathcal{F}^1,\ldots, \mathcal{F}^l\}$ of $\mathcal{F}$ and a sequence of partitions 
$\{\mathcal{H}^1,\ldots, \mathcal{H}^l\}$ such that $\mathcal{H}^{i}$ is a valid-i partition of $\mathcal{F}^i$ for each $i\in \{1,2,\ldots l\}$. Go through each partition $\mathcal{H}^i$: 
\begin{itemize}
    \item For each $F\in \mathcal{F}^i$, let $H^i(F)\in \mathcal{H}^i$ be the cluster in $\mathcal{H}^i$ containing $F$. For each $u\in V(\mathcal{F}^i)$, let $H^i(u)=H^i(F(u))$.
    \item For each $H\in \mathcal{H}^i$, set up friend set $N^{+}_i(H)=\{H'\in \mathcal{H}^i: H'\sim H\}$
    and friendly node set $U^{+}_i(H)=\bigcup_{H'\in N^+_i(H)} V(H')$.
    \item For each $u\in V(\mathcal{F}^i)$,
    set up $U^+_i(u)=U^+_i(H^i(u))$ and $N_i^+(u)=N_i^+(H^i(u))$.
\end{itemize}
For convenience, set $U_0^+(u)=U^+(u)$.

\begin{definition}
    Given $G=(V,E)$ generated by $G'$, useless node set $V_U\subseteq V$ and valid partition $\mathcal{F}$, go through each $i\in \{1,2,\ldots l\}$. Given any subset $\mathcal{F}^i\subseteq \mathcal{F}$ and valid-i partition $\mathcal{H}^i$ of $\mathcal{F}^i$, we say a node $u\in V(\mathcal{F}^i)$ is good-$i$ if its friendly node set satisfies that $|U^+_i(u)|>s_{i+1}$. Otherwise, we say $u$ is \textit{bad-$i$}. 
\end{definition}

\subsubsection{Clustering algorithm}\label{subsec:clusteringalgo}
We are now ready to describe the clustering algorithm. 

\begin{description}\label{xxx}
    \item[Clustering at level one:] Given  $G=(V,E)$, useless subset $V_U$ and valid partition $\mathcal{F}$ of $V$,
        \begin{enumerate}
    \item Set up an empty set $\mathcal{H}^{1}=\emptyset$. Initially all clusters in $\mathcal{F}$ are unmarked.
    \item Pick an arbitrary good-0 and unmarked cluster $F\in \mathcal{F}$. If all its neighbors $F'\sim F$ are unmarked, set up subset $K=N^+(F)\subseteq \mathcal{F}$, add it into $\mathcal{H}^{1}$ and mark all clusters in $N^+(F)$. Repeat this step until all good-1 and unmarked clusters have at least one neighbor marked.
    \item For each good-0 and unmarked cluster $F$, it has at least one marked neighbor $F'\sim F$ such that there exists $K\in \mathcal{H}^1$ containing $F'$. Add $F$ into $K$.
    \item Let $C^0\subseteq \mathcal{F}$ be the set of all remaining unmarked clusters, which are bad-0. Set up $\mathcal{F}^{1}=\mathcal{F}\setminus C^0$.
\end{enumerate}
    The algorithm ends up with subset $C^0\subseteq \mathcal{F}$ and partition $\mathcal{H}^1$ of $\mathcal{F}^1=\mathcal{F}\setminus C^0$.
    \item[Clustering at higher levels:] For each integer $i\in \{1,2,\ldots l-1\}$, given subset $\mathcal{F}^{i}$ and partition $\mathcal{H}^{i}$ of $\mathcal{F}^i$, 
        \begin{enumerate}
    \item Set up an empty set $\mathcal{H}^{i+1}=\emptyset$. Initially all clusters in $\mathcal{H}^i$ are unmarked.
    \item Pick an arbitrary good-i and unmarked cluster $H\in \mathcal{H}^i$. If all its neighbors $H'\sim H$ are unmarked, set up subset $K= \bigcup_{H'\in N^+_i(H)}H'\subseteq \mathcal{F}^i$, add it into $\mathcal{H}^{i+1}$ and mark all clusters in $N_i^+(H)$. Repeat this step until all good-i and unmarked clusters have at least one neighbor marked.
    \item For each good-i and unmarked cluster $H$, it has a marked neighbor $H'\sim H$ such that there exists $K\in \mathcal{H}^{i+1}$ containing $H'\subseteq K$. Add all clusters of $H$ into $K$.
    \item Let $C^i\subseteq \mathcal{F}^i$ be the union of remaining unmarked clusters, which are bad-i. Set up $\mathcal{F}^{i+1}=\mathcal{F}^i\setminus C^i$.
\end{enumerate}
\end{description}
Given $G=(V,E)$, useless subset $V_U$ of size $|V_U|\leq n/200$ and valid partition $\mathcal{F}$ of $V$, let $\mathcal{A}^*=(C,\mathcal{H})$ be the result of above clustering algorithm with $C=\{C_0,C_1,\ldots C_{l-1}\}$ and $\mathcal{H}=\{\mathcal{H}^1,\mathcal{H}^2,\ldots \mathcal{H}^l\}$. For each integer $i\in \{0,1,2\ldots l-1\}$, $C^i\subseteq \mathcal{F}^i$ is a subset and $\mathcal{H}^{i+1}$ is a valid-i partition of $\mathcal{F}^{i+1}=\mathcal{F}^i\setminus C^i$. Note that $\{C^0, C^1,\ldots C^{l-1},\mathcal{F}^{l}\}$ form a partition of $\mathcal{F}$.

For convenience, we say a node $u\in V$ is bad if $u\in V(C^j)$ for some $j\in \{0,1,\ldots l-1\}$. Otherwise we say $u$ is good. For each $F\in \mathcal{F}$, we say $F$ is bad-j/bad if $F\in C^j$ for some $j\in \{0,1,\ldots l-1\}$.
Let $b_i=|V(C_i)|$ be the number of nodes in $V(C_i)$. We have the following theorem and the proof of \Cref{thm:result good whp} is deferred in \Cref{subsec:pf result good whp}.

\begin{restatable}{lem}{clusteringgoodwhp}\label{thm:result good whp}
    With probability $1-o(1)$, 
    for any useless subset $V_U\subseteq V$ with size $|V_U|\leq n/200$ and any valid partition $\mathcal{F}$ of $V$, the clustering result $\mathcal{A}^*$ of $G$ has at most $0.96n$ bad nodes, i.e. $\sum_{i=0}^{l-1}b_i \leq 0.96n$. 
\end{restatable}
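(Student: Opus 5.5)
We have to bound $\sum_{i=0}^{l-1} b_i$. \Cref{lem:bad-0_upperbound} already gives $b_0 \le 0.24n$. So the remaining task is to show that the higher-level bad sets $V(C^1),\dots,V(C^{l-1})$ together contain at most about $0.72n$ nodes. I would aim for a geometric decay, $b_i \le 0.24n\cdot 2^{-i}$ or something of that form. Since $\sum_i b_i$ need only be at most $0.96n$, even $b_i \le 0.72n/l$ per level would suffice.

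\textbf{Step 1 (growth: clusters in $\mathcal{H}^{i+1}$ really are valid).} Consider a cluster $K\in\mathcal{H}^{i+1}$ built around a good-$i$ center $H$. It contains all of $N_i^+(H)$, so $|V(K)|\ge |U_i^+(H)|>s_{i+1}$. Hence $\mathcal{H}^{i+1}$ is valid-$(i+1)$.

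\textbf{Step 2 (bad-$i$ nodes are rare).} A bad-$i$ cluster $H\in\mathcal{H}^i$ has $|V(H)|>s_i$ but $|U_i^+(H)|\le s_{i+1}=s_i^{2-\lambda}$. In other words, the non-useless part of $V(H)$, a set of size about $s_i$ in a random union of $10$ matchings, has all its $G$-edges landing in fewer than $s_i^{1-\lambda}$ other valid-$i$ clusters, each of size more than $s_i$. A "typical" set $X$ of size $x$ has about $10x$ out-edges, which should land in $\Theta(x)$ distinct clusters of total size at least about $x\cdot s_i\gg s_i^{2-\lambda}$. Being bad is therefore a strong concentration failure.

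I would bound it with a first-moment union bound over the random matchings. Fix a collection of bad clusters whose union $Y$ has size $y$, and a target set $Z\supseteq N(Y)$ made of at most $y\, s_i^{1-\lambda}/s_i$ clusters, so $|Z|\le y\, s_i^{1-\lambda}$. The probability that all $\approx 10y/2$ matching edges out of $Y$ land inside $Z$ is at most roughly $(|Z|/n)^{\Omega(y)}=(y\,s_i^{1-\lambda}/n)^{\Omega(y)}$.

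Next I count the choices. The clusters come from a partition of $V$ into at most $f=O(n/s)$ parts, and clusters of $\mathcal{H}^i$ have size more than $s_i\ge s$. So the number of ways to choose $Y$ and $Z$ is at most $\binom{n/s_i}{y/s_i}$ times a similar factor for $Z$, which is $2^{O((y/s_i)\log n)}$. The number of partitions $\mathcal{F}$ and useless sets $V_U$ is much larger, about $n^{n/s}$ and $\binom{n}{n/200}$. To avoid paying for all of them, I would union bound over the relevant sets $Y,Z$ directly rather than over partitions, mirroring the proof of \Cref{lem:bad-0_upperbound}. The useless set $V_U$ must be absorbed by requiring $y$ large (say $y\ge 0.01n$) and using that removing $n/200$ vertices from an expander (\Cref{lemma:G'-expander}) deletes only a constant fraction of $Y$'s boundary.

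With $y = c\,n$, the probability bound becomes $(c\,s_i^{1-\lambda})^{-\Omega(n)}$, which beats $2^{O(n\log n/s)}$ since $s=2^{(\log n)^\beta}$ is superpolylogarithmic. Note also $s^*\le n^{1/128}$ by \eqref{eq:s^* ub}, so $s_{i+1}\ll n$ throughout. This gives $b_i\le c_i n$ with, say, $c_i=0.72/2^{i+1}$, at every level $i\ge1$ with probability $1-o(1/l)$. A union bound over the $l=O(\log\log n)$ levels then finishes.

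\textbf{Main obstacle.} The hard part is Step 2's union bound, which must hold simultaneously for all partitions $\mathcal{F}$ and all $V_U$. The bad event must be phrased purely in terms of the vertex sets $Y$ and $N(Y)$ (a small-set-expansion statement: "$N(Y)$ is covered by few large blocks"), so that the entropy is controlled by $y/s_i$ rather than by $n$. I would first try to prove a clean lemma of the form: w.h.p., every $Y$ with $|Y|\ge 0.01n$ that is a union of blocks of size at least $s_i$ has $N(Y)$ meeting more than $|Y|s_i^{1-\lambda}/s_i$ of any family of size-at least-$s_i$ blocks. I would then derive \Cref{thm:result good whp} by summing $0.24n+\sum_{i\ge1}b_i\le0.96n$.
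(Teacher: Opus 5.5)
Your reduction to $b_0\le 0.24n$ via \Cref{lem:bad-0_upperbound} and your Step~1 are fine. Step~2, which carries the whole content of the lemma, does not work as set up, because merging all bad-$i$ clusters into one set $Y$ with a single target $Z$ discards exactly the information that makes badness unlikely. For $y=cn$ your own estimate only gives $|Z|/n\le c\,s_i^{1-\lambda}\gg 1$, and $Z$ can indeed be all of $V$, so $(|Z|/n)^{\Omega(y)}$ is vacuous. The expression $(c\,s_i^{1-\lambda})^{-\Omega(n)}$ you then write has the sign of the exponent flipped. Your proposed clean lemma fails for the same reason: it asks $N(Y)$ to meet more than $|Y|\,s_i^{-\lambda}=c\,n\,s_i^{-\lambda}$ blocks, but there are at most $n/s_i$ blocks in total.

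Badness is a \emph{per-cluster} constraint. Every non-useless neighbor of a node in a bad-$i$ cluster $H$ lies in that cluster's own set $U_i^+(H)$, of size at most $s_{i+1}\le s^*\le n^{1/128}$. The union of these sets over linearly many bad clusters, however, can cover $V$. Keeping per-cluster targets forces you to specify the cluster structure ($\mathcal{F}$, the hierarchy $\mathcal{H}$, and the friend sets). That costs $n^{\Theta(n)}$, not $2^{O((y/s_i)\log n)}$: the number of partitions of $V$ into $f=O(n/s)$ blocks alone is about $f^n=n^{(1-o(1))n}$, not $n^{n/s}$. The proof of \Cref{lem:bad-0_upperbound}, which you want to mirror, does pay $n^n$ for $\mathcal{F}$.

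The paper pays this entropy and beats it. It union bounds over all tuples $(\mathcal{F},V_U,\mathcal{A},\sim_C)$, of which there are at most $n^{(2+o(1))n}$. It exposes each of the $10$ matchings in stages:
\begin{enumerate}
\item the edges at $V_S\cup V_U$ (this is also how useless vertices are handled, with no expansion argument);
\item the edges at $V(C^0)$;
\item the edges at the still-unmatched nodes of $V(C^1),\dots,V(C^{l-1})$, in increasing level order.
\end{enumerate}
In the last stage each picked node must be matched into its own set of size at most $s^*\le 0.04n^{0.01}$, out of at least $0.04n$ unmatched nodes. This gives probability at most $(n^{-0.99})^{0.23n}$ per matching and $n^{-2.27n}$ in total. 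The bound wins only because the number of higher-level bad nodes is a large constant fraction of $n$. Assuming $\sum_i b_i>0.96n$ and $b_0\le 0.24n$ gives $\sum_{i\ge1}b_i\ge 0.72n$, and at least $0.46n$ of these are still unmatched after the first two stages.

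For the same reason, your per-level plan ($b_i\le 0.72n/2^{i+1}$ or $b_i\le 0.72n/l$, each with probability $1-o(1/l)$) fails under this kind of counting. At scale $y=o(n)$, a bound of the form $n^{-\Theta(y)}$ cannot pay for $n^{\Theta(n)}$ structural entropy, so all levels $i\ge 1$ have to be treated together.
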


\begin{figure}
    \centering
    \includegraphics[width=0.8\linewidth]{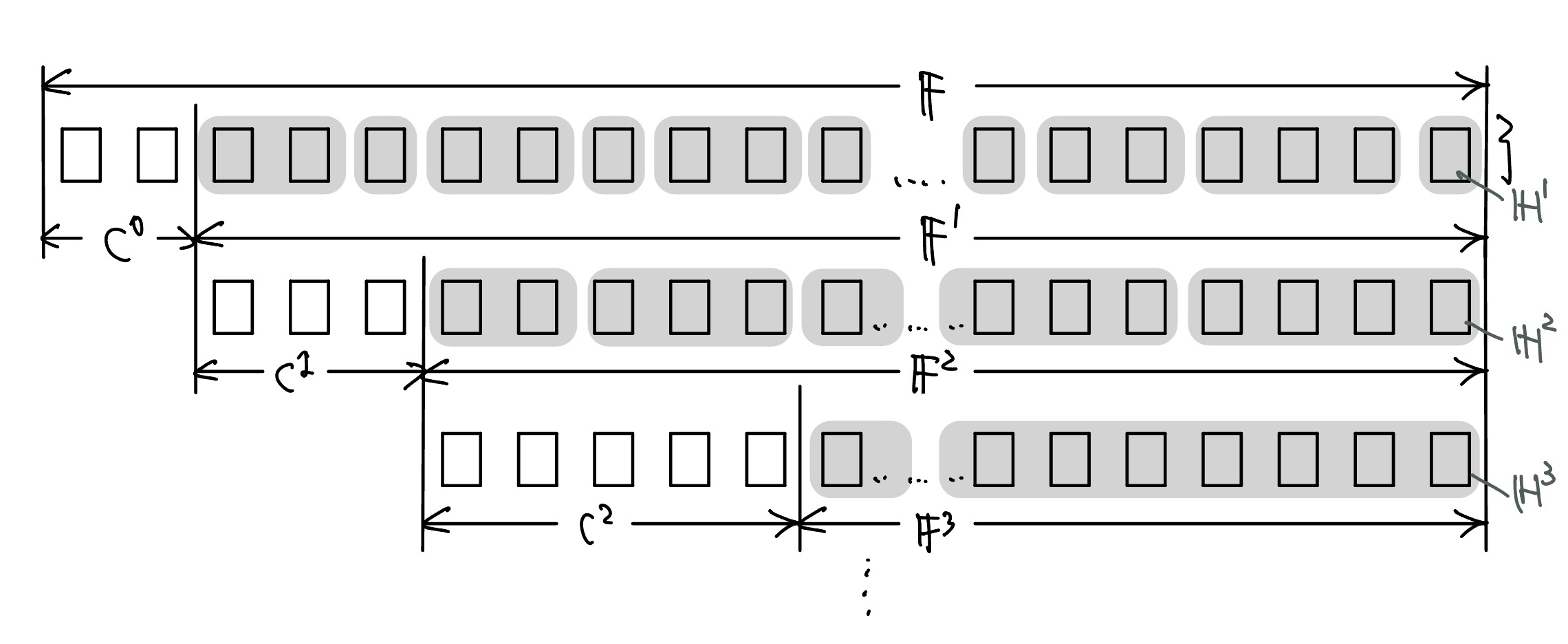}
    \caption{Proper partition result $\mathcal{A}=(C,\mathcal{H})=(C^0,\ldots C^{l-1}, \mathcal{H}^1,\ldots, \mathcal{H}^{l})$.}
    \label{fig: proper partition result}
\end{figure}
\subsubsection{Existence of typical node pairs}
From now on, we assume that the total number of bad nodes $\sum_{i=0}^{l-1}b_i \leq 0.96n$. In the following, we will show that there always exist two non-useless nodes which form a typical node pair. Firstly, we explore the properties of partition $\mathcal{H}_i$.

Let $\mathcal{H}^0=\mathcal{F}$. Go through each integer $i\in \{0,1,\ldots l\}$, given the partition $\mathcal{H}^i$, define the contracted graph $H^i=(\mathcal{H}^i,E^i,c)$ such that for any two distinct cluster $H,H'\in \mathcal{H}^i$, there exists edge $(H,H')\in E^i$ if they are friends, i.e. $H\sim H'$.
{Set $c(H,H')=|E_G(V(H)\setminus V_U,V(H')\setminus V_U)|\geq 1$}. Let $\dist_i(H,H')$ denote the shortest-path distance between two clusters $H,H'\in \mathcal{H}^i$ in graph $H^i$.
\begin{lemma}\label{lem:dist_H^l}
    Given clustering result $\mathcal{A}^*=(C,\mathcal{H})$, for each integer $i\in \{1,2,\ldots, l\}$, the partition $\mathcal{H}^i$ satisfies that for each cluster $H\in \mathcal{H}^i$,
    \[
    \dist_0(F,F')\leq 5^i-1 \quad \text{for any clusters}\quad F,F'\in H.
    \]
\end{lemma}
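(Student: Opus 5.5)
We argue by induction on $i$, proving the slightly stronger statement that for every $H\in\mathcal{H}^i$ there is a distinguished ``center'' $H^*\in\mathcal{H}^{i-1}$ with $H^*\subseteq H$ such that every level-$(i-1)$ cluster contained in $H$ is at distance at most $2$ from $H^*$ in the graph $H^{i-1}$. The point of the induction is that one step in $H^{i-1}$ can be simulated by a short walk in $H^0$. Concretely, we first record a simulation claim. If $H_1\sim H_2$ in $\mathcal{H}^{i-1}$, then there is an edge of $G$ between $V(H_1)\setminus V_U$ and $V(H_2)\setminus V_U$, say between $u\in F_1\subseteq H_1$ and $u'\in F_2\subseteq H_2$. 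Hence $F_1\sim F_2$ in $\mathcal{F}$, so $\dist_0(F_1,F_2)\le 1$. Note that $\dist_0$ is measured in $H^0$, whose edges are exactly the friendly pairs, so this is consistent.

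For the base case $i=1$, each $K\in\mathcal{H}^1$ is formed in step 2 as $N^+(F)$ for a center $F$, and then receives in step 3 clusters $F''$ adjacent to some marked $F'\in N^+(F)$. Thus every member of $K$ is within $\dist_0$ distance $2$ of $F$. Any two members are therefore within distance $4=5^1-1$ of each other.

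For the inductive step, assume every $H'\in\mathcal{H}^{i}$ has $\dist_0$-diameter at most $D_i:=5^i-1$. Take $K\in\mathcal{H}^{i+1}$ with center $H_0$. Every $H'$ in $K$ is joined to $H_0$ by a path $H_0,H_1,H_2$ (or a shorter one) in $H^i$. Given $F\in H'$ and $F'\in H''$, both in $K$, we bound $\dist_0(F,F')$ by following the $H^i$-path $H'\to H_1'\to H_0\to H_1''\to H''$. This path passes through at most $5$ level-$i$ clusters and uses at most $4$ inter-cluster edges. Each inter-cluster edge costs at most $1$ in $\dist_0$, by the simulation claim. Each traversal inside a cluster costs at most $D_i$, since we enter at some $F_a$ and leave at some $F_b$, both in the same level-$i$ cluster. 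The total is therefore at most $5D_i+4=5(5^i-1)+4=5^{i+1}-1$, exactly matching \Cref{fig: diam}.

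The main technical care lies in the bookkeeping of the clustering algorithm. In step 3 at level $i$, an unmarked good-$i$ cluster $H$ is attached to the $K$ containing a marked neighbor $H'$. That $H'$ is in $N_i^+(\text{center})$, so $H$ is at distance at most $2$ from the center in $H^i$. We must also check that friendship is measured with $V_U$ removed consistently, so that the edges used lie in $H^0$. Finally, paths of length less than $4$ only make the bound smaller, and if $F,F'$ lie in the same level-$i$ cluster the inductive hypothesis applies directly.
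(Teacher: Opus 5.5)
Your proof is correct and follows essentially the same route as the paper. Both argue by induction: any two level-$(i-1)$ clusters inside a level-$i$ cluster are within distance $4$ in $H^{i-1}$, since each is within $2$ of the center. Both then stitch together at most five intra-cluster $H^0$-paths of length at most $5^{i-1}-1$ using four connecting edges, giving $5(5^{i-1}-1)+4=5^i-1$.
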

\begin{proof} We prove by induction. For each cluster $H\in \mathcal{H}^1$, by the clustering algorithm in \Cref{subsec:clusteringalgo}, $H$ contains $N^+(F)$ for some good-0 cluster $F\in \mathcal{F}$. For each $F'\in H\setminus N^+(F)$, there exists $F''\in N^+(F)$ such that $F'\sim F''$. Therefore $\dist_0(F',F'')\leq 4 = 5-1$ for any two cluster $F',F''\in H$. 

Consider each integer $i\in \{2,\ldots ,l\}$ and assume the statement is true for the case $i-1$.
For cluster $K\in \mathcal{H}^i$, by the clustering algorithm in \Cref{subsec:clusteringalgo}, $K$ is constructed from $N^+_{i-1}(H)$ for some good-(i-1) cluster $H\in \mathcal{H}^{i-1}$. Specifically, $K$ contains $H'$ for each $H'\in N_{i-1}^+(H)$.  
For each $H''\in  H^{i-1}$ with
$H''\subseteq K$ and $H''\notin  N_{i-1}^+(H)$, there exists $H'\in N_{i-1}^+(H)$ such that $H'\sim H''$. Therefore, 
\begin{equation}\label{eq:length4_contracted}
    \dist_{i-1}(H',H'')\leq 4 \quad \text{for any cluster} \quad H',H''\subseteq K.
\end{equation}
By induction hypothesis, for each $H'\in \mathcal{H}^{i-1}$ with $H'\subseteq K$, 
\[
\dist_0(F',F'')\leq  5^{i-1}-1 \quad \text{for any cluster} \quad F',F''\in H'.
\]
Consider any two cluster $F',F''\in K$ with $F'\in H'\subseteq K$ and $F''\in H''\subseteq K$ for some $H',H''\in \mathcal{H}^{i-1}$. We can construct a path between $F'$ and $F''$ in $H^0$ as follows:
\begin{enumerate}
    \item Set up a path $P$ between $H'$ and $H''$ in $H^{i-1}$ of length at most $4$, which always exists by Inequality \ref{eq:length4_contracted}. 
    Assume $P=(H_1(=H'),H_2,H_3,H_4,H_5(=H''))$ (w.l.o.g).
    \item For each edge $(H_{j},H_{j+1})$ in path $P$ with $j\in \{1,2,3,4\}$, pick an arbitrary edge $(u_j,v_j)\in E_G(V(H_j)\setminus V_U,V(H_{j+1})\setminus V_U)$ in $G$. Let $u_j\in V(H_j)\setminus V_U$ and $v_j\in V(H_{j+1})\setminus V_U$.
    \begin{itemize}
        \item For cluster $H_1=H'$, set up a 
        shortest path $P'_{0,1}$ in $H^0$ between $F'$ and $F(u_1)$. 
        \item For cluster $H_5=H''$, set up a shortest path $P'_{4,5}$ in $H^0$ between $F(v_4)$ and $F''$. 
        \item For each cluster $H_j$ with $j\in \{2,3,4\}$, it contains $v_{j-1},u_j\in V(H_j)$. Set up a shortest path $P'_{{j-1},j}$ in $H^0$ between $F(v_{j-1})$ and $F(u_j)$. 
    \end{itemize}
    By induction hypothesis, each path $P'_{j,j+1}$ has length at most $5^{i-1}-1$ for each $j\in \{0,1,2,3,4\}$. Additionally, they are edge-disjoint in $H^0$.
    \item For two paths $P'_{j-1,j}$ and $P'_{j,j+1}$ with $j\in \{1,2,3,4\}$, concatenate them using the edge $(F(u_j),F(v_j))$.
    let $P'$ be the final path between $F'$ and $F''$ in graph $H^0$ and $P'$ has length at most $5*(5^{i-1}-1)+4=5^{i}-1$.
\end{enumerate}
Therefore we finish the proof that $\dist_0(F',F'')\leq  5^{i}-1$ for any two cluster $F',F''\in K$.
\end{proof}
\begin{corollary}\label{cor:dist_H^l}
    Given clustering result $\mathcal{A}^*=(C,\mathcal{H})$ with $\sum_{i=0}^{l-1}b_i\leq 0.96n$, it has $|V(\mathcal{F}^l)|\geq 0.04n$ and $\mathcal{H}^l$ is a valid-l partition of $\mathcal{F}^l$. By Inequality \ref{eq:5^l}, each cluster $H\in \mathcal{H}^l$ satisfies that
    \[
    \dist_0(F(u),F(u'))\leq 5^{l} \leq (\log n)^\alpha/2000\quad \text{for any nodes}\quad u,u'\in V(H).
    \]
    
\end{corollary}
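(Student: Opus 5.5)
This corollary has three parts: the count bound, validity of $\mathcal{H}^l$, and the distance bound. I would derive each from results already stated, in that order.

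\textbf{Count bound.} Recall that $\{C^0,C^1,\ldots,C^{l-1},\mathcal{F}^l\}$ is a partition of $\mathcal{F}$, as noted right after the clustering algorithm. Hence $V(\mathcal{F}^l)=V\setminus\bigcup_{j=0}^{l-1}V(C^j)$. The number of bad nodes is exactly $\sum_{i=0}^{l-1}b_i$, which is at most $0.96n$ by hypothesis (the hypothesis is supplied by \Cref{thm:result good whp}). Subtracting gives $|V(\mathcal{F}^l)|\geq n-0.96n=0.04n$.

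\textbf{Validity of $\mathcal{H}^l$.} I would argue by induction on the levels of the clustering algorithm that every cluster of $\mathcal{H}^{i+1}$ has more than $s_{i+1}$ nodes. A cluster $K\in\mathcal{H}^{i+1}$ is created in step 2 from a good-$i$ cluster $H$, and $K$ contains all of $N^+_i(H)$. Therefore
\[
|V(K)|\geq |U^+_i(H)|>s_{i+1}
\]
by the definition of good-$i$. Step 3 only adds clusters to existing sets $K$, so this bound persists. At level one the same argument uses good-$0$ and $U^+(F)$, giving $|V(K)|>s_1$. Taking $i=l-1$ shows that $\mathcal{H}^l$ is a valid-$l$ partition of $\mathcal{F}^l$.

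\textbf{Distance bound.} Take nodes $u,u'\in V(H)$ with $H\in\mathcal{H}^l$. Then $F(u)$ and $F(u')$ are clusters in $H$, so \Cref{lem:dist_H^l} with $i=l$ gives $\dist_0(F(u),F(u'))\leq 5^l-1\leq 5^l$. Inequality~\eqref{eq:5^l} bounds $5^l$ by $(\log n)^\alpha/2000$, which completes the chain.

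The only step needing some care is the validity claim, specifically that the size lower bound survives step 3 and carries over between levels. This follows immediately because clusters only grow when merged.
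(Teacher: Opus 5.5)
Your proposal is correct and takes the same route as the paper, which treats this corollary as immediate. The count bound comes from $\{C^0,\dots,C^{l-1},\mathcal{F}^l\}$ partitioning $\mathcal{F}$, validity of $\mathcal{H}^l$ comes from the good-$(l-1)$ condition $|U^+_{l-1}(H)|>s_l$ on the seed cluster (step~3 only enlarges clusters), and the distance bound is \Cref{lem:dist_H^l} at $i=l$ combined with Inequality~\ref{eq:5^l}.
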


\begin{restatable}{lem}{twonodeexists}\label{thm:two-node}
    Given graph $G=(V,E)$, any useless subset $V_U\subseteq V$ with size $|V_U|\leq n/200$, any valid partition $\mathcal{F}$ of $V$ and the clustering result $\mathcal{A}^*$ with $\sum_{i=0}^{l-1}b_i \leq 0.96n$, there exist two non-useless nodes $u,u'\in V\setminus V_U$ such that $(u,u')$ is a typical node pair.
\end{restatable}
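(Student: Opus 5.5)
By \Cref{cor:dist_H^l}, since $\sum_{i=0}^{l-1}b_i\le 0.96n$, we have $|V(\mathcal{F}^l)|\ge 0.04n$. Moreover $\mathcal{H}^l$ is a valid-$l$ partition, so every cluster $K\in\mathcal{H}^l$ has $|V(K)|>s_l=s^*\ge n^{1/256}$ by \eqref{eq:s^* lb}. Any two nodes $u,u'\in V(K)$ satisfy $\dist_H(u,u')=\dist_0(F(u),F(u'))\le (\log n)^\alpha/2000$. Strictly, $\dist_0$ is measured in $H^0$, which uses only edges avoiding $V_U$; but $H^0$ is a subgraph of $H$, so $\dist_H\le\dist_0$. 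Hence the $H$-distance condition of a typical pair holds automatically for any two nodes in a common level-$l$ cluster. The whole task reduces to finding two non-useless nodes $u,u'$ in one cluster $K$ with $\dist_G(u,u')\ge\frac{1}{128}\log n$.

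To get the $G$-distance, I would use that $G$ is $10$-regular. For any node $u$, the ball $B_G(u,\rho)$ has at most $10\cdot 9^{\rho-1}+1\le 10^{\rho+1}$ nodes. Take $\rho=\frac{1}{128}\log n$. Then $|B_G(u,\rho)|\le 10\cdot n^{\log 10/128}\le 10\,n^{0.026}$. So it suffices that $K$ contains more than $10^{\rho+1}$ non-useless nodes; then some non-useless $u'\in V(K)$ lies outside $B_G(u,\rho)$ for a fixed non-useless $u\in V(K)$.

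The main obstacle is that the lower bound $|V(K)|\ge n^{1/256}=n^{0.0039}$ is smaller than this ball size, and I must also make sure $K$ is not mostly useless. I see two remedies. The first is to track, in the definition of good-$i$, the sizes $s_{i}$ after removing $V_U$, i.e.\ count $|U^+_i(u)\setminus V_U|$. The second, which I would try first, is to exploit the growth structure: inside $K$ there is a nested chain of sub-clusters, each formed from $\Theta(s_i)$ friendly neighbours, and this gives $G$-diameter information directly. Concretely, I would show $V(K)\setminus V_U$ cannot fit in a $G$-ball of radius $\rho$ by choosing the constant in $\rho$ relative to $s^*$. With $|V(K)\setminus V_U|\ge n^{1/256}/2$ and ball size $\le 10^{\rho+1}$, we need $\rho\log 10<\frac{1}{256}\log n-O(1)$, so $\rho=\frac{\log n}{1024}$ works. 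That suffices if the constant $\frac1{128}$ in the definition is relaxed, since the final bound in \Cref{thm:algo-property} only needs $\Omega(\log n)$. Otherwise, I would use the $s^*$ upper bound \eqref{eq:s^* ub} together with the averaging below to find a larger cluster.

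For non-uselessness, I would average. We have $|V_U|\le n/200$ while $|V(\mathcal{F}^l)|\ge 0.04n$, so at least $0.035n$ good non-useless nodes lie in level-$l$ clusters. Hence some cluster $K$ has at least $\frac{0.035}{0.04}\cdot\frac{|V(K)|}{1}\cdot\frac{1}{1}$ fraction non-useless, in particular at least $0.8|V(K)|>0.8s^*$ non-useless nodes. Pick any non-useless $u\in V(K)$. Then $V(K)\setminus(V_U\cup B_G(u,\rho))\ne\emptyset$, and any $u'$ in that set gives the typical pair $(u,u')$.
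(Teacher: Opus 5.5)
Your argument is correct and is essentially the paper's proof. You average over $\mathcal{H}^l$ to get a level-$l$ cluster whose non-useless nodes make up at least a $7/8$ fraction; justify that ratio as $1-0.005n/|V(\mathcal{F}^l)|\ge 7/8$ rather than $0.035/0.04$. You then use $10$-regularity to find a non-useless node of that cluster outside a small $G$-ball around $u$, and the $H$-distance bound comes from \Cref{cor:dist_H^l} together with $H^0\subseteq H$, exactly as in the paper.

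You are also right that $s^*\ge n^{1/256}$ cannot yield the constant $\frac{1}{128}$ in the definition of a typical pair. The paper's own proof only obtains $\dist_G(u,u')\ge \log n/1280$, and that is the bound it actually uses in \Cref{thm:algo-property}. So your relaxation to $\log n/1024$ matches what the paper does, and is in fact slightly better.
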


\begin{proof}
    Given result $\mathcal{A}^*=(C,\mathcal{H})$ with $\sum_{i=0}^{l-1} b_i\leq 0.96n$, set up $V_G=V(\mathcal{F}^l)$ as the set of all good nodes and $V'_G=V_G\setminus V_U$. 
    Since $|V_U|\leq 0.005n$ and $|V_G|\geq 0.04n$, we have $|V'_G|/|V_G|\geq \frac{7}{8}$.
    For each cluster $H\in \mathcal{H}^l$, set up $V'(H)=V(H)\setminus V_U$. Since $|V'_G|=\sum_{H\in \mathcal{H}^l} |V'(H)|$ and $|V_G|=\sum_{H\in \mathcal{H}^l} |V(H)|$, there exists at least one cluster $H^*\in \mathcal{H}^l$ such that $|V'(H^*)|/|V(H^*)|\geq |V'_G|/|V_G|\geq \frac{7}{8}$. 

    Pick any node $u\in V'(H^*)$, there exists at least one node $u'\in V'(H^*)$ such that $\dist_G(u,u')\geq \lfloor\log_d|V'(H^*)|\rfloor - 1\geq \frac{1}{4}\log |V'(H^*)|$ since $d=10$. Given that $|V'(H^*)|\geq \frac{7}{8}|V(H^*)|$ and $|V(H^*)|\geq s_l = s^*$ since $\mathcal{H}^l$ is valid-$l$ partition, together with Inequality \ref{eq:s^* lb},  
    \[
    \dist_G(u,u')\geq \frac{1}{5}\log(s^*)\geq \frac{\log n}{1280}.
    \]
    On the other hand, since graph $H$ is contracted by $\mathcal{F}$ with $V_U=\emptyset$, $H^0$ is a subgraph of $H$. Together with \Cref{cor:dist_H^l},
    \[
    \dist_H(u,u')\leq 
    \dist_0(F(u),F(u'))\leq 5^{l} \leq (\log n)^{\alpha}/2000.
    \]
    Therefore there exists two non-useless nodes $u, u'\in V\setminus V_U$ such that $(u,u')$ is a typical node pair.
\end{proof}

\subsection{Finding all typical pairs}\label{subsec:find all typical pairs}

Fix graph $G=(V,E)$, any useless subset $V_U\subseteq V$ of size $|V_U|\leq n/200$, any valid partition $\mathcal{F}$ of $V$ and the clustering result $\mathcal{A}^*$ with $\sum_{i=0}^{l-1}b_i \leq 0.96n$.
Suppose $u,u'\in V\setminus V_U$ forms a typical node pair $(u,u')$. Pick one shortest path between $F(u)$ and $F(u')$ in graph $H^0$, denoted by $P^0_H(u,u')$. Now we construct its corresponding edge set $\mathcal{E}(u,u')\subseteq E$ as follows:
\begin{enumerate}
    \item Start from an empty edge set $\mathcal{E}(u,u')=\emptyset$.
    \item For each edge $(F,F')$ in path $Q_H^0(u,u')$, pick some $e\in E_G(F\setminus V_U,F'\setminus V_U)$ and set $\mathcal{E}(u,u')\leftarrow \mathcal{E}(u,u') \cup \{e\}$. 
\end{enumerate}
We say $\mathcal{E}(u,u')$ is one \textit{supporting} edge set of $P^0_H(u,u')$. Note that $$|\mathcal{E}(u,u')|=|P^0_H(u,u')|=\dist_0(F(u),F(u'))\leq (\log n)^{\alpha}/2000.$$ Now we design the construction algorithm and complete the proof of \Cref{thm:algo-property}.

\begin{enumerate}
    \item Set up an empty useless subset $V_U=\emptyset$ and empty collections $\mathcal{P}=\{\}$, $\mathcal{Q}=\{\}$ and $\mathcal{E}=\{\}$.
    \item For each integer $i\in \{1,2,\ldots, m=10n/(\log n)^{\alpha}\}$, apply the clustering algorithm with useless subset $V_U$. Let $\mathcal{A}^*$ be the result. {If $\sum_{i=0}^{l-1}b_i > 0.96n$, the algorithm fails and stops.} Otherwise,
    \begin{itemize}
        \item Pick one typical node pair $(u,u')$ according to \Cref{thm:two-node}. Set  $\mathcal{P}\leftarrow \mathcal{P} \cup \{(u,u')\}$.
        \item Pick one shortest path $P^0_H(u,u')$ in $H^0$ between $F(u)$ and $F(u')$ and set $\mathcal{Q}\leftarrow \mathcal{Q}\cup\{Q_H^0(u,u')\}$.
        \item Construct the supporting edge set $\mathcal{E}(u,u')$ of path $P^0_H(u,u')$. Set $\mathcal{\mathcal{E}} \leftarrow \mathcal{\mathcal{E}} \cup \mathcal{E}(u,u')$.
        \item Set $V_U\leftarrow V_U \cup V(\mathcal{\mathcal{E}}(u,u'))\cup \{u,u'\}$, where $V(\mathcal{\mathcal{E}}(u,u'))$ is the set of endpoints in $\mathcal{\mathcal{E}}(u,u')$.
    \end{itemize}
\end{enumerate}

\begin{proof}[Completing the proof of \Cref{thm:algo-property}]

Assume that given any valid partition $\mathcal{F}$ of $V$ and any useless subset $V_U\subseteq V$ with size $|V_U|\leq n/200$, the clustering result $\mathcal{A}^*$ of $G$ has bad nodes $\sum_{i=0}^{l-1}b_i\leq 0.96n$.
By \Cref{thm:result good whp}, it occurs with probability $1-o(1)$. 

Firstly we claim that under above assumption, the construction algorithm always succeeds.
Suppose the construction algorithm succeeds in the first $t-1$ rounds and fails at iteration $t$ for some $t\in \{1,2,\ldots, m\}$. At the beginning of iteration $t$, $V_U$ has size $|V_U|\leq \frac{10n}{(\log n)^{\alpha}}\cdot (5^l+2)\leq \frac{10n}{(\log n)^{\alpha}}
\cdot (\log n)^\alpha/2000 \leq n/200$ and the clustering algorithm returns result $\mathcal{A}^*$ with bad nodes $\sum_{i=0}^{l-1}>0.96n$. This contradicts to the assumption.

For each iteration $i\in \{1,2,\ldots, m\}$, it produces a pair of nodes $(u_i,u'_i)$ which, by \Cref{thm:two-node}, satisfies that $\dist_G(u_i,u_i')\geq \log n/1280$ and $\dist_0(F(u_i),F(u'_i))\leq (\log n)^\alpha/2000$. Thus the total distance in $G$ is $\sum_{(u,v)\in \mathcal{P}}\dist_G(u,v) \geq \frac{1}{128}n(\log n)^{1-\alpha}$.

Given any iteration $i,j\in \{1,2,\ldots, m\}$ with $i<j$ (w.l.o.g), we have $u_i\neq u_j$ and $u'_i\neq u'_j$ since $u_i,u'_i\in V_U$ at the end of iteration $i$. 
For any $(u_i,u'_i),(u_j,u'_j)\in \mathcal{P}$ with $i< j$, their two selected paths $P^0_H(u_i,u'_i)$ and $P^0_H(u_j,u'_j)$ are edge-disjoint in $H$ since their supporting edge sets $\mathcal{E}(u_i,u'_i)$ and $\mathcal{E}(u_j,u'_j)$ are disjoint. This is achieved since all endpoints of edges in $\mathcal{E}(u_i,u'_i)$ are useless at the end of iteration $i$. Additionally $u_i\neq u_j$ and $u'_i\neq u'_j$ since  $u_i,u'_i$ are useless at the end of iteration $i$.
\end{proof}

\subsection{Missing proofs}\label{subsec:remaining pf}

\subsubsection{Proof of \Cref{lem:bad-0_upperbound}}\label{subsec:pf bad-0 nodes}

\begin{proof}[Proof of Lemma \ref{lem:bad-0_upperbound}]\label{pf:bad-0_upperbound}
Fix any $G=(V,E)$ generated by random graph $G'$ and any useless subset $V_U$ of size $|V_U|\leq n/200$. Note that $G$ is a regular graph with degree $d=10$.
Recall that $s=2^{(\log n)^{\beta}}$ with $\beta=\frac{\log 5-0.5\varepsilon}{\log 5+1-\varepsilon}$. Consider any valid partition $\mathcal{F}$ of $V$ with size $|\mathcal{F}|=f=O(\frac{n}{2^{(\log n)^\beta}})$. We say a cluster $F\in \mathcal{F}$ is \textit{small} if $|F|\leq s/200$. Otherwise we say it is \textit{large}. Let $V_S$ be the set of all nodes in small clusters.    Clearly it has size $|V_S|\leq n/200$. Set up a subset $C\subseteq \mathcal{F}$ which contains bad-0 clusters. Assume there are more than $0.24n$ bad-0 nodes, $|V(C)| > 0.24n$.

\paragraph{Count the valid tuples.}
Given valid $\mathcal{F}$, useless subset $V_U$ and subset $C\subseteq \mathcal{F}$, we say the partial friendship $\sim_C$ on $C$ realizes $C$ if each node $u\in V(C)$ has its large friendly node set of size $|U^+_*(u)|\leq s^{2-\lambda}$. 
Recall that each $F'\in U^+_*(u)$ has size $|F'|\geq s/200$, thus there are at most $|N_*^+(u)|\leq s^{2-\lambda}/(s/200)=200s^{1-\lambda}$ and there are at most $\binom{f}{200s^{1-\lambda}}$ choices of $U^+_*(u)$ such that $|U^+_*(u)|\leq s^{2-\lambda}$. Therefore, given any valid partition $\mathcal{F}$,  useless subset $V_U$ and subset $C\subseteq \mathcal{F}$ with size $|V(C)|>0.24n$, there are at most
    \begin{equation}\label{eq:partial fri on C}
        \binom{f}{200s^{1-\lambda}}^{f}\leq (f)^{200f\cdot s^{1-\lambda}}= (f)^{O(\frac{n}{\lambda})} \leq n^{o(n)}
    \end{equation}
    partial friendships on $C$ such that each node $u\in V(C)$ satisfies that $|U_*^+(u)|\leq s^{2-\lambda}$.

    For convenience, we say a tuple $(\mathcal{F},V_U , C,\sim_{C})$ is valid if $\mathcal{F}$ is a valid partition of $V$,  $V_U$ is a useless subset of size $|V_U|\leq n/200$,
    subset $C\subseteq \mathcal{F}$ has $|V(C)| > 0.24n$ and $\sim_C$ is a partial friendship on $C$ which realizes $C$.
    Since there are at most $n^n$ choices of $\mathcal{F}$, at most $2^n\leq n^{o(n)}$ choices of $V_U$,
    and at most $2^{f}\leq n^{o(n)}$ choices of $C\subseteq \mathcal{F}$. Combined with Inequality \ref{eq:partial fri on C}, the total number of valid tuples is at most $n^n\cdot n^{o(n)}\cdot n^{o(n)} \cdot n^{o(n)}\leq n^{(1+o(1))n}$.

    \paragraph{Compute the probability of each valid tuple.}
    Recall that $G'=(V,\bigcup_{i=1}^{d}M_i)$, where $M_i$ denotes $i$-th perfect matching sampled independently and uniformly. For each perfect matching $M_i$, we view it as constructed in three steps $M_i=M_i^{(1)}\cup M_i^{(2)}\cup M_i^{(3)}$.\footnote{We firstly assume all small nodes and useless nodes are matched properly. In this way, all remaining nodes are in large clusters and we can upper bound all cases. Note that  $|U^+(u)|\leq s^{1.9}$ implies that the case for large case $|U_*^+(u)|\leq s^{1.9}$, the upper bound is proper. 
    } 
    \begin{enumerate}
        \item In the first step, we sample the matching for each node $u\in V_S$ in small clusters and each useless node $u\in V_U$, denoted by $M^{(1)}_i$. Since $|V_S|\leq n/200$ and $|V_U|\leq n/200$, at most $0.02n$ nodes get matched. Note that $M^{(1)}$ realized $\sim_C$ with probability at most $1$.
        \item In the second step, we sample the matching for unmatched nodes in $V(C)$, denoted by $M_i^{(2)}$. We now compute the probability that $M_i^{(2)}$ realizes the partial relationship $\sim_{C}$.
        We pick an unmatched node $u\in V(C)$ sequentially, sample an unmatched node $u'\in U_*^+(u)$ uniformly at random and add edge $(u,u')$ into $M^{(2)}_i$. 
        Since there are at most $0.02n$ nodes matched in the first step, there exist at least $0.98n$ nodes in $V$ unmatched and at least $0.22n$ nodes in $V(C)$ unmatched. Therefore, $M_i^{(2)}$ realizes the partial friendship $\sim_C$ with probability at most:
    \[
    \frac{s^{2-\lambda}}{0.98n}
    \cdot\frac{s^{2-\lambda}}{0.98n-2}
    \cdot\frac{s^{2-\lambda}}{0.98n-4}
    \cdot
    \ldots
    \cdot\frac{s^{2-\lambda}}{0.98n-2\cdot(0.11n-1)}
    \leq 
    \left(\frac{s^{2-\lambda}}{0.9n}\right)^{0.11n}
    \leq 
    \left(n^{-0.99}\right)^{0.11n}.
    \]
    \item In the last step, we sample all remaining edges arbitrarily, denoted by $M_i^{(3)}$.
    \end{enumerate}
    Since each matching is sampled independently and $d=10$, $\bigcup_{i=1}^{d=10} M_i^{(2)}$ realizes the partial friendship $\sim_{C}$
    with probability at most $(n^{-0.99})^{1.1n}\leq n^{-1.08n}$.

    \paragraph{Overall probability.}
    There are at most $n^{(1+o(1))n}$ valid tuples $(\mathcal{F},V_U, C, \sim_{C})$ and each occurs with probability at most $n^{-1.08n}$. Therefore the total probability is at most $n^{-0.08n+o(n)}=o(1)$.  
\end{proof}

\subsubsection{Proof of \Cref{thm:result good whp}}\label{subsec:pf result good whp}
In this section, we give the proof of \Cref{thm:result good whp}. We first define \textit{proper partition result}.

\begin{definition}[Proper partition result]
    Given graph $G=(V,E)$, useless subset $V_U$ and valid partition $\mathcal{F}$ of $V$, we use $\mathcal{A}=(C,\mathcal{H})=(C^0,\ldots C^{l-1}, \mathcal{H}^1,\ldots, \mathcal{H}^{l})$ to denote a proper partition result which satisfies that,
    \begin{itemize}
        \item $\{C^0,C^1,\ldots, C^{l-1}, \mathcal{F}^l\}$ forms a partition of $\mathcal{F}$.
        \item For each $i\in \{1,2\ldots, l\}$, $\mathcal{H}^i$ is a valid-i partition of $\mathcal{F}^i=\mathcal{F}\setminus (C^0 \sqcup\ldots\sqcup C^{i-1})$, which satisfies that $|V(H)| > s_{i}$ for each $H\in \mathcal{H}^i$.
        \item For each $i\in \{0,1,\ldots, l-1\}$, each node $u\in V(C_i)$ is bad-$i$, which satisfies that $|U^+_i(u)|\leq s_{i+1}$.
    \end{itemize}
    Given proper partition result $\mathcal{A}$, we say a node $u\in V$ is bad if $u\in C_j$ for some $j\in \{0,1,\ldots l-1\}$. Otherwise we say $u$ is good. Let $b_i=|C_i|$ be the size of set $C_i$. The total number of bad nodes in $\mathcal{A}$ is $\sum^{l-1}_{i=0}b_i$. Additionally, for each cluster $F\in \mathcal{F}$, we say $F$ is bad-i/bad if $H\in C^i$ for some $i\in \{0,1,\ldots l-1\}$.
\end{definition}

Clearly, any partition result from the algorithm in \Cref{subsec:clusteringalgo} is proper. In order to prove \Cref{thm:result good whp}, we prove the following theorem.

\begin{restatable}{lem}{badnodeub}\label{thm:bad_upperbound}
    With probability $1-o(1)$, the random graph $G'$ satisfies that, 
    for any useless subset $V_U\subseteq V$ with size $|V_U|\leq n/200$,
    any valid partition $\mathcal{F}$ of $V$, 
    and any proper partition result $\mathcal{A}=(C,\mathcal{H})$,
    there are at most $0.96n$ bad nodes in $\mathcal{A}$, i.e. $\sum_{i=0}^{l-1}b_i \leq 0.96n$.
\end{restatable}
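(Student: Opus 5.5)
Throughout, I take $b_i$ to mean $|V(C^i)|$, the number of nodes in bad-$i$ clusters, as in \Cref{thm:result good whp}. The plan is to extend the counting-plus-union-bound argument of \Cref{lem:bad-0_upperbound} from level $0$ to every level $i\in\{0,1,\ldots,l-1\}$. Suppose some proper partition result has $\sum_{i=0}^{l-1}b_i>0.96n$. Since $l=O(\log\log n)$, a weighted pigeonhole gives an index $i$ with $b_i\ge 0.96n\cdot w_i$ for weights $w_i$ summing to at most $1$. The simplest choice is to use the level-$0$ bound $0.24n$ from \Cref{lem:bad-0_upperbound} for $i=0$ and take $w_i\propto 1/i^2$ for $i\ge1$, so that some level carries at least $\Omega(n/i^2)$ bad nodes. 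I then union bound, separately for each $i$, over the witnesses of the event ``$b_i$ is large''.

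For fixed $i\ge1$, the witness is as follows. Since $\mathcal{H}^i$ is valid-$i$, each cluster $H\in\mathcal{H}^i$ has more than $s_i$ nodes, so $|\mathcal{H}^i|\le n/s_i$. Each bad-$i$ node $u$ satisfies $|U_i^+(u)|\le s_{i+1}=s_i^{2-\lambda}$, so its bad-$i$ cluster has at most $s_i^{2-\lambda}/s_i=s_i^{1-\lambda}$ friends in $\mathcal{H}^i$. A witness consists of $V_U$, the partition $\mathcal{F}$, the sets $C^0,\ldots,C^{i-1}$ and partitions $\mathcal{H}^1,\ldots,\mathcal{H}^i$ (these can be encoded by $\mathcal{H}^i$ together with the set of clusters in $C^i$), and for each cluster in $C^i$ its list of friends. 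Counting these:
\begin{itemize}
\item the choices of $\mathcal{F}$ contribute $n^n$;
\item the choices of $\mathcal{H}^i$ as a partition of $\mathcal{F}^i$ contribute at most $f^{f}\le n^{o(n)}$, since $f=O(n/s)$;
\item the friend lists contribute $(n/s_i)^{s_i^{1-\lambda}\cdot n/s_i}=2^{O(n\log n/s_i^{\lambda})}=n^{o(n)}$, because $s_i\ge s=2^{(\log n)^\beta}$ makes $s_i^{\lambda}$ superpolylogarithmic.
\end{itemize}
The total number of witnesses is therefore $n^{(1+o(1))n}$, exactly as in level $0$.

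For the probability bound, I mimic the three-stage matching exposure in \Cref{lem:bad-0_upperbound}. First I expose the edges at useless nodes; these are at most $n/200$. Then, for each of the $d=10$ matchings, I match the still-unmatched nodes of $V(C^i)$ one at a time. Each partner must lie in $U_i^+(u)$, a set of size at most $s_{i+1}\le s^*\le n^{1/128}$ by \eqref{eq:s^* ub}, out of $\Omega(n)$ unmatched candidates. With $\Omega(b_i)$ sequential steps per matching, this gives probability at most $(n^{-0.99})^{\Omega(b_i)\cdot 10}$.

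The main obstacle is that the friendship relations $\sim$ among the clusters in $\mathcal{H}^i$ themselves depend on the random edges, so the partition is not fixed in advance. This is handled by noting that the witness fixes the entire combinatorial structure, and the event in question is simply ``every $G$-edge leaving $V(C^i)\setminus V_U$ lands in the prescribed friendly node sets''. That is a monotone constraint on the matching edges at those nodes, so the sequential bound applies verbatim.

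The step that needs care is matching the exponents: the probability bound must beat $n^{(1+o(1))n}$. This requires $10\cdot 0.99\cdot c\cdot b_i\ge(1+\delta)n$, i.e.\ $b_i\ge c'n$ for a constant $c'$ of roughly $0.11$ to $0.12$. A weight of $1/i^2$ does not give this at every level. So I would instead argue that $b_i\ge 0.24n$ is impossible for each $i$, using the same computation as level $0$ but with $s_{i+1}$ in place of $s^{2-\lambda}$; only a single level contributes a node count of order $0.24n$, and a union over the $l$ levels costs a negligible factor $l$. For the aggregate statement, I would apply the counting to $C^{\le i}=C^0\cup\cdots\cup C^{i}$ simultaneously. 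Every bad node at level $j$ has a friendly set of size at most $s_{j+1}\le s^*\le n^{1/128}$, so the union $\bigcup_j V(C^j)$ can be treated in one exposure: match each unmatched bad node into its friendly set of size at most $n^{1/128}$. The witness count remains $n^{(1+o(1))n}$, with a factor $l^{n}$ for the level labels, and the probability is at most $(n^{-0.99})^{10\cdot 0.45\cdot\sum_i b_i}$. Once $\sum_i b_i>0.96n$, this is $n^{-4n}$, and the union bound finishes the proof.
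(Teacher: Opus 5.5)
Your final aggregated argument (one union bound over witnesses, then exposing the matchings one node at a time) is the same strategy as the paper's proof. Your friend-list count of $n^{o(n)}$ is sharper than the paper's $\prod_i f^{b_i}\le n^n$.

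\textbf{The gap.} The step you offer as resolving the ``main obstacle'' is false as stated. Take a bad-$i$ node $u$ with $i\ge 1$; it is \emph{not} true that every non-useless neighbor of $u$ lies in $U_i^+(u)$. Friendship at level $i$ is defined only among clusters of $\mathcal{H}^i$, which partition $\mathcal{F}^i$. So $U_i^+(u)\subseteq V(\mathcal{F}^i)$, and edges from $u$ into $V(C^0\cup\cdots\cup C^{i-1})$ are not constrained by $|U_i^+(u)|\le s_{i+1}$.

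This breaks your per-level route. To use the sequential bound on $V(C^i)$ alone, you would first have to expose all edges at $V(C^{<i})$, and when the lower levels are large this may already match all of $V(C^i)$. It also breaks your one-exposure argument if bad nodes are processed in arbitrary order. A bad-$j$ node $u$ can then be matched to a still-unmatched bad-$j'$ node with $j'<j$ that lies outside $U_j^+(u)$, and that step is not bounded by $s^*/\Omega(n)$.

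A similar issue arises at level $0$. To keep the witness count small you must record only the \emph{large} friends, as in \Cref{lem:bad-0_upperbound}. Recording full level-$0$ friend lists could cost about $\binom{f}{s_1}^{f}=n^{\omega(n)}$ witnesses, because $\mathcal{F}$-clusters can be tiny. But once only large friends are recorded, only partners lying in large clusters are forced into $U_*^+(u)$. So the small-cluster nodes $V_S$ must be exposed in the first stage along with $V_U$; you expose only $V_U$.

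\textbf{The fix.} The paper does this implicitly: it matches $V(C^0)$ first and then handles $V(C^1),\ldots,V(C^{l-1})$ ``sequentially for $i=1,2,\ldots$''.
\begin{enumerate}
\item Expose the edges at $V_U\cup V_S$ first.
\item Process the bad nodes in increasing order of level.
\end{enumerate}
When an unmatched bad-$j$ node $u$ is picked, every node of $V_U\cup V_S\cup V(C^{<j})$ is already matched. So its partner $u'$ lies in $V(\mathcal{F}^j)\setminus V_U$, and the edge $(u,u')$ forces $H^j(u)\sim H^j(u')$, i.e.\ $u'\in U_j^+(u)$. For $j=0$ it forces $u'\in U_*^+(u)$.

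With this ordering and large-friend recording at level $0$, your count of $n^{(1+o(1))n}$ against probability about $n^{-4n}$ goes through. Constraining level $0$ inside the same exposure also lets you skip the paper's separate case $b_0\ge 0.24n$. The paper instead treats level-$0$ edges as unconstrained and invokes \Cref{lem:bad-0_upperbound}.
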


\begin{proof}
Fix any $G=(V,E)$ generated by random graph $G'$ and any useless subset $V_U$ of size $|V_U|\leq n/200$. Note that $G$ is a regular graph with degree $d=10$.
Recall that $s=2^{(\log n)^{\beta}}$ with $\beta=\frac{\log 5-0.5\varepsilon}{\log 5+1-\varepsilon}$ and for each $i\in \{1,2,\ldots, l\}$, $s_i=s^{(2-\lambda)^i}\leq s^*$ where $2-\lambda = 2^{\eta}=2^{1-\varepsilon/2}$ and $l$ is defined by \Cref{eq:def of l}.
Consider any valid partition $\mathcal{F}$ of $V$ with size $|\mathcal{F}|=f=O(\frac{n}{2^{(\log n)^\beta}})$. Fix one proper partition result $\mathcal{A}=(C,\mathcal{H})$ with $\sum_{i=0}^{l-1} b^i  \geq 0.96n$. Suppose bad-0 nodes have $b_0 \geq 0.24n$, by \Cref{lem:bad-0_upperbound}, it occurs with probability $o(1)$. Now we assume the number of bad-0 nodes $b_0\leq 0.24n$.
    \paragraph{Count the valid tuples.}
    Given valid partition $\mathcal{F}$, useless subset $V_U\subseteq V$ 
    and proper partition result $\mathcal{A}=(C,\mathcal{H})$, we say the partial friendship $\sim_C$ on $C$ realizes result $\mathcal{A}$ if 
    \begin{itemize}
        \item each node $u\in V(C^0)$ has its large friendly node set of size $|U^+_{*}(u)|\leq s_1=s^{2-\lambda}$.
        \item each node $u\in V(C^i)$ with $i\in \{1,2,\ldots l-1\}$ has its friendly node set of size $|U^{+}_i(u)|\leq s_{i+1}$.
    \end{itemize}
    we now count the total number of possible cases.
    \begin{itemize}
        \item For each node $u\in V(C^0)$, there are at most $\binom{f}{200s^{1-\lambda}}$ choices of $U^+_*(u)$ such that $|U^+_*(u)|\leq s^{2-\lambda}$ by using the same argument as \Cref{pf:bad-0_upperbound}.
        \item For each node $u\in V(C^{i})$ with $i\in \{1,\ldots l-1\}$, there are at most $\binom{f}{s_i}$ choices of $U^+_i(u)$ such that $|U^+_i(u)|\leq s_{i+1}$. Since $\mathcal{H}^i$ is valid, each  cluster $H\in \mathcal{H}^i$ has $|V(H)|\geq s_i$ and $H^i(u)$ has at most $s_{i+1}/s_i \leq s_i$ friends in $\mathcal{H}^i$. Additionally, there are at most $b_i/s_i$ bad-i clusters in $C^i$. 
    \end{itemize}
    Therefore the total number of possible partial friendships $\sim_C$ which realize result $\mathcal{A}$ is at most
    \begin{equation}\label{eq:partial fri A}
    \binom{f}{200s^{1-\lambda}}^{f} \cdot 
    \prod_{i=1}^{l-1}\binom{f}{s_i}^{b_i/s_i} 
    \leq 
    n^{o(n)}\cdot \prod_{i=1}^{l-1} (f)^{b_i} 
    \leq n^{o(n)} \cdot n^{n} = n^{(1+o(1))n}.
    \end{equation}
    For convenience, we say a tuple $(\mathcal{F},V_U,\mathcal{A},\sim_{C})$ is valid if $\mathcal{F}$ is a valid partition of $V$, useless subset $V_U \subseteq V$ has size $|V_U|\leq n/200$, $\mathcal{A}=(C,\mathcal{H})$ is a proper partition result with $\sum_{i=0}^{l-1}b_i \geq 0.96n$ and $b_0\leq 0.24n$ and $\sim_\mathcal{C}$ is a partial friendship on $C$ which realizes result $\mathcal{A}$.

    Since there are at most $n^n$ choices of $\mathcal{F}$, at most $2^n\leq n^{o(n)}$ choices of $V_U$ and at most $((f)^{f})^l\leq n^{f\cdot l}\le n^{o(n)}$ choices of proper partition result $\mathcal{A}$ with $\sum_{i=0}^{l-1}b_i \geq 0.96n$ and $b_0\leq 0.24n$. Combined with Inequality \ref{eq:partial fri A}, the total number of valid tuples $(\mathcal{F},V_U, \mathcal{A},\sim_{C})$ is at most $n^n\cdot n^{o(n)}\cdot n^{o(n)}\cdot n^{(1+o(1))n}=n^{(2+o(1))n}$.

    \paragraph{Compute the probability of each valid tuple.}

    Recall that $G'=(V,\bigcup_{i=1}^{d}M_i)$, where $M_i$ denotes $i$-th perfect matching sampled independently and uniformly. For each perfect matching $M_i$, we view it as constructed in four steps $M_i=M^{(1)}_i\cup M^{(2)}_i\cup M^{(3)}_i\cup M^{(4)}_i$.
    \begin{enumerate}
    \item In the first step, we sample the matching for each node $u\in V_S$ in small clusters and each useless node $u\in V_U$, denoted by $M^{(1)}_i$. Since $|V_S|\leq n/200$ and $|V_U|\leq n/200$, at most $0.02n$ nodes get matched. Note that $M^{(1)}$ realized $\sim_C$ with probability at most $1$.
    \item In the second step, we sample the matching for unmatched nodes in $V(C^0)$, denoted by $M_i^{(2)}$. By assumption $|C^0|=b_0\leq 0.24n$, at most $0.48n$ nodes get sampled. Note that $M^{(2)}_i$ realizes $\sim_C$ with probability at most $1$.
    \item In the third step, we sample the matching for unmatched nodes in $V(C^i)$ with $i=1,2,\ldots l-1$ sequentially, denoted by $M_i^{(3)}$.
    
    Now we compute the probability that $M_i^{(3)}$ realizes $\sim_C$.
    We pick an unmatched node $u\in V(C^i)$, sample an unmatched node $u'\in U^+_i(u)$ ($|U^+_i(u)|\leq s^*$) uniformly at random and add edge $(u,u')$ into $M^{(3)}_i$. 
    Since there are at most $0.02n+0.48n=0.5n$ nodes matched in the first and second step, there exist at least $0.5n$ nodes in $V$ unmatched and at least $0.46n$ nodes in $\bigcup_{i=1}^{l-1} V(C^i)$ unmatched. Therefore, $M_i^{(3)}$ realizes the partial friendship $\sim_C$ with probability at most,
    \[
    \frac{s^{*}}{0.5n}
    \cdot\frac{s^{*}}{0.5n-2}
    \cdot\frac{s^{*}}{0.5n-4}
    \cdot
    \ldots
    \cdot\frac{s^{*}}{0.5n-2\cdot(0.23n-1)}
    \leq \left(\frac{s^{*}}{0.04n}\right)^{0.23n}
    \leq \left(n^{-0.99}\right)^{0.23n}.
    \]
    since $s^*\leq 0.04n^{0.01}$ by Inequality \ref{eq:s^* ub}.
    \item In the last step, we sample all remaining edges arbitrarily, denoted by $M_i^{(4)}$.
    \end{enumerate}
    Since each matching is sampled independently and $d=10$, $\bigcup_{i=1}^{d} M_i^{(3)}$ realizes the partial friendship $\sim_C$
    with probability at most $(n^{-0.99})^{2.3n}\leq n^{-2.27n}$.
    \paragraph{Overall probability.}
    There are at most $n^{(2+o(1))n}$ valid tuples $(\mathcal{F},\mathcal{A},\sim_{C})$ and each occurs with probability at most $n^{-2.27n}$. Therefore the total probability is at most $n^{-0.27n+o(n)}=o(1)$. 
\end{proof}

\section{Proof of \Cref{thm: main} for Convex Combinations}
\label{sec: convex combination}

In this section, we extend the lower bound proof to convex combinations of contraction-based flow sparsifiers. Recall that a convex combination of contraction-based flow sparsifiers is a distribution $\mu$ on the set of all contraction-based flow sparsifiers, and graph $H_\mu$ is obtained by gluing the copies of the contraction-based flow sparsifiers in the support of $\mu$ at all terminals, where the edge capacity for the copy of each sparsifier $H$ is scaled by $\Pr_{\mu}[H]$. 

For each sparsifier $H$ in the support of $\mu$, we construct a demand $\mathcal{D}_H$ as in \Cref{subsec:demand}. Define demand $\mathcal{D}_{\mu}=\sum_H \Pr_\mu[H] \cdot \mathcal{D}_H$, we will show that the congestion of $\mathcal{D}_{\mu}$ in $G$ is large, while the congestion of $\mathcal{D}_{\mu}$ in $H_{\mu}$ is small. 

For a flow $F$ in $G$, we define the average congestion $\congestion_G^{avg}(F)$ as follows:
\[
\congestion_G^{avg}(F):=\frac{\sum_{P}f_P \cdot {\rm length}(P)}{\sum_{e\in E} c(e)}
=\frac{\sum _{e\in E}\sum_{P\in \mathcal{P}^G:e\in P}f_P}{\sum_{e\in E} c(e)}.
\]

Similarly, given a demand $\mathcal{D}$ we define the average congestion of $\mathcal{D}$ in $G$, denoted by $\congestion_G^{avg}(\mathcal{D})$, as the minimum average congestion of any flow that routes $\mathcal{D}$ in $G$.

By definition, the average congestion of a demand in $G$ is always at most its congestion in $G$, namely $\congestion_G^{avg}(\mathcal{D})\le \congestion_G(\mathcal{D})$. In \Cref{lem:dist-demand-lb}, for each $\mathcal{D}_H$, what we actually proved is that $\congestion_G^{avg}(\mathcal{D_H})=\Omega\left((\log n)^{1-\alpha}\right)$. This means that, for any flow that routes $\mathcal{D}_H$ in $G$, the sum of flow value on all edges in $G$ is at least $\Omega\left((\log n)^{1-\alpha}\right) \cdot \sum_{e \in E} c(e)$. Since $\mathcal{D}_{\mu}$ is a convex combination of $\mathcal{D}_H$, for any flow that routes $\mathcal{D}_{\mu}$, the sum of the flow on edges are also at least $\Omega\left((\log n)^{1-\alpha}\right) \cdot \sum_{e \in E} c(e)$, which means that $\congestion_G(\mathcal{D}_{\mu}) \ge \congestion_G^{avg}(\mathcal{D}_{\mu})=\Omega\left((\log n)^{1-\alpha}\right)$.

On the other hand, by \Cref{{lem:totalcost-lb}}, for each $H$ and demand $\mathcal{D}_H$, there is a flow $F_H$ that routes $\mathcal{D}_H$ such that the congestion of $F_H$ in $H$ is at most 3. Construct a flow $F_{\mu}$ in graph $H_{\mu}$ as follows, for each $H$, we route $\Pr_{\mu}[H]\cdot F_H$ in the copy of $H$. Since $H_{\mu}$ is obtained by gluing the copies together, the flows do not affect each other, meaning that the congestion of $F_{\mu}$ in $H_{\mu}$ also is at most $3$. 

In summary, any flow that routes $\mathcal{D}_{\mu}$ in $G$ has congestion at least $\Omega\left((\log n)^{1-\alpha}\right)$, and there exists a flow that routes $\mathcal{D}_{\mu}$ in $H_{\mu}$ with congestion at most $3$. Therefore, the quality of $H_{\mu}$ is $\Omega\left((\log n)^{1-\alpha}\right)$.

\bibliographystyle{alpha}
\bibliography{REF}

\end{document}